\newtcolorbox{topbox}[2][]{%
  enhanced,
  breakable,
  colback=white,
  colframe=black,
  fonttitle=\bfseries,
  boxed title style={colback=white, colframe=black},
  #1
}
\newcommand{\sone}{(\mathsf{S1})}
\newcommand{\stwo}{(\mathsf{S2})}
\newcommand{\sthree}{(\mathsf{S3})}
\newcommand{\sfour}{(\mathsf{S4})}
\newcommand{\mgx}{\mathsf{MG}(X, k)}
\newcommand{\mgxp}{\mathsf{MG}(X^{\prime}, k)}
\newcommand{\ssx}{\mathsf{SS}}
\newcommand{\pr}{^{\prime}}
\newcommand{\mg}{\mathsf{MisraGries}}
\newcommand{\spacesaving}{\mathsf{SpaceSaving}}
\newcommand{\cs}{\mathsf{CountSketch}}
\newcommand{\cms}{\mathsf{CountMinSketch}}
\newcommand{\laplacedist}{\mathtt{Laplace}}
\newcommand{\prob}{\mathds{P}}
\newcommand{\diffpriv}{\mathsf{DP}}
\newtheorem{theorem}{Theorem}
\newtheorem{lemma}{Lemma}
\newtheorem{definition}{Definition}
\newtheorem{corollary}{Corollary}
\setlist{nosep}
\title{An Iconic Heavy Hitter Algorithm Made Private}
\author{Rayne Holland\footnote{raynehollandphd@gmail.com}}
\date{}
\begin{document}
\maketitle

\begin{abstract}
    Identifying heavy hitters in data streams is a fundamental problem with widespread applications in modern analytics systems.
    These streams are often derived from sensitive user activity, making adequate privacy guarantees necessary.
    While recent work has adapted the classical heavy hitter algorithm Misra-Gries to satisfy differential privacy in the streaming model, the privatization of other heavy hitter algorithms with \emph{better} empirical utility is absent.  

    Under this observation, we present the first differentially private variant of the SpaceSaving algorithm, which, in the non-private setting, is regarded as the state-of-the-art in practice.
    Our construction post-processes a non-private SpaceSaving summary by injecting asymptotically optimal noise and applying a carefully calibrated selection rule that suppresses unstable labels.
    This yields strong privacy guarantees while preserving the empirical advantages of SpaceSaving.

    Second, we introduce a generic method for extracting heavy hitters from any differentially private frequency oracle in the data stream model.
    The method requires only $\mathcal{O}(k)$ additional memory, where $k$ is the number of heavy items, and provides a mechanism for safely releasing item identities from noisy frequency estimates.
    This yields an efficient, plug-and-play approach for private heavy hitter recovery from linear sketches.

    Finally, we conduct an experimental evaluation on synthetic and real-world datasets.
    Across a wide range of privacy parameters and space budgets, our method provides superior utility to the existing differentially private Misra–Gries algorithm.
    Our results demonstrate that the empirical superiority of SpaceSaving survives privatization and that efficient, practical heavy hitter identification is achievable under strong differential privacy guarantees.
\end{abstract}


\section{Introduction}
\label{sec:intro}

Identifying the most frequent elements of a dataset (commonly referred to as ``heavy hitters'') is a cornerstone problem in data analysis~\cite{cormode2008finding}.
In modern systems, this task commonly arises on massive data streams, where items arrive continuously and must be processed online under strict memory and time constraints.
Prominent solutions include counter-based methods~\cite{misra1982finding,mitzenmacher2012hierarchical} and linear sketches~\cite{charikar2002finding,cormode2005improved}.
However, data streams are often derived from sensitive user activity~\cite{wang2021continuous}, and the classic techniques do not provide adequate privacy.

Recent work seeks to retrofit these algorithms with \textit{differential privacy} ($\diffpriv$) guarantees, aiming for minimal overhead and limited impact on utility~\cite{biswas2024differentially,lebeda2023better,pagh2022improved,zhao2022differentially}.
At a high level, $\diffpriv$ ensures that the outcome of an analysis does not change significantly under small changes to the underlying dataset, thereby limiting what can be inferred about any single update.
This line of inquiry commonly adopts the \emph{single-observation} streaming model: the stream is processed in one pass and a single $\diffpriv$ solution is released.
The setting reflects common analytics pipelines that aggregate large volumes of data and publish sanitized summaries under a fixed privacy budget.

From an algorithmic perspective, enforcing $\diffpriv$ for heavy hitters exposes two distinct challenges.
First, sufficient noise must be added to protect individual updates without overwhelming the frequency signal of genuinely heavy items.
Second, the algorithm must report item {identities}, even though the set of reported items may change between similar streams.
Consequently, identity disclosure is a source of privacy leakage.
Prior work has addressed these challenges along two complementary lines.
One line focuses on the counter-based method \textit{Misra-Gries} ($\mg$), which explicitly maintains candidate frequent items and their counts \cite{biswas2024differentially, chan2012differentially, lebeda2023better}.
The other centers on  utilizing $\diffpriv$ linear sketches as frequency oracles within a (labor intensive) heavy hitter search algorithm \cite{bassily2017practical}.

Building on initial work by Chen et al. \cite{chan2012differentially}, Lebeda and Tetek adapted the $\mg$ algorithm to enable $\diffpriv$ releases with asymptotically optimal noise error~\cite{lebeda2023better}.
This result is notable as the authors observe that ``in practice, non-private approximate histograms are often computed using the Misra-Gries sketch''.
Nevertheless, empirical studies consistently show that the counter-based SpaceSaving algorithm ($\spacesaving$) \cite{mitzenmacher2012hierarchical} substantially outperforms $\mg$ (and linear sketches) in utility \cite{cormode2008finding}.
This gap raises the natural question as to whether $\spacesaving$ can be adapted to satisfy $\diffpriv$ in the single-observation streaming model.
In response, we provide the first method to release $\spacesaving$ under $\diffpriv$ and demonstrate that it retains its empirical dominance in practice. 

In parallel, {linear sketches} have been adapted to admit $\diffpriv$ {frequency estimates} with strong utility~\cite{pagh2022improved,zhao2022differentially}.
However, as idenity disclosure is a form of privacy leakage, uncritically converting these $\diffpriv$ estimates into released item {identities} is not, in itself, differentially private.
A naive brute-force solution is to query the entire input domain, a prohibitive computational effort in most settings.
To address this, Bassily et al. propose a more efficient heavy-hitter search procedure that utilizes a linear sketch as a $\diffpriv$ frequency oracle in post-processing \cite{bassily2017practical}.
Despite this improvement, the approach still incurs substantial computational and memory overheads that exceed the budgets typically assumed in the data stream model.
Building on insights from our $\diffpriv$ release of $\spacesaving$, we introduce a generic method for extracting heavy hitters from any $\diffpriv$ frequency oracle using only an additional $\mathcal{O}(k)$ memory, where $k$ is the number of heavy items.
In contrast to prior work, this $\mathcal{O}(k)$ overhead is compatible with standard streaming constraints.

\subsection{Our contributions}
In providing novel methods for privatizing existing algorithms for identifying heavy hitters, our contributions can be summarized as follows.
\begin{itemize}
\item 
We present the first $\diffpriv$ variant of the $\spacesaving$ algorithm in the single-observation streaming model, achieving asymptotically optimal noise while preserving the empirical performance advantages of non-private $\spacesaving$.
\item
We introduce a general framework for releasing heavy hitters from any $\diffpriv$ frequency oracle in the single-observation model, with an additional memory overhead of $\mathcal{O}(k)$, where $k$ is the maximum number of heavy items. 
\item 
We conduct an experimental evaluation on synthetic and real-world datasets, comparing our methods against state-of-the-art $\diffpriv$ $\mg$. 
Across a wide range of privacy parameters and space budgets, our $\diffpriv$ $\spacesaving$ consistently achieves higher utility while matching $\diffpriv$ $\mg$ in memory usage and throughput.
\end{itemize}

\subsection{Algorithm overview}

Counter-based streaming algorithms explicitly maintain sets of candidate frequent item \emph{identities}.
As these sets may differ between similar streams, the disclosure of item identities itself becomes a source of privacy leakage.
Thus, a central challenge in privatizing a counter-based algorithm is designing a post-processing mechanism that releases frequent item sets without exposing the privacy leakage.

Neighboring streams $X, X\pr$ are defined as streams that differ in one update ($X = X\pr \cup \{x\}$).
At the heart of our approach is a strategy of characterizing how tracked sets change across {neighboring} streams.
This allows us to distinguish \emph{stable} labels from \emph{unstable} ones and to release only the former in a $\diffpriv$-safe manner.
Our characterization states that unstable labels always observe a frequency estimate close to the minimum estimate across tracked items.
Thus, unstable labels can be suppressed with a sufficiently high threshold.

Operationally, our $\diffpriv$ $\spacesaving$  runs a non-private $\spacesaving$ algorithm with a slightly expanded capacity. 
At release time, we add independent Laplace noise to the counters and apply a single threshold that both (i) suppresses labels that can differ across neighboring inputs and (ii) retains true heavy hitters with high probability.
The expanded counter capacity allows us to reduce the threshold and, therefore, target unstable labels without eliminating true stable heavy hitters.

In addition, we extend these ideas to any frequency oracle with provable error guarantees.
We show that the same threshold selection principle can be adapted to any oracle with provable error guarantees (what we term the error envelope).
This enables a plug-and-play solution for $\diffpriv$ linear sketches.

\subsection{Paper Outline}

Section~\ref{sec:prelims} provides the relevant preliminaries and problem definition.
Section~\ref{sec:priors} details counter-based algorithms and linear sketches (an expanded related work is available in Appendix~\ref{app:related_work}).
Section~\ref{sec:soss} introduces the $\diffpriv$ $\spacesaving$ algorithm.
Section~\ref{sec:lin_sketch} provides a more generic method for $\diffpriv$ heavy hitter output that operates on any $\diffpriv$ frequency oracle.
Section~\ref{sec:experiments} provides an empirical evaluation.

\section{Preliminaries}
\label{sec:prelims}

\subsection{Privacy}

$\diffpriv$ ensures that the inclusion or exclusion of any individual in a dataset has a minimal and bounded impact on the output of a mechanism.
Two streams ${X} = \{x_1, \ldots, x_T\}$ and ${X}^{\prime}= \{x_1^{\prime}, \ldots, x_T^{\prime}\}$ are \textit{neighboring}, 
denoted ${X} \sim {X}^{\prime}$,
if they differ by the addition or removal of a single update.
Formally, ${X} \sim {X}^{\prime}$ if $X = X\pr \cup \{x\}$.
The following definition of $\diffpriv$ is adapted from Dwork and Roth~\cite{dwork2014algorithmic}.
\begin{definition}[\((\varepsilon, \delta)\)-Differential Privacy]
Let ${X}, {X}^{\prime} \in \mathcal{U}^*$ be neighboring input streams (i.e., differing in at most one element). A randomized mechanism $\mathcal{M}$ satisfies \((\varepsilon, \delta)\)-differential privacy under a given observation model if for all measurable sets of outputs \( Z \subseteq \mathtt{support}(\mathcal{M}) \), it holds that:
\[
\prob[\mathcal{M}({X}) \in Z] \leq e^{\varepsilon} \cdot \prob[\mathcal{M}({X}^{\prime}) \in Z] + \delta.
\]
\label{def:diff_privacy}
\end{definition}
\noindent
In the single-observation model, the mechanism processes the stream once and releases a single $\diffpriv$ output after the final stream update has been observed.

A common method for outputting $\diffpriv$ functions is the following
\begin{lemma}[Laplace Mechanism for Vector Valued Functions]
\label{lem:laplace_vector}
Let $f: \mathcal{X} \to \mathbb{R}^d$ be a function with $\ell_1$-sensitivity at most $s$, that is, for any pair of neighboring inputs $X, X' \in \mathcal{X}$,
\[
\|f(X) - f(X')\|_1 \le s.
\]
Let $\varepsilon > 0$ and define the randomized mechanism
\[
\mathcal{M}(X) := f(X) + (Z_1, \dots, Z_d),
\]
where each $Z_i \sim \laplacedist(s/\varepsilon)$ independently. Then $\mathcal{M}$ satisfies $(\varepsilon, 0)$-$\diffpriv$.
\end{lemma}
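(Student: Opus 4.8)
The plan is to follow the standard density-ratio argument for additive-noise mechanisms. First I would write down the probability density of the output $\mathcal{M}(X)$ at an arbitrary point $y = (y_1, \dots, y_d) \in \mathbb{R}^d$. Since the coordinates of the noise vector are independent $\laplacedist(1/\varepsilon)$ variables, the density factorizes as $p_X(y) = \prod_{i=1}^d \tfrac{\varepsilon}{2} \exp(-\varepsilon |y_i - f(X)_i|)$, and crucially this density is strictly positive on all of $\mathbb{R}^d$, so the ratios appearing below are always well defined.

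Next I would bound the pointwise likelihood ratio $p_X(y)/p_{X'}(y)$ for any pair of neighboring inputs $X, X'$. The normalizing constants $(\varepsilon/2)^d$ cancel, leaving $\exp\bigl(\varepsilon \sum_{i=1}^d (|y_i - f(X')_i| - |y_i - f(X)_i|)\bigr)$. Applying the reverse triangle inequality coordinate-wise gives $|y_i - f(X')_i| - |y_i - f(X)_i| \le |f(X)_i - f(X')_i|$, so the exponent is at most $\varepsilon \sum_{i=1}^d |f(X)_i - f(X')_i| = \varepsilon \|f(X) - f(X')\|_1 \le \varepsilon$ by the sensitivity hypothesis. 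Hence $p_X(y) \le e^\varepsilon\, p_{X'}(y)$ for every $y$.

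Finally I would lift this pointwise inequality to sets by integrating over an arbitrary measurable $Z \subseteq \mathbb{R}^d$: $\prob[\mathcal{M}(X) \in Z] = \int_Z p_X(y)\,dy \le e^\varepsilon \int_Z p_{X'}(y)\,dy = e^\varepsilon\,\prob[\mathcal{M}(X') \in Z]$, which is exactly the defining inequality of $(\varepsilon, 0)$-$\diffpriv$ (the additive $\delta$ term is zero).

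There is no serious obstacle here; the only points needing a little care are (i) confirming the noise density is positive everywhere so the ratio argument never encounters $0/0$, and (ii) applying the $\ell_1$-sensitivity bound to the vector $f(X) - f(X')$ as a whole rather than coordinate-by-coordinate, since it is precisely the collapse of $\sum_i |f(X)_i - f(X')_i|$ into a single quantity bounded by $1$ that makes the final factor $e^\varepsilon$ independent of the dimension $d$.
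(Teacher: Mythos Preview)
Your proposal is correct and is exactly the standard density-ratio argument for the Laplace mechanism. The paper does not give its own proof of this lemma; it is stated as a background result in the preliminaries, so there is nothing to compare against.
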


\subsection{Heavy Hitter Detection}

We consider a data-stream model in which a sequence of items \( X = \{x_1,x_2,\dots,x_t \}\) from a universe \(\mathcal U\) arrive in one pass. 
For each item \(y\in\mathcal U\), its true frequency is
\[
f_y = \sum_{x_j \in X} \mathrm{1}\{x_j = y\},
\]
The frequency vector is \(\mathbf f = (f_y)_{y\in\mathcal U}\).
Two canonical problems are:
\begin{enumerate}
    \item \emph{Frequency estimation:} for any query item \(y\), return \(\widehat f_y\) satisfying \(\widehat f_y \le f_y \le \widehat f_y + \eta t\) for some parameter \(\eta>0\).
    \item \emph{Heavy-hitter detection:} given thresholds \(k\) and \(\eta\), return a set \(H\subseteq \mathcal U\) such that every item $y$ with \(f_y \ge  n/k\) belongs to \(H\) and no item $z$ with \(f_z \le (1/k - \eta)n\) is included.
\end{enumerate}

\section{Prior Methods}
\label{sec:priors}
We now introduce background on counter-based methods and linear sketches.
A broader discussion of prior work is provided in Appendix~\ref{app:related_work}.
\subsection{Counter Methods}

\begin{algorithm}[t]
  \SetAlgoLined
  \DontPrintSemicolon
  \SetKwProg{myproc}{define}{}{}
    \myproc{$\mathsf{MG}(X, k)$}{
        $\mathcal{T}_0, C_0 \gets \varnothing$\;
        \For{$x_t \in X$}{
            \If{$x_t \in \mathcal{T}_{t-1}$}{$C_t[x_t] \gets C_{t-1}[x_t] +1$}
            \ElseIf{$|\mathcal{T}_{t-1}| < k$}{
                $\mathcal{T}_t\gets \mathcal{T}_{t-1} \cup \{x_t\}$\;
                $C_t[x_t] \gets 1$\;
            }
            \Else{
                \For{$y \in \mathcal{T}_{t-1}$}{
                    $C_t[y] \gets C_{t-1}[y] -1$\;
                }
                $S \gets \{ y \mid C_t[y] = 0\}$\;
                $\mathcal{T}_t \gets \mathcal{T}_{t-1} \setminus S$\;
            }
        }
        \KwRet $\mathcal{T}_T, C_T$
    }
  \caption{$\mg$}
  \label{alg:MG}
\end{algorithm}

\begin{algorithm}[t]
  \SetAlgoLined
  \DontPrintSemicolon
  \SetKwProg{myproc}{define}{}{}
  \SetKwInOut{require}{let}
  \require{$\tau_t[a] = \max \{\, t' \mid x_{t'} = a,\; t' \le t \,\}$ }
    \myproc{$\mathsf{SS}(X, k)$}{
        $\mathcal{T}_0, C_0 \gets \varnothing$\;
        \For{$x_t \in X$}{
            \If{$x_t \in \mathcal{T}$}{$C_t[x_t] \gets C_{t-1}[x_t] +1$}
            \ElseIf{$|\mathcal{T}_{t-1}| < k$}{
                $\mathcal{T}_t\gets \mathcal{T}_{t-1} \cup \{x_t\}$\;
                $C_t[x_t] \gets 1$
            }
            \Else{
                $S \gets \arg\min_{a \in \mathcal{T}_{t-1}} C_{t-1}[a]$\;
                $y \gets \arg\max_{a \in S} \tau_{t-1}[a]$\;
                $C_t[x_t] \gets C_{t-1}[y] + 1; C_t[y] \gets 0 $\;
                $\mathcal{T}_t \gets (\mathcal{T}_{t-1} \setminus \{y\}) \cup \{x_t\}$
            }
        }
        \KwRet $\mathcal{T}_T, C_T$
    }
  \caption{$\spacesaving$}
  \label{alg:SS}
\end{algorithm}
Counter-based methods maintain a small table of  \(\mathcal{O}(k)\)  (item,counter) pairs. 
On each stream update, these methods adjust  tracked items and corresponding counters so that one obtains error guarantees on both frequency estimation and heavy-hitter recovery. 
These methods are attractive for their simplicity and deterministic error bounds.

\subsubsection{Misra Gries}

The $\mg$ algorithm maintains a set $\mathcal{T}$ and corresponding counters $C$, both of size at most $k-1$. 
For each incoming item, if it is already present in the table $\mathcal{T}$, its counter is incremented. 
If it is not present and a counter has value zero, that counter is assigned to the new item with count set to 1. 
If all $k-1$ counters are occupied by distinct items, all counters are decremented by 1. 
For a stream of length $T$, a simple grouping argument shows that any item appearing more than $T/k$ times in the stream is guaranteed to be stored in the table when the algorithm terminates.
The process is formalized in Algorithm~\ref{alg:MG}.

To privatize a counter-based algorithm it is not sufficient to inject noise into the counts of the tracked items.
This is because the identities of the tracked items can differ on neighboring inputs, allowing an adversary to easily distinguish the streams based on the output of the algorithm.
This notion is formalized in the following result by Lebeda and Tetek \cite{lebeda2023better}, which describes the observed outcomes when a $\mg$ algorithm processes neighboring input streams.

\begin{lemma}[\cite{lebeda2023better}]
    Let $X = X^{\prime} \cup \{x\}$.
    Let $(\mathcal{T}, {C})  \gets \mathtt{MG}(X, k)$ and $(\mathcal{T}^{\prime}, {C}^{\prime}) \gets \mathtt{MG}(X^{\prime}, k)$ denote the outputs of the Misra Gries algorithm on inputs $X$ and $X^{\prime}$.
    Then $|\mathcal{T} \cap \mathcal{T}^{\prime}| \geq k-2$; for all $y\notin \mathcal{T} \cap \mathcal{T}^{\prime}$, $C[y]\leq 1$ and $C^{\prime}[y] \leq 1$; and exactly one of the following is true:
    \begin{enumerate}
        \item[(1)] $\exists i \in \mathcal{T}$, such that $C[i] = C^{\prime}[i]+1$, and $\forall j \neq i: C[j] = C^{\prime}[j]$.
        \item[(2)] $\forall i \in \mathcal{T}^{\prime}: C[i] = C^{\prime}[i]-1$, and $C^{\prime}[j] = 0 $ for $j \notin \mathcal{T}^{\prime}$
    \end{enumerate}
    \label{lem:1p_MG}
\end{lemma}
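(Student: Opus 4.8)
The plan is to couple two executions of $\mg$, one on $X$ and one on $X'$, and maintain a structural invariant describing how their tables and counters may differ. Since $X = X' \cup \{x\}$, write $X = x'_1 \cdots x'_{j-1}\, x\, x'_j \cdots x'_{T}$ for the position $j$ at which the extra copy of $x$ sits. On the common prefix $x'_1,\dots,x'_{j-1}$ the two runs execute identical operations, so their states agree exactly at the moment just before the $X$-run consumes $x$. I would first analyse this single ``extra'' step in isolation, obtaining a trichotomy for the pair of states it produces: (i) $x$ is already in the table, so one counter is incremented; (ii) $x$ is absent but a slot is free, so $x$ enters with count $1$ (again only the coordinate $C[x]$ changes, from $0$ to $1$); (iii) the table is full, so every counter drops by $1$ and items reaching $0$ are evicted. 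I call the configurations arising from (i)--(ii) \emph{mode~A}: the two tables differ by at most one key, and there is a unique key whose $X$-count is exactly one larger than its $X'$-count while all other counts coincide. Case (iii) gives \emph{mode~B}: the $X$-table is contained in the $X'$-table, and each $X'$-counter equals the corresponding $X$-counter plus $1$, so the $X$-table is exactly the set of keys whose $X'$-count is at least $2$.

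Next I would induct over the remaining common suffix $x'_j,\dots,x'_T$, the claim being that feeding the \emph{same} update to a pair of states in mode~A or mode~B again yields a pair in mode~A or mode~B. The routine sub-cases are those in which both runs take the same branch of $\mg$ (both increment the same key, both insert into a free slot, or both decrement-all): in each one checks directly that the ``one key larger by $1$'' relation of mode~A, or the ``uniformly larger by $1$ on the support'' relation of mode~B, is preserved, using that in mode~B the table-containment forces the two occupancies to be compatible. The main obstacle is the sub-case in which the two runs take \emph{different} branches because their occupancies differ --- typically the $X'$-run is full and decrements every counter while the $X$-run still has a free slot and merely inserts the arriving key. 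Here one must show the step turns a mode-B pair into a mode-A pair (the freshly inserted key becoming the unique ``$+1$'' key), and that no step can ever leave the union of the two modes; this requires a careful count-level prediction of exactly which keys each run evicts, and it is the place where the invariant must be stated \emph{just} tightly enough: strong enough to be reread off as (1) or (2), yet loose enough to survive every transition.

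Finally, at termination I would read the conclusions off whichever mode holds. The keys lying outside $\mathcal{T}\cap\mathcal{T}'$ are precisely the ``newly inserted at count $1$'' key of mode~A or the ``just evicted'' keys of mode~B, so on both streams their counters are at most $1$. Mode~A is exactly alternative (1) and mode~B is exactly alternative (2), and the two are mutually exclusive --- a mode-A configuration can also satisfy the uniform-shift relation of (2) only in the degenerate case $|\mathcal{T}'|\le 1$ --- which gives the ``exactly one'' clause. The overlap bound $|\mathcal{T}\cap\mathcal{T}'|\ge k-2$ is the subtlest of the three: it is immediate in mode~A, where the tables disagree on at most one key, and in mode~B it must be extracted from the invariant by bounding how many keys of the larger table can survive at count $1$ at eviction time --- the point at which the capacity of the $\mg$ table and the shape of a full table enter the argument.
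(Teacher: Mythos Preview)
The paper does not supply a proof of this lemma; it is quoted from Lebeda and T\v{e}tek and used only to motivate the analogous SpaceSaving analysis. Your coupled-execution-plus-invariant strategy is exactly the right template, and it is the same template the paper carries out in detail for its own SpaceSaving analogue (Lemma~\ref{lem:state_eos}, proved via Lemmas~\ref{lem:init_so_ss}--\ref{lem:sfour}): identify the states a pair of runs can occupy immediately after the extra update, then show by case analysis that every subsequent common update keeps the pair inside that state space. So there is nothing in this paper to compare against directly, but your plan matches the standard argument and the paper's own methodology.

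Two remarks on the plan. First, your worry about the branching sub-case is well placed, but the case is tamer than it looks: in mode~A, whenever the ``$+1$'' key $i^*$ is absent from $\mathcal{T}'$ one necessarily has $C[i^*]=1$ (since $C'[i^*]=0$), which forces $i^*$ to be evicted when the $X$-run decrements and is exactly what restores the containment $\mathcal{T}\subseteq\mathcal{T}'$ defining mode~B. Second, the clause $|\mathcal{T}\cap\mathcal{T}'|\ge k-2$ does not follow from modes~A/B and is in fact false as written: take $X'=(a_1,\dots,a_k)$ with all $a_i$ distinct and $X=X'\cup\{x\}$ for a fresh $x$; then $\mathcal{T}'$ is full with every count equal to~$1$, while the final decrement in the $X$-run empties $\mathcal{T}$, so the intersection is empty. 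Your instinct to flag this as the subtle clause is therefore correct --- it appears to be an imprecision in transcription, and the operative content for the downstream privacy argument is the dichotomy (1)/(2) together with the bound $\le 1$ on counters of isolated labels, both of which your mode~A/mode~B invariant delivers.
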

The lemma states that there can be at most 2 elements that are in $\mathcal{T}$, but not in $\mathcal{T}^{\prime}$, and vice versa.
These elements are referred to as \textit{isolated elements}, and if they appear in the final output, an adversary can always tell if an output was generated by $X$ or $X^{\prime}$.
Furthermore, the count of an isolated element must always be at most 1. 
Therefore, we can use a thresholding trick to suppress isolated elements from the output with high probability.
Once isolated elements are suppressed, and appropriate noise is added to item counts, tracked items with noisy counts above the threshold can be released.

\subsubsection{Space Saving}

The $\spacesaving$ algorithm has the same component data structures ($\mathcal{T}, C$) as $\mg$. 
When an item $x$ arrives, if it is already present in the table, its counter is incremented. 
If it is not present and a free counter exists, the item is inserted with count set to 1. 
In contrast to $\mg$,
if all $k$ counters are occupied, the item with the minimum count $c_{\min}$ is replaced by $i$, and its counter is set to $C[x] = c_{\min} + 1$. 
When multiple items observe the minimum count ties can be broken arbitrarily.
However, to enable privatization, we enforce the rule that the item that \textit{most recently} occurred is evicted. 
The process is formalized in Algorithm~\ref{alg:SS}.
This procedure guarantees that for every item, the estimated frequency satisfies the following bound.

\begin{lemma}[Additive Error of the $\spacesaving$]
    Let the $\spacesaving$ algorithm process a stream of length $T$ while maintaining  $k$ counters. 
    For each item $i$ with true frequency $f_i$ and estimated frequency 
    $\widehat f_i$, the following holds:
    \[
      f_i \le \widehat f_i \le f_i + \frac{T}{k}.
    \]
    \label{lem:ss_bound}
\end{lemma}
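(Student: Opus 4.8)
The plan is to establish the two inequalities separately. The lower bound $f_i \le \widehat f_i$ is the easier half: I would argue that a counter associated with item $i$ is never decremented, only incremented or (re)initialized when $i$ enters the table. Concretely, I would track the invariant that whenever $i$ is in $\mathcal{T}_t$ with counter $C_t[i]$, the value $C_t[i]$ is at least the number of occurrences of $i$ in $x_1,\dots,x_t$. This follows by induction on $t$: if $x_t = i$ and $i$ is already tracked, the count goes up by one along with the true frequency; if $x_t = i$ and $i$ is not tracked, it is inserted either with count $1$ (matching a fresh occurrence after a period of being untracked) or with count $c_{\min}+1$, which only overcounts; and if $x_t \ne i$, neither quantity changes for $i$. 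Since $i$ (if heavy) is in the table at termination, $\widehat f_i = C_T[i] \ge f_i$.

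For the upper bound, the key quantity is $c_{\min}$, the minimum counter value in the table at any point in time. I would prove by induction that at all times $t$ (once the table is full), $c_{\min}^{(t)} \le \lfloor t/k \rfloor \le T/k$: the sum of all $k$ counters is at most $t$ (each update increments the total count by exactly one once the table is full, and is at most $t$ in general), so the minimum is at most the average $t/k$. Then I would bound the \emph{over}count of item $i$. Each time $i$ is (re)inserted into the table after having been evicted, its counter is reset to $c_{\min}+1$ at that moment, which inflates the stored count by at most $c_{\min} \le T/k$ relative to the occurrences of $i$ accumulated so far. The crucial observation is that this inflation does not compound: when $i$ is evicted its counter is zeroed, so any prior overcount is discarded, and upon the next reinsertion the new overcount is again at most the \emph{current} $c_{\min}$, which is still $\le T/k$. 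Combining, at termination $\widehat f_i = C_T[i] \le f_i + c_{\min}^{(\text{last insertion of }i)} \le f_i + T/k$.

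I would organize this as: (i) state and prove the sum-of-counters invariant $\sum_{a \in \mathcal{T}_t} C_t[a] \le t$; (ii) deduce $c_{\min}^{(t)} \le t/k$; (iii) prove the lower bound by the monotonicity-of-$i$'s-counter argument above; (iv) prove the upper bound by analyzing the last insertion of $i$ into the table and using that eviction zeroes the counter so overcount never accumulates across insertion epochs. The main obstacle is step (iv): one must argue carefully that the overcount is governed only by the value of $c_{\min}$ at the \emph{most recent} insertion of $i$, not by the history of earlier insertions — this is where the ``zero on eviction'' rule of $\spacesaving$ (line $C_t[y] \gets 0$ in Algorithm~\ref{alg:SS}) is essential, and it is what distinguishes the clean $T/k$ bound from a naive bound that would grow with the number of insertion epochs. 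A minor subtlety worth a remark is the pre-fill phase (when $|\mathcal{T}_{t-1}| < k$): there $c_{\min}$ is effectively $0$ for the purpose of insertions, so the bound holds trivially, and the tie-breaking rule (evicting the most recently occurring minimum-count item) plays no role in the error bound itself — it is only needed later for privatization.
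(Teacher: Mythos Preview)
The paper states this lemma without proof, citing it as a known property of $\spacesaving$, so there is no paper proof to compare against. Your outline is the standard argument and is sound in structure, with one step that should be made explicit.

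When you assert that reinsertion with count $c_{\min}+1$ ``only overcounts,'' you are implicitly relying on the companion invariant that $c_{\min}^{(t)} \ge f_t(j)$ for every $j \notin \mathcal{T}_t$. Without it there is no a priori reason $c_{\min}$ dominates the occurrences of $i$ that accumulated before $i$ was last evicted, so the inductive step for the lower bound does not close. This second invariant is easy to carry along with your main one --- the evicted item $y$ had $C_{t-1}[y] = c_{\min}^{(t-1)} \ge f_{t-1}(y)$ by the first invariant, and $c_{\min}$ is non-decreasing in $\spacesaving$ --- but it should be stated and maintained jointly with ``$C_t[i] \ge f_t(i)$ for $i \in \mathcal{T}_t$.'' A minor inaccuracy: in the pre-fill phase an item inserted with count $1$ is necessarily making its \emph{first} appearance, not returning ``after a period of being untracked,'' since no evictions occur before the table is full. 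The upper-bound half is correct; you can sharpen the counter-sum claim to the equality $\sum_{a \in \mathcal{T}_t} C_t[a] = t$ once the table is full, which gives $c_{\min}^{(t)} \le t/k$ directly.
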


While $\spacesaving$ has the same absolute error bound as $\mg$, empirical studies show that it has superior utility in practice \cite{cormode2008finding}.
Currently, no $\diffpriv$ version of $\spacesaving$ exists.
This is the main gap addressed by our paper.

\subsection{Linear Sketches}

Linear sketches enable approximate frequency queries, in small memory, by hashing items to a compact collection of shared counters. 
We focus on the well known Count-min Sketch ($\cms$)~\cite{cormode2005improved} and Count Sketch ($\cs$)~\cite{charikar2002finding} variants. 
While they share a similar structure, they differ in their update and query procedures and offer different error trade-offs.

\subsubsection{Count-Min Sketch}
The $\cms$ uses $d$ hash functions $h_1, \ldots, h_d : \mathcal{U} \rightarrow [w]$ to maintain a $d \times w$ matrix of counters $C \in (\mathbb{R}^+)^{d \times w}$. 
For each incoming item $x \in \mathcal{U}$ the update rule is:
\[
    C[i, h_i(x)] \leftarrow C[i, h_i(x)] + 1, \quad \text{for all } i \in [d].
\]
To estimate the frequency of $x$, $\cms$ returns $\hat{f}_x = \min_{i \in [d]} C[i, h_i(x)]$.
This estimator is biased upwards due to hash collisions with other items but offers strong probabilistic error guarantees.

\subsubsection{Count Sketch}
The $\cs$  modifies the $\cms$ by incorporating randomized signs to reduce bias. 
In addition to the hash functions $h_1, \ldots, h_d : \mathcal{U} \rightarrow [w]$, it uses a second family of hash functions $g_1, \ldots, g_d : \mathcal{U} \rightarrow \{-1, +1\}$ that assign random signs.
For each $x \in \mathcal{U}$, the update rule is:
\[
    C[i, h_i(x)] \leftarrow C[i, h_i(x)] + g_i(x), \quad \text{for all } i \in [d].
\]
To estimate the frequency of $x$, $\cs$ returns:
\[
    \hat{f}_x = \mathrm{median}_{i \in [d]} \left( g_i(x) \cdot C[i, h_i(x)] \right).
\]
This estimator is unbiased and has error bounded in terms of the  second moment of the frequency vector.

\subsubsection{$\diffpriv$ Heavy Hitters with $\diffpriv$ Linear Sketches}

\begin{table*}[t]
\begin{topbox}[label={box:cases}]{Neighbouring Stream Cases}
\begin{itemize}
  \item[$\sone$] 
  \begin{itemize}
    \item[(a)] $\mathcal{T}_t = \mathcal{T}_t'$;
    \item[(b)] $\exists y \in \mathcal{T}_t$ such that $C_t[y] = C_t'[y]+1$;
    \item[(c)] $\forall z \in \mathcal{T}_t \setminus \{y\}, \; C_t[z] = C_t'[z]$.
  \end{itemize}
  \medskip\hrule\medskip
  \item[$\stwo$] 
  \begin{itemize}
    \item[(a)] $\mathcal{T}_t \neq \mathcal{T}_t'$; $W_t = \{z\}, \; W_t' = \{z'\}$;
    \item[(b)] $\forall a \in \mathcal{T}_t \cap \mathcal{T}_t', \; C_t[a] = C_t'[a]$;
    \item[(c)] $C_t[z] = C_t'[z']+1$; 
    $C_t'[z'] = \min_{a \in \mathcal{T}'} C_t'[a]$, 
    $C_t[z] \leq \min_{a \in \mathcal{T}} C_t[a] + 1$.
  \end{itemize} 
  \medskip\hrule\medskip
  \item[$\sthree$]
  \begin{itemize}
    \item[(a)] $\mathcal{T}_t \neq \mathcal{T}_t'$; $W_t = \{z\}, \; W_t' = \{z'\}$;
    \item[(b)] $\exists y \in \mathcal{T}_t \cap \mathcal{T}_t' $ such that $C_t[y]=C_t'[y]+1$, and 
    $\forall a \in \mathcal{T}_t \cap \mathcal{T}_t' \setminus \{y\}, \; C_t[a] = C_t'[a]$;
    \item[(c)] $C_t[z] = C_t'[z'] = \min_{a \in \mathcal{T}_t} C_t[a] = \min_{a \in \mathcal{T}_t'} C_t'[a]$.  
  \end{itemize}
  \medskip\hrule\medskip
  \item[$\sfour$]
  \begin{itemize}
    \item[(a)] $\mathcal{T}_t \neq \mathcal{T}_t'$; $W_t = \{z_1, z_2\}, \; W_t' = \{z_1', z_2'\}$;
    \item[(b)] $\forall a \in \mathcal{T}_t \cap \mathcal{T}_t', \; C_t[a] = C_t'[a]$;
    \item[(c)] $C_t[z_1]-1 = C_t[z_2] = C_t'[z_1'] = C_t'[z_2'] = \min_{a \in \mathcal{T}_t} C_t[a] = \min_{a \in \mathcal{T}_t\pr} C_t\pr[a]$; 
    \item[(d)] $\arg \max_{a \in S_{t+1}'} \tau_t'[a] \in W_t'$.
  \end{itemize}
\end{itemize}
\end{topbox}
\caption{State data structure differences for $\spacesaving$ on neighboring streams.}
\label{tab:states}
\end{table*}

A $\diffpriv$ linear sketch can be used as a frequency oracle within a heavy hitter search algorithm. 
The most direct approach performs a brute-force scan over the entire item universe~\cite{epasto2023differentially}, which is computationally intractable for large (which is to say data stream) domains.
More efficient methods employ tree-based search procedures that recursively prune infrequent regions of the domain~\cite{bassily2017practical}.
Nevertheless, these approaches still require $\mathcal{O}(\sqrt{d})$ oracle queries, where $d$ denotes an upper bound on the number of distinct items, resulting in non-trivial (and non-streaming) computational and memory overheads.

Alternatively, in the non-private setting, heavy hitters can be extracted from a linear sketch by tracking the items with the largest observed frequency estimates throughout the stream.
However, as with counter-based algorithms, the resulting candidate sets may differ across neighboring streams, rendering their direct release incompatible with $\diffpriv$.
Designing a privacy-preserving analogue of this tracking-based approach is non-trivial, and addressing this challenge is a central contribution of our work.

\section{Differentially Private SpaceSaving}
\label{sec:soss}

To privatize a counter-based algorithm it is not sufficient to inject noise into the counts of the tracked items.
This is because the identities of the tracked items can differ on neighboring inputs, allowing an adversary to easily distinguish the streams based on the output of the algorithm.
To overcome this challenge for $\mg$, Lebeda and Tetek enumerate all possible \textit{differences} between the structures $\mgx$ and $\mgxp$, for $X \sim X^{\prime}$ \cite{lebeda2023better}.
Knowledge of these differences allows a non-private $\mg$ to be processed such that both privacy leakage events are suppressed and item count differences are obscured. 
We adopt the same approach for privatizing $\spacesaving$.

\subsection{SpaceSaving on Neighboring Streams}

Before proceeding, we provide some useful notation.
Let $X = \{x_1, \ldots, x_T  \}$ and $X^{\prime} = \{x_1, \ldots, x_{i-1}, \varnothing, x_{i+1}, \ldots, x_T\}$ denote neighboring inputs, let $(\mathcal{T}_t, C_t) \gets \ssx(X_t)$ denote the output of $\spacesaving$ at time $t$, and let $W_t = \mathcal{T}_t\setminus \mathcal{T}_t^{\prime}$ and $W_t\pr = \mathcal{T}_t\pr \setminus \mathcal{T}_t$  denote the set difference functions.
Item eviction is determined by the recency functions  $\tau_t[y] = \max \{ r \mid X[r] = y,\; r \le t \}$ and $\tau_t\pr[y] = \max \{ r \mid X\pr[r] = y,\; r \le t \}$.
Define the sets of items with minimum count, prior to eviction, as $S_{t+1} = \arg \min \{C_t[z] \mid z \in \mathcal{T}_{t}\}$ and $S_{t+1}\pr = \arg \min \{C_t\pr[z] \mid z \in \mathcal{T}_{t}\pr\}$.
Thus, our eviction strategy at time $t$ is $\arg \max_{a \in S_t} \tau_{t-1}[a]$.
We refer to this item as occupying the \textit{eviction register}.

To formalize the differences between the outputs of the $\spacesaving$ algorithm on neighboring streams, we categorize each {type} of difference into a \emph{state}.
Our main privacy proof requires the enumeration of all possible states, which are provided in Table~\ref{tab:states}.
The following lemma establishes that this enumeration is exhaustive.
\begin{lemma}
    For any $k \in \mathbb{Z}^+$ and $t \ge i$, let
    \begin{align*}
        \ssx(X_t = \{x_1, \ldots, x_t\}, k) 
        &= (\mathcal{T}_t, C_t) \\
        \ssx(X_t' = \{x_1, \ldots, x_{i-1}, \varnothing, x_{i+1}, \ldots, x_t\}, k) 
        &= (\mathcal{T}_t', C_t').
    \end{align*}
    Then, the pair of outputs $(\mathcal{T}_t, C_t)$ and $(\mathcal{T}_t', C_t')$ can be in any of the states $\sone, \stwo, \sthree$, or $\sfour$ (Table~\ref{tab:states}).
    \label{lem:state_eos}
\end{lemma}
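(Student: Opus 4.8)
The plan is to prove the lemma by induction on $t \ge i$, where the inductive hypothesis is that $(\mathcal{T}_t, C_t)$ and $(\mathcal{T}_t', C_t')$ occupy one of the four states $\sone$--$\sfour$. The base case is $t = i$: before update $i$ both runs are identical, so after processing $x_i$ in $X$ and $\varnothing$ in $X'$ we are in state $\sone$ — either $\mathcal{T}_i = \mathcal{T}_i'$ with a single counter (the one for $x_i$, or whatever $x_i$ evicted into) incremented by one relative to $X'$, or, if $x_i$ caused an eviction, the tracked sets still coincide because $X'$ did nothing and $X$ replaced the minimum-count item; a short check confirms $\sone$(a)--(c) hold (here the ``$+1$'' item $y$ is $x_i$, or the item $x_i$ displaced). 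Actually one must be slightly careful: if $x_i$ triggers an eviction, then $\mathcal{T}_i \ne \mathcal{T}_i'$ and we land in $\stwo$ with $W_i = \{x_i\}$ and $W_i'$ the evicted item; so the base case already exercises $\sone$ and $\stwo$.

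For the inductive step, I would fix a state at time $t$ and then case-split on the next update $x_{t+1}$ according to (i) whether $x_{t+1}$ lies in $\mathcal{T}_t$, $\mathcal{T}_t'$, both, or neither, and (ii) which of the two runs (if either) performs an eviction on this update, and (iii) when an eviction occurs, whether the evicted/incremented item is one of the isolated items in $W_t, W_t'$ or a shared item. The driving structural facts are: the symmetric difference $|W_t| = |W_t'| \le 2$ is preserved (an update changes $\mathcal{T}$ by at most swapping one element), the counts of shared items stay equal except for a single possible $+1$ discrepancy, and the isolated items always sit at or near the minimum count — which is exactly what lets an eviction in one run but not the other keep the invariant. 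The recency tie-break rule (evict the most-recently-seen minimum-count item) is what forces clause $\sfour$(d) and prevents the symmetric difference from growing: since $X$ and $X'$ agree on all updates except $i$, the $\tau$ and $\tau'$ functions agree on every item except possibly through the one extra occurrence of $x_i$, so the two runs evict ``in sync'' whenever they both evict. I would organize the step as a table mirroring Table~\ref{tab:states}, showing for each (current state, update type) pair which state results, and verifying the defining clauses (a)--(d) hold in the successor.

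The main obstacle is the bookkeeping around evictions when the two runs are in state $\stwo$ or $\sfour$ and the incoming item is the single isolated item $z'$ of $\mathcal{T}_t'$ (present in $X'$'s table but not $X$'s), or is an item outside both tables that triggers an eviction in one run but lands on a still-free or differently-valued counter in the other. In these sub-cases the identity of the minimum-count item, and hence who gets evicted, can differ between the runs, and one must check that the net effect still collapses to one of the four states rather than producing a third isolated pair or a count discrepancy of $2$. The key lemma to extract and prove carefully here is that whenever both runs perform an eviction on the same update, they evict ``corresponding'' items — formally, $\arg\max_{a \in S_{t+1}} \tau_t[a]$ and $\arg\max_{a \in S_{t+1}'} \tau_t'[a]$ are either equal or are the paired isolated elements — which is precisely the content of clause $\sfour$(d) and is where the enforced recency tie-break does the real work. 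Once that sub-lemma is in hand, the remaining transitions are routine case analysis on the (at most $\sim\!16$) combinations, and I would relegate the exhaustive table to the appendix, keeping the eviction-synchronization argument in the main text.
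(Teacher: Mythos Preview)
Your inductive skeleton matches the paper exactly: base case at $t=i$ landing in $\sone$ or $\stwo$ (the paper's Lemma~\ref{lem:init_so_ss}), followed by per-state transition lemmas (the paper's Lemmas~\ref{lem:sone}--\ref{lem:sfour}). The base-case analysis you sketch, including the self-correction that an eviction at step $i$ yields $\stwo$, is correct.

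The gap is your proposed ``key lemma'' on eviction synchronization. You claim that whenever both runs evict on the same update, the evicted items are either equal or are the paired isolated elements. This is \emph{false}. Already from state $\sone$---where there are no isolated elements at all---if the single item $y$ with $C_{t-1}[y]=C_{t-1}'[y]+1$ happens to occupy the eviction register in $\mathcal{T}_{t-1}'$ but not in $\mathcal{T}_{t-1}$, then the $X'$-run evicts $y$ while the $X$-run evicts some other $x^*$; neither is isolated at time $t-1$. This is precisely how the transition $\sone\to\stwo$ arises (Lemma~\ref{lem:sone}, case~(ii)). The same phenomenon in state $\sthree$ produces the transition $\sthree\to\sfour$ (Lemma~\ref{lem:sthree}, case~(iv)), where both evicted items lie in the intersection and are distinct. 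So your synchronization lemma cannot be the load-bearing tool, and relying on it would miss exactly the transitions that grow the symmetric difference.

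What the paper actually isolates is the simpler observation (Lemma~\ref{lem:tau_implication}) that for any $a\in\mathcal{T}_t\cap\mathcal{T}_t'$, equal counts imply equal recencies: $C_t[a]=C_t'[a]\Rightarrow\tau_t[a]=\tau_t'[a]$. This pins down that the only possible source of disagreement between the two eviction registers is either an isolated element \emph{or} the single shared item carrying the $+1$ count discrepancy. With that in hand the case analysis---roughly the sixteen combinations you anticipate---does go through, but the ``$+1$ item'' must be tracked explicitly as a third actor in every transition rather than folded into an isolated-elements dichotomy.
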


Given that the data structure differences are restricted to states $\sone-\sfour$, we obtain the following corollary that provides a high level generalization of the states.

\begin{corollary}
\label{cor:state_difference}
Let $\ssx(X, k) = (\mathcal{T}, C)$ and $\ssx(X', k) = (\mathcal{T}', C')$ be the outputs of Algorithm~\ref{alg:SS} on neighboring streams $X$ and $X'$, where $X'$ is obtained from $X$ by removing a single update.
Then the following properties hold:
\begin{enumerate}
    \item The intersection satisfies $|\mathcal{T} \cap \mathcal{T}'| \ge k-2$.
    \item For every item $a \in \mathcal{T} \setminus (\mathcal{T} \cap \mathcal{T}')$, its counter obeys
    \[
        C[a] \le \min_{b \in \mathcal{T}} C[b] + 1,
    \]
    and for every item $a \in \mathcal{T}' \setminus (\mathcal{T} \cap \mathcal{T}')$,
    \[
        C'[a] \le \min_{b \in \mathcal{T}'} C'[b].
    \]
    \item There exists at most one item $x \in \mathcal{T}\cap \mathcal{T}'$ such that $C[x] = C'[x] + 1$, and for all $y \in (\mathcal{T} \cap \mathcal{T}') \setminus \{x\}$, we have $C[y] = C'[y]$.
\end{enumerate}
\end{corollary}
\noindent
This corollary informs the design of a $\diffpriv$ post-processing procedure for the output of the non-private $\spacesaving$ algorithm.
Before outline such a procedure, we turn to the formal justification of this claim by establishing Lemma~\ref{lem:state_eos}.

The proof of Lemma~\ref{lem:state_eos} proceeds by induction on the stream length.
We break the proof up into a sequence of intermediate lemmas for readability and begin with the base case (time $t=i$).

\begin{lemma}
    For any $k \in \mathbb{Z}^+$ and time $t=i$, the outputs
    \begin{align*}
        \ssx(\{x_1, \ldots, x_i  \}, k) &= \mathcal{T}_i, C_i \\
        \ssx(\{x_1, \ldots, x_{i-1}, \varnothing\}, k) &= \mathcal{T}_i\pr, C_i\pr
    \end{align*} 
    are in either $\sone$ or $\stwo$.
    \label{lem:init_so_ss}
\end{lemma}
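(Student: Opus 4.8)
The plan is to analyze the single step in which the two streams diverge, namely the update at time $i$ where $X$ processes $x_i$ and $X'$ processes $\varnothing$ (a no-op). Up to time $i-1$ the two runs are identical, so $\mathcal{T}_{i-1} = \mathcal{T}_{i-1}'$ and $C_{i-1} = C_{i-1}'$. Since $X'$ does nothing at step $i$, we have $\mathcal{T}_i' = \mathcal{T}_{i-1}'$ and $C_i' = C_{i-1}'$. Thus the entire lemma reduces to understanding how a single $\spacesaving$ update transforms the common state $(\mathcal{T}_{i-1}, C_{i-1})$, and then comparing the result $(\mathcal{T}_i, C_i)$ against the unchanged $(\mathcal{T}_{i-1}, C_{i-1})$. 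I would organize the argument by the three branches of the $\spacesaving$ update rule (Algorithm~\ref{alg:SS}).

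First, suppose $x_i \in \mathcal{T}_{i-1}$. Then $\mathcal{T}_i = \mathcal{T}_{i-1} = \mathcal{T}_i'$, only $C_i[x_i] = C_{i-1}[x_i]+1$ changes, and all other counters agree. This is exactly $\sone$ with $y = x_i$: parts (a), (b), (c) hold directly. Second, suppose $x_i \notin \mathcal{T}_{i-1}$ but $|\mathcal{T}_{i-1}| < k$; then $x_i$ is inserted with count $1$, so $\mathcal{T}_i = \mathcal{T}_{i-1} \cup \{x_i\}$. Here $\mathcal{T}_i \ne \mathcal{T}_i'$, $W_i = \{x_i\}$, $W_i' = \varnothing$. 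This case needs a small remark: $\stwo$(a) as written posits $W_t' = \{z'\}$ a singleton, so strictly one should note that when the table has spare capacity the "evicted" set on the $X'$ side is empty — I would either fold this into $\stwo$ by allowing $z'$ to be a null placeholder with $C'[z'] = 0$, or observe that this sub-case is degenerate and still satisfies the count relations $C_i[x_i] = 1 = 0 + 1 = C_i'[z'] + 1$ with the min-count conditions holding vacuously / trivially. I would flag in the write-up exactly which convention the paper intends. Third, suppose $x_i \notin \mathcal{T}_{i-1}$ and $|\mathcal{T}_{i-1}| = k$. Then the update evicts $y = \arg\max_{a \in S_i}\tau_{i-1}[a]$ where $S_i = \arg\min_{a}C_{i-1}[a]$, sets $C_i[x_i] = C_{i-1}[y]+1$ and removes $y$. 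Now $\mathcal{T}_i \setminus \mathcal{T}_i' = \{x_i\}$ and $\mathcal{T}_i' \setminus \mathcal{T}_i = \{y\}$, so $W_i = \{x_i\}$, $W_i' = \{y\}$. For all $a$ in the intersection, $C_i[a] = C_{i-1}[a] = C_i'[a]$, giving $\stwo$(b). For (c): $C_i'[y] = C_{i-1}[y] = \min_a C_{i-1}[a] = \min_{a\in\mathcal{T}'}C_i'[a]$ by definition of $y \in S_i$; $C_i[x_i] = C_i'[y] + 1$; and $C_i[x_i] = C_{i-1}[y] + 1 = \min_a C_{i-1}[a] + 1 \le \min_{a \in \mathcal{T}_i} C_i[a] + 1$ since every surviving counter has value $\ge \min_a C_{i-1}[a]$ and $C_i[x_i]$ itself is one more than that minimum. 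So all of $\stwo$ holds.

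The main obstacle is bookkeeping around the degenerate cases rather than any deep argument: in particular, confirming that the "insertion into a non-full table" branch is genuinely covered by the stated form of $\stwo$ (which is written assuming a nonempty $W_t'$), and double-checking the inequality $C_i[x_i] \le \min_{a\in\mathcal{T}_i}C_i[a]+1$ when $x_i$ might itself be the unique minimizer — it is not, since the evicted $y$ had the old minimum and $x_i$ now sits at old-min $+1$, while every other counter is $\ge$ old-min, so $\min_{a\in\mathcal{T}_i}C_i[a] \in \{\text{old-min},\,\text{old-min}+1\}$ and the inequality is safe. I would also explicitly note why $\sthree$ and $\sfour$ cannot arise at $t=i$: both require a prior discrepancy (a $W_t'$ of size one or two, or a count offset $y$ already present in the intersection) that simply does not exist when the two runs have been identical up to step $i-1$ and diverge by a single no-op. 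Writing that observation cleanly is the one place I would be careful, since it is really the content of "either $\sone$ or $\stwo$" as opposed to all four states.
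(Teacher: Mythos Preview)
Your argument is correct and follows the same case split as the paper's proof: the paper also distinguishes $x_i \in \mathcal{T}_{i-1}$ (yielding $\sone$) from $x_i \notin \mathcal{T}_{i-1}$ (yielding $\stwo$), verifying the state conditions directly from the single update. You are in fact more careful than the paper, which silently assumes the table is already full and does not discuss the $|\mathcal{T}_{i-1}| < k$ branch at all; your flag that this degenerate sub-case does not literally match the stated form of $\stwo$(a) (since $W_i' = \varnothing$) is a genuine observation the paper glosses over.
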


\begin{figure}
    \centering
    \includegraphics[width=0.9\linewidth]{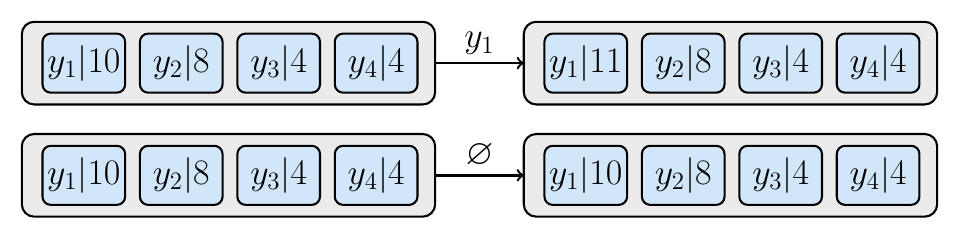}
    \caption{
    Example of transition to $\sone$ when element $y_1 \in \mathcal{T}$ arrives from $X$ and $X\pr$ provides no update.
    }
    \label{fig:s0_s1}
\end{figure}
\begin{figure}
    \centering
    \includegraphics[width=0.9\linewidth]{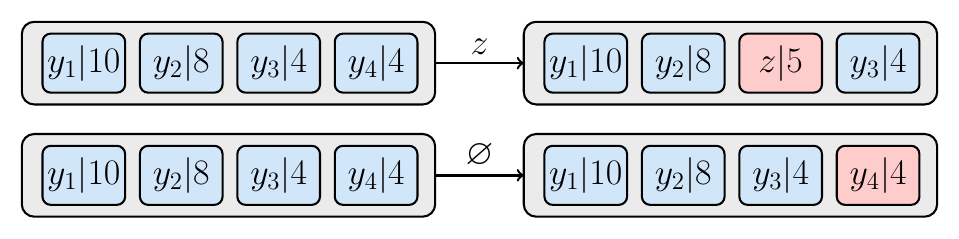}
    \caption{
    Example of transition to $\stwo$ when element $z \notin \mathcal{T}$ arrives from $X$ and $X\pr$ provides no update.
    Isolated elements are highlighted in red.
    }
    \label{fig:s0_s2}
\end{figure}

\begin{proof}
    There are two cases to consider for update $i$.
    First, we have $x_i \in \mathcal{T}_{i-1} (= \mathcal{T}_{i-1}\pr)$.
    In this instance the counter $C_{i-1}[x_i]$ gets incremented to form $C_i[x_i]$, and the set of tracked items stays the same (condition $\sone$(a)).
    As $C_{i-1}\pr = C_{i}\pr$, conditions $\sone$(b) and $\sone$(c) are satisfied.
    
    Second, we observe $x_i \notin \mathcal{T}_{i-1}$.
    Here,  the item $y = \arg \max_{a \in S_t} \tau_{t-1}[a]$ is replaced by $x_i$ in $\mathcal{T}_{i-1}$ to form $\mathcal{T}_i$, leading to $\mathcal{T}_i \neq \mathcal{T}_i\pr$.
    Further, $x_i$ is allocated the count $C_i[x_i] = C_{i-1}[y] + 1 = C_i\pr[y]+1$.
    As $W_i = \{x_i\}$ and $W\pr = \{y\}$, with $y$ obtaining the minimum count in $\mathcal{T}_t\pr$,
    this results in a transition to $\stwo$. 
\end{proof}

Examples of the possible state transitions at time $i$ are given in Figures~\ref{fig:s0_s1} \&~\ref{fig:s0_s2}.
Before proceeding with the state differences at times $t>i$, we  prove a simple but useful result.
\begin{lemma}
    For all $t \geq i$ and $a \in \mathcal{T}_t \cap \mathcal{T}_t\pr$,
    \[
        C_t[a] = C_t\pr[a] \implies \tau_t[a] = \tau_t\pr[a].
    \]
    \label{lem:tau_implication}
\end{lemma}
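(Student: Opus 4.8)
The plan is to reduce the claim, using the relationship between $X$ and $X\pr$, to a statement about counters alone, and then to settle it by a short induction on time. The two streams agree at every index except $i$, where $X[i]=x_i$ while $X\pr[i]=\varnothing$. Hence for any label $a \neq x_i$ the sets of indices $r \le t$ with $X[r]=a$ and with $X\pr[r]=a$ are identical, so $\tau_t[a]=\tau_t\pr[a]$ holds unconditionally and the implication is immediate. The same is true when $a = x_i$ but $x_i$ occurs among $x_{i+1}, \dots, x_t$: it then occurs at the same positions in $X\pr$, and $\tau_t[x_i]=\tau_t\pr[x_i]$ equals that last position. So the only case carrying any content is $a = x_i \in \mathcal T_t \cap \mathcal T_t\pr$ with $x_i$ absent from $x_{i+1}, \dots, x_t$, where $\tau_t[x_i]=i>\tau_t\pr[x_i]$.

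In this remaining case I would show that the hypothesis $C_t[x_i]=C_t\pr[x_i]$ is never satisfied --- in fact $C_t[x_i]=C_t\pr[x_i]+1$ --- so the implication holds vacuously. The starting point is a persistence observation: since $x_i$ never arrives in $X\pr$ at a time $\ge i$ and never arrives in $X$ among $x_{i+1}, \dots, x_t$, while a label can enter a tracked set only upon its own arrival, the assumption $x_i \in \mathcal T_t \cap \mathcal T_t\pr$ forces $x_i \in \mathcal T_s \cap \mathcal T_s\pr$ for every $s \in [i,t]$, and $x_i$ is never the evicted item at any step in $[i+1,t]$ of either run.

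I would then establish by induction on $s \in [i,t]$ the invariant that $(\mathcal T_s, C_s)$ and $(\mathcal T_s\pr, C_s\pr)$ are in the configuration that Table~\ref{tab:states} labels $\sone$, with distinguished element $x_i$: that is, $\mathcal T_s = \mathcal T_s\pr$, $C_s[x_i] = C_s\pr[x_i]+1$, and $C_s[b] = C_s\pr[b]$ for all $b \in \mathcal T_s \setminus \{x_i\}$. The base case $s = i$ follows because persistence gives $x_i \in \mathcal T_{i-1}\pr = \mathcal T_{i-1}$, so processing index $i$ takes the increment branch in the $X$ run and is a no-op in the $X\pr$ run, producing exactly the stated gap. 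In the inductive step the arriving item is some $x_s \neq x_i$; the increment branch and the free-slot branch of Algorithm~\ref{alg:SS} plainly preserve the invariant, so the only substantive case is the eviction branch.

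The main obstacle is precisely that eviction branch: because $x_i$'s count in the $X\pr$ run is smaller by one, $x_i$ may tie for the minimum count there, and one must rule out that the recency tie-break selects $x_i$ for eviction in $X\pr$. I would argue as follows. The minima $\min_a C_{s-1}[a]$ and $\min_a C_{s-1}\pr[a]$ differ by at most one, and only because of $x_i$. Were $x_i$ the unique minimizer in the $X\pr$ run, it would be evicted there, contradicting $x_i \in \mathcal T_t\pr$; so $x_i$ either fails to attain the minimum count in the $X\pr$ run or ties only alongside non-$x_i$ labels, and in either situation the item evicted in the $X\pr$ run is the minimum-count non-$x_i$ label of largest recency. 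Since non-$x_i$ labels have identical counts and recencies across the two runs, this is the same label evicted in the $X$ run, and updating the counters accordingly preserves $\mathcal T_s = \mathcal T_s\pr$ together with the $x_i$-gap. Taking $s = t$ then yields $C_t[x_i] = C_t\pr[x_i]+1 \neq C_t\pr[x_i]$, as required; everything else is routine casework over the branches of Algorithm~\ref{alg:SS}.
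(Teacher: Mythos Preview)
Your argument is correct. Both you and the paper make the same opening reduction: for any $a\neq x_i$ the two streams agree at every position, so $\tau_t[a]=\tau_t'[a]$ unconditionally, and likewise once $x_i$ has re-arrived after position $i$. The paper then disposes of the remaining case in two sentences, simply asserting that identical updates after time $i$ preserve ``equal counts $\Rightarrow$ equal recencies'' for tracked items and for newly inserted items. Your route is different in that you strengthen the induction hypothesis: under the persistence assumption you prove the two runs are in state $\sone$ with $x_i$ as the distinguished item at every step, and then read off $C_t[x_i]=C_t'[x_i]+1$. This buys you an explicit eviction analysis that the paper leaves implicit, and in particular your use of persistence (``if $x_i$ were the unique minimizer in the $X'$ run it would be evicted, contradicting $x_i\in\mathcal T_t'$'') is exactly the mechanism that makes the tie-break harmless.

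One remark on economy: once you have the persistence observation, the full $\sone$ induction is more than you need. A tracked item that is never evicted and never arrives has a frozen counter in $\spacesaving$, so from $x_i\in\mathcal T_s\cap\mathcal T_s'$ for all $s\in[i,t]$ and the fact that $x_i$ arrives at step $i$ in $X$ but not in $X'$ (and nowhere in $[i{+}1,t]$), you immediately get $C_t[x_i]=C_{i-1}[x_i]+1$ and $C_t'[x_i]=C_{i-1}'[x_i]=C_{i-1}[x_i]$, hence $C_t[x_i]=C_t'[x_i]+1$. Your eviction case analysis is then unnecessary for this lemma, though it is exactly the content of Lemma~\ref{lem:sone} case~(ii) and would be reusable there.
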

\begin{proof}
    By Lemma~\ref{lem:init_so_ss}, at $t=i$, according to both $\sone$ and $\stwo$, $\forall a \in \mathcal{T}_t \cap \mathcal{T}_t\pr \setminus\{x_i\}$ we observe $C_t[a] = C_t\pr[a]$.
    As, $\forall a \in \mathcal{U}\setminus \{x_i\}$ and $t>0$, $\tau_t[a] = \tau_t\pr[a]$, the implication holds at $t=i$.
    At times $t>i$, all updates are identical. 
    Therefore, existing tracked items with identical counts across streams maintain identical counts \emph{and} recencies at subsequent arrivals.
    Further, any new items added to both tracked item sets with the same count also have the same recency.
\end{proof}

At times $t> i$, the remainder of the stream updates are identical for $X$ and $X^{\prime}$.
Significantly, the data structures do not deviate from states $\sone$-$\sfour$ as the streams proceed.

\begin{figure}
    \centering
    \includegraphics[width=0.9\linewidth]{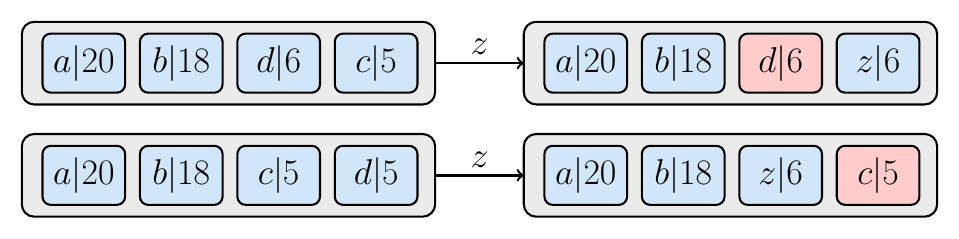}
    \caption{Example of a transition from $\sone$ to $\stwo$ when an element $z \notin \mathcal{T} \cap \mathcal{T}\pr$ arrives in the stream.}
    \label{fig:s1_s2}
\end{figure}
\begin{lemma}
    If the process is in state $\sone$ at time $t-1$, it can be states $\sone$ or $\stwo$ at time $t$.
    \label{lem:sone}
\end{lemma}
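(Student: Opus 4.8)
The plan is to induct on $t$ via a case split on the common update $u:=x_t$, which is identical for $X$ and $X'$ (note that being in state $\sone$ at time $t-1$ already forces $t-1\ge i$, hence $t>i$, so the differing $i$-th update lies in the past). Throughout I use the state-$\sone$ hypothesis: $\mathcal{T}_{t-1}=\mathcal{T}_{t-1}'$, a distinguished item $y$ with $C_{t-1}[y]=C_{t-1}'[y]+1$, and $C_{t-1}[a]=C_{t-1}'[a]$ for every $a\in\mathcal{T}_{t-1}\setminus\{y\}$; by Lemma~\ref{lem:tau_implication} the recencies also agree on $\mathcal{T}_{t-1}\setminus\{y\}$.

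Two of the three branches of Algorithm~\ref{alg:SS} are immediate. If $u\in\mathcal{T}_{t-1}$, both copies increment $C[u]$ and leave the tracked sets equal; the surplus stays on $y$ (if $u\ne y$ the equal counts of $u$ rise together; if $u=y$ the surplus stays on $u=y$), so we remain in $\sone$. If $u\notin\mathcal{T}_{t-1}$ with $|\mathcal{T}_{t-1}|<k$, both copies insert $u$ with count $1$, keeping the tracked sets equal and the surplus on $y$, so again $\sone$.

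The substantive branch is $u\notin\mathcal{T}_{t-1}$ with both tables full, so each copy evicts a tie-broken minimum. Write $m=\min_{a\in\mathcal{T}_{t-1}}C_{t-1}[a]$, $m'=\min_{a\in\mathcal{T}_{t-1}'}C_{t-1}'[a]$, and let $S,S'$ be the corresponding argmin sets. I split on the position of $y$ relative to $m'$. (i) If $C_{t-1}'[y]>m'$ then $m=m'$, neither $S$ nor $S'$ contains $y$, and $S=S'$; since their members have equal counts in the two copies, Lemma~\ref{lem:tau_implication} makes their recencies equal too, so the recency rule evicts the same victim from both; the surviving sets coincide, $C_t[u]=m+1=m'+1$ matches, the surplus is still on $y$, and we are in $\sone$. (ii) If $C_{t-1}'[y]=m'$, then $C_{t-1}[y]=m'+1$ and I further split on whether $y\in S$. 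If $y\notin S$ then $m=m'$ and $S=S'\setminus\{y\}\ne\varnothing$; if the $X'$-victim $z'$ lies in $S'\setminus\{y\}$ then (equal recency there, and $z'$ already beat $y$ on $S'$) $X$ evicts the same $z'$, the sets coincide, and we are in $\sone$; if the $X'$-victim is $y$ itself, then $X$ evicts some $z\in S'\setminus\{y\}$, leaving $W_t=\{y\}$ and $W_t'=\{z\}$, and a direct check against Table~\ref{tab:states} confirms $\stwo$: the only new intersection coordinate $u$ carries $m+1=m'+1$ on both sides, $C_t[y]=m'+1=C_t'[z]+1$, $C_t'[z]=m'=\min_{\mathcal{T}_t'}C_t'$, and $m'+1\le\min_{\mathcal{T}_t}C_t+1$.

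The remaining subcase of (ii) — $C_{t-1}'[y]=m'$ and $y\in S$, which forces $m=m'+1$ and makes $y$ the \emph{unique} minimum of $X'$ (so $X'$ evicts $y$) — is where I expect the real work. Here one must show the recency rule also forces $X$ to evict $y$; granting that, both tracked sets become equal and the surplus simply migrates onto the newly inserted $u$, since $C_t[u]=C_{t-1}[y]+1=m'+2=C_t'[u]+1$ while every other coordinate is untouched, which is again $\sone$. Concretely, the obstacle is to prove that $y$ is the most recently occurring element of $S=\{y\}\cup\{a\ne y:\ C_{t-1}'[a]=m'+1\}$: the competing members of $S$ carry identical counts in $X$ and $X'$, so by Lemma~\ref{lem:tau_implication} their recencies coincide across copies, and one has to trace the surplus on $y$ — the very reason $y$ climbed to the minimum level of $X$ instead of sitting strictly below it — back to the single differing update, so as to certify that $\tau_{t-1}[y]$ dominates the recencies of those competitors. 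Pinning down this recency comparison (and verifying, via Table~\ref{tab:states}, that $\sthree$ and $\sfour$ cannot arise) is the crux; everything else is bookkeeping against the definitions.
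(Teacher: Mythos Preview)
Your case analysis is more careful than the paper's: the paper simply asserts, citing Lemma~\ref{lem:tau_implication}, that different eviction victims can occur \emph{only} when $y\notin S_t$, and then works out the $\stwo$ transition under that hypothesis. Your final subcase --- $C_{t-1}'[y]=m'$ together with $y\in S$, forcing $m=m'+1$ and making $y$ the unique $X'$-minimum --- is precisely what that assertion sweeps under the rug, and you are right to flag it as the crux.

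But the resolution you propose, that the recency rule must also make $X$ evict $y$, is false. Take $k=3$ and a prefix leaving $\mathcal{T}_{i-1}=\{y,a,b\}$ with counts $2,2,10$; let $x_i=y$ (so $X'$ skips it), $x_{i+1}=a$, and $x_{i+2}=c\notin\{y,a,b\}$. After step $i{+}1$ we are in $\sone$ with the surplus on $y$ and $C_{i+1}[y]=C_{i+1}[a]=3$. At step $i{+}2$, in $X$ the minimum set is $\{y,a\}$ with $\tau_{i+1}[a]=i{+}1>i=\tau_{i+1}[y]$, so $X$ evicts $a$; in $X'$ the minimum set is $\{y\}$ alone, so $X'$ evicts $y$. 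The result has $W=\{y\}$, $W'=\{a\}$, $C[c]=4=C'[c]+1$ on the intersection, and $C[y]=C'[a]=3=\min C=\min C'$: this is state $\sthree$, not $\sone$ or $\stwo$. The obstruction is exactly that a competitor in $S$ can be touched \emph{after} the surplus landed on $y$, so $\tau_{t-1}[y]$ need not dominate; your proposed tracing of the surplus back to the differing update cannot certify the required recency inequality, and neither can Lemma~\ref{lem:tau_implication}. The step you identified as the crux genuinely fails, and the transition $\sone\to\sthree$ is missing from both your argument and the paper's.
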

\begin{proof}
    If the process is in $\sone$, we have to consider the two cases $x_{t} \in \mathcal{T}_{t-1} (=\mathcal{T}_{t-1}\pr)$ and $x_{t} \notin \mathcal{T}_{t-1}$.
    In the former case, the sets $\mathcal{T}_{t-1}$ and $\mathcal{T}_{t-1}\pr$ do not change and both $C_{t-1}[x_t]$ and $C_{t-1}\pr[x_t]$ are incremented to form $C_t$ and $C_{t}\pr$. 
    Thus, the process stays in $\sone$.
    
    In the latter case, $x_{t} \notin \mathcal{T}_{t-1}$, items are evicted from both sets of tracked items and replaced with $x_t$.
    Recall that there exists $y \in \mathcal{T}_{t-1}=\mathcal{T}_{t-1}\pr$ such that $C_{t-1}[y] = C_{t-1}\pr[y] + 1$.
    There are two possibilities to consider based on the identities of the items in the eviction registers.
    \begin{itemize}
        \item[(i)] $\arg \max_{a \in S_t} \tau_{t-1}[a] = \arg \max_{a \in S_t\pr} \tau_{t-1}\pr[a] $. 
        In this instance, identical items are removed from both sets, resulting in the update $\mathcal{T}_t = \mathcal{T}_t\pr$ (condition $\sone$(a)). 
        If $x_t$ replaces $y$ in both sets, we observe $C_t[x_t] = C_t\pr[x_t]+1$, otherwise we observe $C_t[x_t] = C_t\pr[x_t]$, preserving the conditions $\sone$(b)-(c).  
        \item[(ii)] $\arg \max_{a \in S_t} \tau_{t-1}[a] \neq \arg \max_{a \in S_t\pr} \tau_{t-1}\pr[a] $.
        Let 
        \[
            x^* = \arg \max_{a \in S_t}\tau_{t-1}[a]
        \]
        By Lemma~\ref{lem:tau_implication},
        this case occurs if and only if $y \notin S_t$ and $y = \arg \max_{a \in S_{t}\pr} \tau_{t-1}\pr[a]$.
        This also implies that $C_{t-1}[y] = C_{t-1}[x^*] + 1$. 
        The following tracked item replacements occur 
        \begin{align*}
            \mathcal{T}_{t} &\gets (\mathcal{T}_{t-1}\setminus \{x^*\}) \cup \{x_t\}, \\
            \mathcal{T}_{t}\pr &\gets (\mathcal{T}_{t-1}\pr\setminus \{y\}) \cup \{x_t\}.
        \end{align*}
        This leads to $\mathcal{T}_t \neq \mathcal{T}_t\pr$ (condition $\stwo$(a)).
        Item $x_t$ is allocated the count $C_t[x_t] = C_{t-1}[x^*] +1 = C_{t-1}\pr[y] + 1 = C_t\pr[x_t]$ in both structures.
        As the counts of $\mathcal{T}_t \cap \mathcal{T}_t\pr \setminus \{x_t\}$ (which does not include $y$) remain unchanged, condition $\stwo$(b) is satisfied.
        In addition, we observe $ W_t = \{y\}$ with $C_t[y] = C_{t-1}[y] = C_{t-1}[x^*]+1$ (as stated above) and $W_t\pr =\{x^*\}$  with $C_{t}\pr[x^*] = C_{t-1}\pr[x^*] = C_{t-1}[x^*]$.
        Thus, $C_t[y] = C_{t}\pr[x^*] + 1$ (condition $\stwo$(c)).
        Lastly, 
        $x^*$ occupying the eviction register in $\mathcal{T}_{t-1}$ implies that $x^*$ has the minimum count in $\mathcal{T}_{t-1}\pr\setminus\{y\}$.
        Thus, as $x_t$ is added with count $C_{t-1}\pr[y]+1 > C_{t-1}[x^*] = C_{t}\pr[x^*]$, $x^*$ obtains the minimum count in $\mathcal{T}_t\pr$.
        This results in a transition to $\stwo$.
    \end{itemize}
\end{proof}

An illustration of a transition from $\sone$ to $\stwo$ is available in Figure~\ref{fig:s1_s2}.
A transition out of $\sone$ at time $t$ can only occur when item $y\in \mathcal{T}_{t-1}\cap \mathcal{T}_{t-1}\pr$ such that $\mathcal{C}_{t-1}[y] = \mathcal{C}_{t-1}\pr[y] +1$ is in the eviction register for $\mathcal{T}_{t-1}\pr$ but not $\mathcal{T}_{t-1}$.

\begin{restatable}{lemma}{lemstwo}
    If the process is in state $\stwo$ at time $t-1$, it can be states $\sone$, $\stwo$ or $\sthree$ at time $t$.
    \label{lem:stwo}
\end{restatable}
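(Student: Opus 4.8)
Here is the plan.

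The plan is to fix the item $x_t$ arriving at time $t$ and case-split on its membership in the two tracked sets. State $\stwo$ is reachable only once the tables are full (this is immediate from Lemma~\ref{lem:init_so_ss} and Lemma~\ref{lem:sone}, since every transition into $\stwo$ comes from a full-table configuration, and $\spacesaving$ never frees a counter once full), so $|\mathcal{T}_{t-1}| = |\mathcal{T}_{t-1}'| = k$; together with $W_{t-1} = \{z\}$ and $W_{t-1}' = \{z'\}$ this makes the following four cases exhaustive: (i) $x_t \in \mathcal{T}_{t-1} \cap \mathcal{T}_{t-1}'$; (ii) $x_t = z$; (iii) $x_t = z'$; (iv) $x_t \notin \mathcal{T}_{t-1} \cup \mathcal{T}_{t-1}'$. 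In cases (ii)--(iv) at least one stream performs an eviction, so I would further sub-split on the two quantities that govern it: whether $\min_{a \in \mathcal{T}_{t-1}} C_{t-1}[a]$ equals $C_{t-1}[z]$ or $C_{t-1}[z]-1$ (the only options allowed by condition $\stwo$(c)), and whether the eviction register on each side is held by the isolated item ($z$ or $z'$) or by an intersection item.

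For each resulting branch I would write down the updated tracked sets and counts explicitly and read off the state. The recurring phenomenon is that the single off-by-one of $\stwo$ — the relation $C_{t-1}[z] = C_{t-1}'[z']+1$ — either stays with an intersection item or a freshly inserted item (yielding $\stwo$ or $\sthree$) or is reabsorbed so that $\mathcal{T}_t = \mathcal{T}_t'$ (yielding $\sone$). For instance, when $x_t = z$ the $X$-stream merely increments $z$, while the $X'$-stream evicts its minimum-count item of greatest recency: if that is $z'$, the sets become equal and $z$ now carries the off-by-one, so we land in $\sone$; if it is an intersection item $w$, then $z$ stays isolated, $w$ becomes the new isolated item on the $X'$ side, $z$ enters the intersection carrying the off-by-one, and one checks $\sthree$(b)--(c). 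Cases (iii) and (iv) are handled identically, splitting on which item wins the eviction register and tracking where the off-by-one migrates; condition $\stwo$(c) is exactly what guarantees that the freshly inserted counts line up (equal, or differing by one) as the target state requires, and that the isolated element still realizes the per-stream minimum.

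The crux — and the reason a transition to $\sfour$ never occurs — is controlling how the two eviction registers can disagree. By $\stwo$(b) every intersection item has equal counts in both streams, so Lemma~\ref{lem:tau_implication} gives it equal recencies as well; hence the minimum-count sets $S_t$ and $S_t'$ agree on their intersection-item parts and differ only in that $S_t$ may contain $z$ and $S_t'$ may contain $z'$. Consequently, whenever the eviction register is held by an intersection item, it is the \emph{same} intersection item in both streams, so the symmetric difference of the tracked sets can never acquire a second pair of isolated elements; it stays at one element per side. This is precisely the structural fact that excludes $\sfour$ (which requires two isolated elements per side), and once it is established the remainder is the mechanical verification of conditions (b)--(c) of whichever of $\sone$, $\stwo$, $\sthree$ arises.

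I expect the main obstacle to be case (iv), where both streams evict at the same step: one must combine the description of $S_t$ and $S_t'$ above with a sub-analysis of $\min_{a \in \mathcal{T}_{t-1}} C_{t-1}[a]$ to argue that the two evicted items coincide unless one of them is $z$ or $z'$, and then trace, through each of the resulting configurations, where the off-by-one finally settles. This is the step that leans hardest on Lemma~\ref{lem:tau_implication} and most directly rules out a transition to $\sfour$.
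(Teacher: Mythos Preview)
Your proposal is correct and follows the same case decomposition as the paper (intersection, each isolated element, neither), sub-splitting on the eviction-register identities and leaning on Lemma~\ref{lem:tau_implication} in exactly the same way. One small correction to your crux paragraph: the intersection-item parts of $S_t$ and $S_t'$ need not agree --- when $\min_{a\in\mathcal{T}_{t-1}} C_{t-1}[a] = C_{t-1}[z]$, the set $S_t'$ collapses to $\{z'\}$ while $S_t$ may still contain intersection items at count $C_{t-1}[z]$ --- but your conclusion (the two eviction registers can never be \emph{distinct} intersection items) survives, since in that branch the $X'$-register is forced to be $z'$, and your planned sub-split on the minimum value is exactly what exposes and handles this.
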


The proof is relegated to Appendix~\ref{app:proof}, due to its similarity to the proof of Lemma~\ref{lem:sone}.
Examples of state transitions from $\stwo$ are given in Figures~\ref{fig:s2_s1} \&~\ref{fig:s2_s3}.
The transition to $\sone$ occurs when the isolated elements of both structures are in the eviction register.
In contrast, the transition to $\sthree$ occurs when the isolated element in $\mathcal{T}_{t-1}$ is not in the eviction register.
Both transitions can be triggered when an item $y\notin \mathcal{T}_{t-1} \cup \mathcal{T}_{t-1}\pr$ arrives.

\begin{figure}
    \centering
    \includegraphics[width=0.9\linewidth]{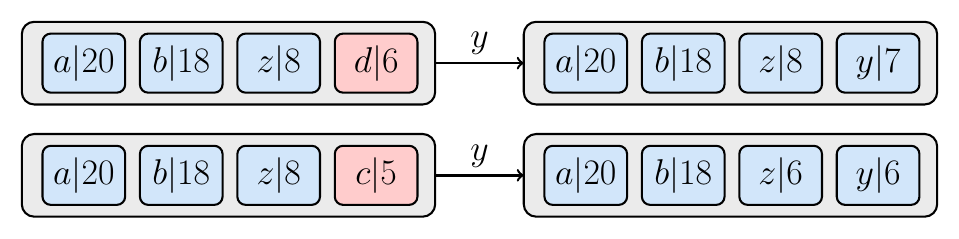}
    \caption{
    Example of a transition from $\stwo$ to $\sone$ when an element $y \notin \mathcal{T} \cap \mathcal{T}\pr$ arrives in the stream. 
    Note that this transition also occurs if either $d \in \mathcal{T}\setminus \mathcal{T}\pr$ or $c \in \mathcal{T}\setminus\mathcal{T}\pr$ arrive in this position.
    }
    \label{fig:s2_s1}
\end{figure}

\begin{figure}
    \centering
    \includegraphics[width=0.9\linewidth]{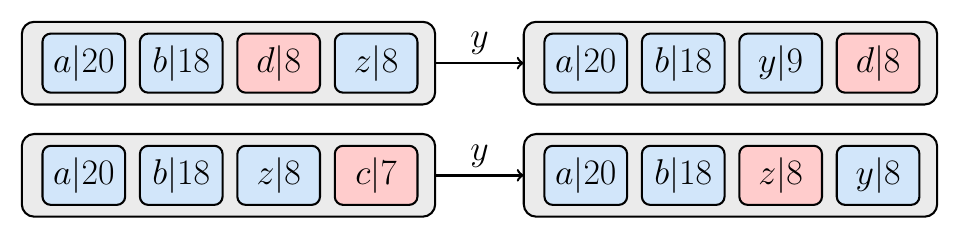}
    \caption{
    Example of a transition from $\stwo$ to $\sthree$ when an element $y \notin \mathcal{T} \cup \mathcal{T}\pr$ arrives in the stream. 
    }
    \label{fig:s2_s3}
\end{figure}

\begin{restatable}{lemma}{lemsthree}
    If the process is in state $\sthree$ at time $t-1$, it can be states $\sone$, $\sthree$ or $\sfour$ at time $t$.
    \label{lem:sthree}
\end{restatable}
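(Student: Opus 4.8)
The plan is to carry forward the induction underlying Lemma~\ref{lem:state_eos}: assume the pair $(\mathcal{T}_{t-1},C_{t-1})$, $(\mathcal{T}_{t-1}',C_{t-1}')$ satisfies $\sthree$, feed both copies of $\spacesaving$ the common update $x_t$, and show the resulting pair satisfies $\sone$, $\sthree$, or $\sfour$. Write $m$ for the common value $C_{t-1}[z]=C_{t-1}'[z']=\min_{a\in\mathcal{T}_{t-1}}C_{t-1}[a]=\min_{a\in\mathcal{T}_{t-1}'}C_{t-1}'[a]$ from $\sthree$(c), and let $y$ be the count-surplus item of $\sthree$(b), so $C_{t-1}[y]=C_{t-1}'[y]+1\ge m+1$; in particular $y$ is never a minimum-count item of $\mathcal{T}_{t-1}$. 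As in the proofs of Lemmas~\ref{lem:sone} and~\ref{lem:stwo}, the argument branches first on where $x_t$ lands --- in $\mathcal{T}_{t-1}\cap\mathcal{T}_{t-1}'$, equal to $z$, equal to $z'$, or outside both tables --- and then, whenever an eviction is triggered, on which minimum-count items occupy the two eviction registers, recalling that the evicted item is the one of largest recency among the minimum-count items.

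The branches without a new eviction are the easy ones. If $x_t\in\mathcal{T}_{t-1}\cap\mathcal{T}_{t-1}'$, both tracked sets are unchanged and only $C[x_t],C'[x_t]$ are incremented, so the surplus stays at $y$ (or moves to $x_t$ when $x_t=y$), no minimum-count item is disturbed, and $\sthree$ is preserved. If $x_t=z$, then in $X$ the counter $C[z]$ is incremented with $\mathcal{T}$ unchanged, while in $X'$ the item $z$ is inserted in place of the eviction-register item of $\mathcal{T}_{t-1}'$, call it $v$, with count $m+1$; symmetrically for $x_t=z'$. In each such case $z$ (resp.\ $z'$) ceases to be isolated, the evicted item $v$ becomes isolated in the other table, and one recomputes the counts of the new isolated pair: when $v$ happens to be the other isolated element the two tracked sets coincide and we land in $\sone$, and otherwise the new isolated pair has common count $m$ equal to the (unchanged) minimum of each table, so we remain in $\sthree$; throughout, the surplus relation $C[\cdot]=C'[\cdot]+1$ is carried either by $y$ or by the item just re-inserted into both tables.

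The substantive branch is $x_t\notin\mathcal{T}_{t-1}\cup\mathcal{T}_{t-1}'$, where both tables are full and each evicts the maximum-recency item among its minimum-count set. By Lemma~\ref{lem:tau_implication} the minimum-count items common to both tables carry identical recencies, so if the two evicted items differ, at least one of them is ``exclusive'' ($z$, $z'$, or $y$), and since $y$ is never minimum-count in $\mathcal{T}_{t-1}$ it is never evicted from $X$. Enumerating: if $z$ and $z'$ are both evicted, the residual tracked sets are equal and we reach $\sone$; if one of $z,z'$ is evicted while the other table evicts a common minimum-count item (or the same common item is evicted from both), exactly one isolated element survives on each side with common count $m=\min$, giving $\sthree$; and if $X$ evicts a common minimum-count item $w$ while $X'$ evicts the surplus item $y$, then $z'$ and $w$ become the two isolated elements of $X'$ while $z$ and $y$ become the two isolated elements of $X$, and a direct count computation shows this is exactly the pattern $\sfour$(c). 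For the companion recency clause $\sfour$(d) one uses that $w$ was the maximum-recency minimum-count item of $\mathcal{T}_{t-1}$; its recency is unchanged by $x_t$ and it survives into $\mathcal{T}_t'$, so the next eviction register of $X'$ is occupied either by $w$ or by $z'$, both of which lie in $W_t'$.

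The delicate point --- and the expected main obstacle --- is ruling out the one reconfiguration that would realize a $\stwo$-type difference: namely $z$ being evicted from $\mathcal{T}_{t-1}$ (or $x_t=z$ arriving) at the same time that the surplus item $y$ is a minimum-count item of $\mathcal{T}_{t-1}'$ and occupies its eviction register, which would leave $W_t=\{y\}$, $W_t'=\{z'\}$ with $C_t[y]=C_t'[z']+1$. The bare conditions of $\sthree$ do not by themselves preclude $C_{t-1}'[y]=m$, so excluding this case needs a recency invariant maintained alongside $\sthree$ in the induction, in the spirit of clause $\sfour$(d): roughly, that whenever the surplus item attains the table minimum in $\mathcal{T}_{t-1}'$ it cannot also dominate $z'$ (and the common minimum-count items) in recency. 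I would state this invariant explicitly, verify it is preserved by every transition analysed above, and invoke it to close this case. The remaining bookkeeping --- $|\mathcal{T}_t\cap\mathcal{T}_t'|\ge k-2$, the bound of $\min+1$ on exclusive counters, and the double-minimum equality $\sthree$(c) --- is then routine and parallels the verifications in Lemma~\ref{lem:sone}.
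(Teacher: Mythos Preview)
Your case analysis is structurally the same as the paper's: branch on $x_t\in\mathcal{T}_{t-1}\cap\mathcal{T}_{t-1}'$, $x_t\in W_{t-1}$, $x_t\in W_{t-1}'$, $x_t\notin\mathcal{T}_{t-1}\cup\mathcal{T}_{t-1}'$, and in the last case sub-branch on the eviction registers. The outcomes you assign to the routine branches ($\sone$ when both isolated elements vanish, $\sthree$ when a single isolated pair with common minimum count survives, $\sfour$ when the surplus item $y$ is evicted from $\mathcal{T}'$ while a distinct common minimum item is evicted from $\mathcal{T}$) coincide with the paper's.

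Where you diverge is in flagging the configuration ``$z$ evicted from $\mathcal{T}_{t-1}$ (or $x_t=z$) while the surplus item $y$ occupies the eviction register of $\mathcal{T}_{t-1}'$'', which you argue would produce an $\stwo$-type difference. You are right that the bare clauses of $\sthree$ do not exclude $C_{t-1}'[y]=m$ with $y=\arg\max_{a\in S_t'}\tau_{t-1}'[a]$, and you are also right that this scenario is not symmetric to the one the paper treats: in the paper's case~(iii), the non-isolated evictee $x^*$ comes from $\mathcal{T}_{t-1}$, where $y$ has count $\ge m+1$ and so is automatically excluded from $S_t$; the ``symmetric'' direction has no such automatic exclusion on the $\mathcal{T}_{t-1}'$ side. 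The paper simply asserts the two directions are symmetrical and moves on; it does not supply the recency argument you sketch.

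That said, your proposal is also incomplete on exactly this point: you announce a recency invariant ``in the spirit of $\sfour$(d)'' but neither state it nor verify that it is seeded by every transition into $\sthree$ (from $\stwo$ and from $\sfour$) and preserved by every self-loop. If you pursue this route, the natural candidate is that upon entering $\sthree$ the surplus item is always the freshly inserted element, so its recency strictly dominates that of every other tracked item in both tables; you would then need to argue that whenever $y$ later attains the minimum in $\mathcal{T}'$ and rises to the eviction register, $z$ has already been displaced from the eviction register of $\mathcal{T}$. Tracing this through the self-loops of $\sthree$ is the actual work, and neither you nor the paper has done it.
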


The proof is relegated to Appendix~\ref{app:proof}.
An example of a transition from $\sthree$ to $\sone$ is given in Figure~\ref{fig:s3_s1}.
The transition is triggered if an isolated element is in the eviction register of $\mathcal{T}_{t-1}$ but not $\mathcal{T}_{t-1}\pr$.
An example of a transition from $\sthree$ to $\sfour$ is given in Figure~\ref{fig:s3_s4} and increases the number of isolated elements.
The transition is triggered if the eviction registers are not identical and do not contain isolated elements.
This occurs when the eviction register in $\mathcal{T}_{t-1}\pr$ is occupied by the unique element $y\in \mathcal{T}_{t-1}\cap \mathcal{T}_{t-1}\pr$ where $C_{t-1}[y] = C_{t-1}\pr[y]+1$.

\begin{figure}
    \centering
    \includegraphics[width=0.9\linewidth]{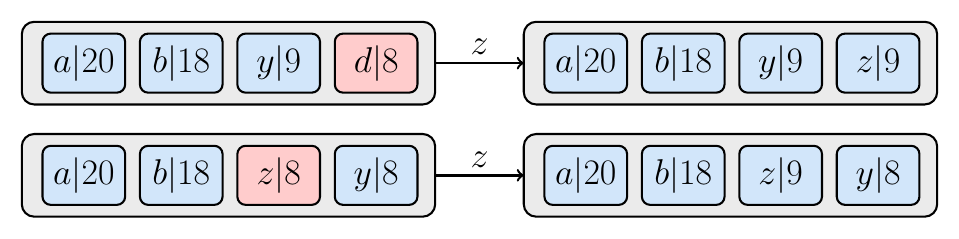}
    \caption{
    Example of a transition from $\sthree$ to $\sone$ when an element $z \in \mathcal{T}\pr \setminus \mathcal{T}$ arrives in the stream. 
    }
    \label{fig:s3_s1}
\end{figure}

\begin{figure}[t]
    \centering
    \includegraphics[width=0.9\linewidth]{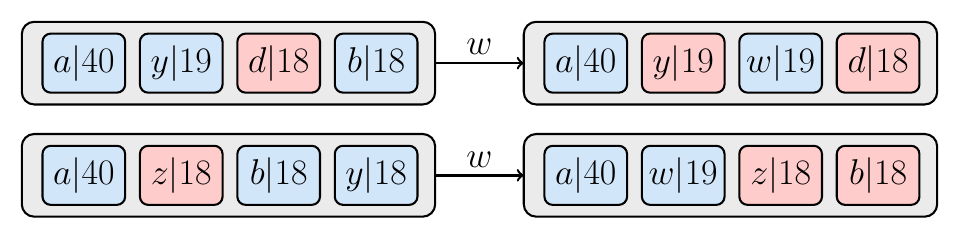}
    \caption{
    Example of a transition from $\sthree$ to $\sfour$ when an element $w \notin \mathcal{T}\cup\mathcal{T}\pr$ arrives in the stream. 
    Note that the eviction elements are (i) not isolated elements and (ii) not identical.
    }
    \label{fig:s3_s4}
\end{figure}

\begin{restatable}{lemma}{lemsfour}
    If the process is in state $\sfour$ at time $t-1$, it can be states $\stwo$, $\sthree$ or $\sfour$ at time $t$.
    \label{lem:sfour}
\end{restatable}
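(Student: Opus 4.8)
The plan is to run the same inductive case analysis used for Lemmas~\ref{lem:sone}, \ref{lem:stwo}, and~\ref{lem:sthree}: assume the pair $(\mathcal{T}_{t-1},C_{t-1})$, $(\mathcal{T}_{t-1}',C_{t-1}')$ lies in $\sfour$, and trace the effect of the common update $x_t$ on both structures. Write $m=\min_{a\in\mathcal{T}_{t-1}}C_{t-1}[a]=\min_{a\in\mathcal{T}_{t-1}'}C_{t-1}'[a]$, so that by $\sfour$(c) the isolated labels carry $C_{t-1}[z_1]=m+1$ and $C_{t-1}[z_2]=C_{t-1}'[z_1']=C_{t-1}'[z_2']=m$; moreover $\sfour$ forces both tables to be full (a table that is not full consists of exactly the distinct labels of its stream, which would shrink one of $W_{t-1},W_{t-1}'$ to size at most one), so an eviction occurs in $X$ exactly when $x_t\notin\mathcal{T}_{t-1}$ and in $X'$ exactly when $x_t\notin\mathcal{T}_{t-1}'$. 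The first step is to split on the location of $x_t$: (A)~$x_t\in\mathcal{T}_{t-1}\cap\mathcal{T}_{t-1}'$; (B)~$x_t\in W_{t-1}$; (C)~$x_t\in W_{t-1}'$; (D)~$x_t\notin\mathcal{T}_{t-1}\cup\mathcal{T}_{t-1}'$. Case~(A) is immediate: both counters of $x_t$ are incremented, the tracked and isolated sets do not change, the minimum stays $m$ because $z_2$ (resp.\ $z_1',z_2'$) still sits there, and $\sfour$(d) survives because $x_t$, lying in the intersection, is neither the eviction-register label of $\mathcal{T}_{t-1}'$ nor --- after its increment --- a min-count label, so the recency-maximiser of the min-count set of $\mathcal{T}_t'$ is unchanged; hence we stay in $\sfour$.

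For cases (B), (C), (D) I would set $x^{*}=\arg\max_{a\in S_t}\tau_{t-1}[a]$ (the $X$-side eviction, when one occurs) and note that, by $\sfour$(d), the $X'$-side eviction (when one occurs) is one of $z_1',z_2'$; write $z^{\dagger}$ for the one that survives. The decisive dichotomy is whether $x^{*}=z_2$ or $x^{*}$ is an intersection label $a^{*}$ at count $m$ (the latter being possible only when such a label exists). In every branch where $x^{*}=z_2$, or where $x_t\in W_{t-1}$ so that $X$ performs no eviction, exactly one isolated label survives on each side after the update; a short count check --- with a sub-case split on whether $z_2$ was the \emph{unique} minimiser of $\mathcal{T}_{t-1}$ --- then places the pair in $\stwo$ when the two surviving isolated counts differ by one, and in $\sthree$ when they coincide, with the $+1$ discrepancy migrated onto the newly shared label $x_t$.

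The complementary branches are where $x^{*}=a^{*}$. Now the evicted intersection label $a^{*}$ is \emph{not} removed from $\mathcal{T}_t'$ and so becomes isolated there; each side therefore again carries two isolated labels ($W_t$ a relabelling of $\{z_1,z_2\}$ and $W_t'=\{z^{\dagger},a^{*}\}$), the intersection counts stay equal off at most one label, and all four isolated counts remain in $\{m,m+1\}$ with the minimum still $m$, giving $\sfour$(a)--(c). I expect the main obstacle to be re-establishing $\sfour$(d) in these branches. The key observation is that $a^{*}$ was selected as $\arg\max_{a\in S_t}\tau_{t-1}[a]$, where $S_t$ contains \emph{every} intersection label at count $m$, so $a^{*}$ is strictly more recent than all of those; since these recencies agree across the two streams by Lemma~\ref{lem:tau_implication} and are untouched by the $x_t$-update, the recency-maximiser of the new min-count set $S_{t+1}'$ of $\mathcal{T}_t'$ must be $a^{*}$ or $z^{\dagger}$, both of which lie in $W_t'$ --- which is exactly $\sfour$(d). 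The remaining work is routine bookkeeping inside the few leaves (counts never leaving $\{m,m+1\}$, no isolated labels beyond the two tracked, intersection counts equal apart from at most one label). Collecting the leaves shows that the only states reachable from $\sfour$ are $\stwo$, $\sthree$, and $\sfour$, and combined with Lemmas~\ref{lem:init_so_ss}, \ref{lem:sone}, \ref{lem:stwo}, and~\ref{lem:sthree} this completes the induction establishing Lemma~\ref{lem:state_eos}.
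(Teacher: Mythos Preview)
Your case analysis matches the paper's: split on whether $x_t$ lies in the intersection, in $W_{t-1}$, in $W_{t-1}'$, or outside both; use $\sfour$(d) to force the $X'$-eviction into $W_{t-1}'$; and branch on whether the $X$-eviction (when one occurs) is $z_2$ or an intersection label at count $m$, landing in $\stwo/\sthree$ in the former branches and re-establishing $\sfour$(d) via Lemma~\ref{lem:tau_implication} in the latter. One minor note: the sub-split on whether $z_2$ is the \emph{unique} minimiser is not what actually decides $\stwo$ versus $\sthree$ in the single-isolated-label branches --- the destination is fixed by which of $z_1,z_2$ survives on the $X$ side (equivalently, by the identity of $x_t$ in case~(B)), and the min-count conditions of $\stwo$/$\sthree$ check out either way --- but this does not affect the correctness of your plan.
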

The proof is relegated to Appendix~\ref{app:proof}.
An important condition of $\sfour$ is that isolated elements from both streams are always in the eviction registers.
This restricts available transitions to $\stwo$ or $\sthree$ only, as, if an element outside the intersection $\mathcal{T}_{t-1}\cap \mathcal{T}_{t-1}\pr$ arrives, the number of isolated elements must decrease.
The resulting state depends on the identify of the element and examples are given in Figures~\ref{fig:s4_s2} \&~\ref{fig:s4_s3}.

We are now in a position to prove the main result.
\begin{proof}[Proof of Lemma~\ref{lem:state_eos}]
    Proceeding inductively, Lemma~\ref{lem:init_so_ss} provides the base case and Lemmas~\ref{lem:sone}-\ref{lem:sfour} provide the subsequent inductive steps.
\end{proof}

\subsection{Privacy}

\begin{figure}
    \centering
    \includegraphics[width=0.9\linewidth]{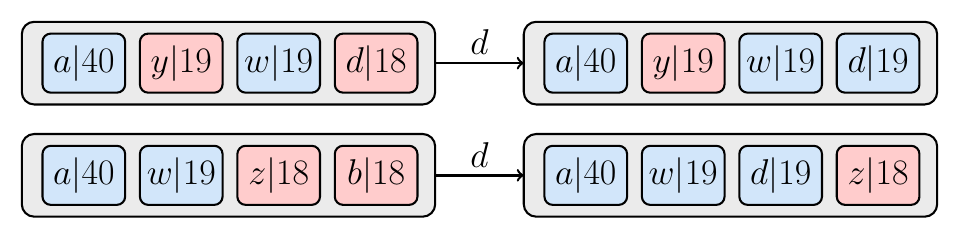}
    \caption{
    Example of a transition from $\sfour$ to $\stwo$ when an element $d \in \mathcal{T} \setminus \mathcal{T}\pr$, where $C[d]= \min_{x \in \mathcal{T}} C[x]$,  arrives in the stream. 
    }
    \label{fig:s4_s2}
\end{figure}

\begin{figure}
    \centering
    \includegraphics[width=0.9\linewidth]{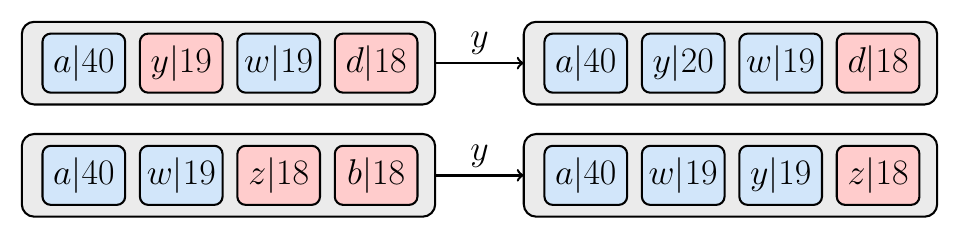}
    \caption{
    Example of a transition from $\sfour$ to $\sthree$ when an element $y \in \mathcal{T} \setminus \mathcal{T}\pr$, where $C[y]= \min_{x \in \mathcal{T}} C[x] +1$,  arrives in the stream. 
    }
    \label{fig:s4_s3}
\end{figure}

The technical foundation for privatizing $\spacesaving$ is captured by Corollary~\ref{cor:state_difference}, 
which offers two pivotal insights into the algorithm’s stability under neighboring inputs.
First, consider the outputs
\begin{align*}
    (\mathcal{T}, C) &\gets \ssx(X, k) \\
    (\mathcal{T}\pr, C\pr) &\gets \ssx(X\pr, k),
\end{align*}
where $X'$ is derived from $X$ by the removal of a single element. 
Corollary~\ref{cor:state_difference}  asserts that among the items common to both tracked sets, that is, $x \in \mathcal{T} \cap \mathcal{T}'$, there exists at most one for which the counters differ, and this discrepancy is bounded by a single increment.

Second, it guarantees that items outside the intersection are associated with low counts, specifically bounded above by the minimum counter  plus one.
These properties together imply that the output of $\spacesaving$ is inherently stable, and thus amenable to privatization via two mechanisms: (1) obscuring the small count differences among shared items, and (2) suppressing isolated items whose counts fall below a threshold.

The first mechanism is realized by injecting $\laplacedist$ noise into the counters of tracked items. Since the sensitivity of the counters restricted to $\mathcal{T} \cap \mathcal{T}'$ is at most one, Lemma~\ref{lem:laplace_vector} ensures that adding independent noise drawn from $\laplacedist(1/\varepsilon)$ suffices to achieve differential privacy.

The second step can be achieved by applying a threshold test that suppresses items with low count.
Let $\gamma$ denote, with probability $1-\delta$, the maximum deviation introduced by $\laplacedist$ noise.
In $\spacesaving$, for a register of length $k$, the minimum count is at most $|X|/k$.
Therefore, it would be sufficient to set the threshold to $|X|/k+1+\gamma$.
However, as our task is to detect items whose frequency is at least $|X|/k$ (heavy hitters), this introduces a tension.
Notably, overly aggressive suppression risks excluding valid heavy hitters that can be reported without breaching privacy.

To reconcile privacy with utility, we initialize a non-private $\spacesaving$ data structure with an expanded capacity $\tilde{k} > k$.
This adjustment ensures that items with minimal counts are unlikely to be heavy hitters, thereby allowing suppression without sacrificing recall. 
Subsequently, it is sufficient to set the threshold to 
\[
    \tau \gets \max \{|X|/k - \gamma,  |X|/\tilde{k} + 1 + \gamma\}.
\]
With high probability, the term $|X|/k - \gamma$ guarantees the admission of heavy hitters and the term $|X|/\tilde{k} + 1 + \gamma$ suppresses isolated items.
This strategy is formalized in Algorithm~\ref{alg:DPSS}.
We now prove the privacy properties.

\begin{algorithm}[t]
  \SetAlgoLined
  \DontPrintSemicolon
  \SetKwProg{myproc}{define}{}{}
  \SetKwInOut{require}{let}
    \myproc{$\mathsf{DPSS}(X, k, \tilde{k}, \varepsilon, \delta)$}{
        $\gamma \gets \tfrac{1}{\varepsilon}\ln\!\left(\tfrac{2}{\delta}\right)$ \;
        $\mathcal{T}, C \gets \ssx(X, \tilde{k})$\;
        $\hat{\mathcal{T}}, \hat{C} \gets \varnothing$\;
        $\tau \gets \max \{|X|/k - \gamma,  |X|/\tilde{k} + 1 + \gamma\}$\;
        \For{$x \in \mathcal{T}$}
        {
            $\hat{c}_x \gets C[x] + \laplacedist(1/\varepsilon)$\;
            \If{$\hat{c}_x > \tau$}
            {
                $\hat{\mathcal{T}} \gets \hat{\mathcal{T}}\cup \{x\}$\;
                $\hat{C}[x] \gets \hat{c}_x$\;
            }
        }
        \KwRet $\hat{\mathcal{T}}, \hat{C}$\;
    }
  \caption{Differentially Private $\spacesaving$}
  \label{alg:DPSS}
\end{algorithm}

\begin{theorem}
    For any $\varepsilon>0$ and $\delta \in (0,1)$, Algorithm~\ref{alg:DPSS} is $(\varepsilon, \delta)$-differenitally private.
    \label{thm:priv_ss}
\end{theorem}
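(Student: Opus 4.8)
The plan is to establish $(\varepsilon,\delta)$-differential privacy by decomposing the output of Algorithm~\ref{alg:DPSS} into two components — the noisy counters of items that survive the threshold, and the set of surviving labels — and bounding the privacy loss of each using the structural characterization in Corollary~\ref{cor:state_difference}. Fix neighboring streams $X \sim X\pr$ with $X = X\pr \cup \{x\}$, and let $(\mathcal{T}, C) \gets \ssx(X, \tilde{k})$ and $(\mathcal{T}\pr, C\pr) \gets \ssx(X\pr, \tilde{k})$. By Corollary~\ref{cor:state_difference}, $|\mathcal{T} \cap \mathcal{T}\pr| \ge \tilde{k}-2$; on the intersection, the two counter vectors differ in at most one coordinate and by at most $1$ there; and every item outside the intersection has counter value at most $\min_{a \in \mathcal{T}\pr} C\pr[a] + 1 \le |X\pr|/\tilde{k} + 1 \le |X|/\tilde{k} + 1$.

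First I would handle the counters on $\mathcal{T} \cap \mathcal{T}\pr$. Restricted to these shared coordinates, the map $X \mapsto (C[a])_{a \in \mathcal{T}\cap\mathcal{T}\pr}$ has $\ell_1$-sensitivity at most $1$, so by Lemma~\ref{lem:laplace_vector} the noisy vector $(\hat c_a)_{a \in \mathcal{T}\cap\mathcal{T}\pr}$ is $(\varepsilon,0)$-DP. The remaining work is to argue that the \emph{isolated} items — those in $\mathcal{T} \setminus \mathcal{T}\pr$ (or vice versa) — are suppressed by the threshold except with probability $\delta$, so that their potential appearance in the output contributes only $\delta$ to the privacy loss. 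There are at most two such items in each direction; each has true counter value at most $|X|/\tilde{k} + 1$, and it is released iff $\hat c_x = C[x] + \laplacedist(1/\varepsilon) > \tau = \max\{|X|/k,\ |X|/\tilde{k} + 1 + \gamma\} \ge |X|/\tilde{k} + 1 + \gamma$. Hence the release event for an isolated item requires the Laplace draw to exceed $\gamma = \tfrac1\varepsilon \ln(2/\delta)$; using $\prob[\laplacedist(1/\varepsilon) > \gamma] = \tfrac12 e^{-\varepsilon\gamma} = \tfrac{\delta}{4}$, a union bound over the (at most two) isolated items on the side with more mass gives failure probability at most $\delta/2$ — leaving room for the standard $(\varepsilon,\delta)$ accounting (one may need the $\delta/2$ slack on each neighbor, or simply absorb the constant).

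Putting it together: condition on the event $E$ that no isolated item of either stream crosses the threshold, which holds with probability $\ge 1-\delta$ under both $X$ and $X\pr$. On $E$, the released label set is a deterministic function of which shared counters exceed $\tau$, hence a post-processing of the noisy shared-counter vector, and the released counts are exactly those noisy values; so on $E$ the mechanism is a post-processing of an $(\varepsilon,0)$-DP quantity. A now-routine argument (e.g. the standard "condition on a high-probability event" lemma, splitting any output set $Z$ according to whether the realized output is consistent with $E$) upgrades this to $(\varepsilon,\delta)$-DP for the full mechanism. I would write: for measurable $Z$, $\prob[\mathcal{M}(X)\in Z] \le \prob[\mathcal{M}(X)\in Z \wedge E] + \prob[\neg E] \le e^\varepsilon \prob[\mathcal{M}(X\pr)\in Z \wedge E] + \delta \le e^\varepsilon \prob[\mathcal{M}(X\pr)\in Z] + \delta$, where the middle inequality uses that on $E$ the output is a post-processing of the $(\varepsilon,0)$-DP shared-counter release.

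The main obstacle is the bookkeeping in the middle step: making precise that, \emph{on the good event}, the entire output (labels and counts) is a function of the shared noisy counters alone, even though the noise on the isolated counters is also drawn. The cleanest route is to note that the random variables $(\hat c_a)_{a \in \mathcal{T}\cap\mathcal{T}\pr}$ and the noise on isolated counters are independent, and that on $E$ the isolated noisy counts are all $\le \tau$ so they never enter the output; one should also be careful that $\mathcal{T}$ and $\mathcal{T}\pr$ are themselves data-dependent, so "the shared coordinates" differ between the two runs — but Corollary~\ref{cor:state_difference} guarantees the symmetric difference is tiny and the released set on $E$ is supported on $\mathcal{T}\cap\mathcal{T}\pr$, which reconciles the two views. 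A secondary point to verify is that the threshold $\tau$ indeed dominates $|X|/\tilde{k}+1+\gamma$ for both $X$ and $X\pr$ (immediate since $|X\pr| \le |X|$), so the suppression bound applies uniformly.
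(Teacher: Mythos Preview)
Your proposal is correct and follows essentially the same approach as the paper's proof: both invoke Corollary~\ref{cor:state_difference} to bound the symmetric difference and the counter discrepancy on the intersection, use the Laplace mechanism (Lemma~\ref{lem:laplace_vector}) on the shared coordinates with $\ell_1$-sensitivity~$1$, show via Laplace tail bounds that isolated items are suppressed by the threshold except with probability~$\delta$, and conclude by conditioning on the good event and invoking post-processing. The paper's union bound counts ``at most two unstable labels and two independent runs'' to obtain $2\cdot 2\cdot\tfrac12 e^{-\varepsilon\gamma}=\delta$, matching your observation that slack is needed on both neighbors; the remaining differences are cosmetic.
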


\begin{proof}
For any fixed set of input parameters $(k, \tilde{k})$ define
\[
    \mathcal{A}(X) = \mathsf{DPSS}(X, k, \tilde{k}, \varepsilon, \delta).
\]
We prove that for any pair of neighboring inputs $X,X'$ (where $X'$ is obtained by removing a single element from $X$), and any measurable subset $S$ of outputs,
\[
\Pr[\,\mathcal{A}(X)\in S\,] \;\le\; e^{\varepsilon}\,\Pr[\,\mathcal{A}(X')\in S\,] \;+\; \delta.
\]

Define the ``stable selection'' event $E$ as the event that no item in the symmetric difference
\(
U := \mathcal{T}\triangle\mathcal{T}'
\)
exceeds the threshold in either run.
That is,
\[
E \;:=\;
\bigl[\,
\forall u\in U\cap \mathcal{T}:\; C[u]+Z_u \le \tau
\ \text{ and }\ 
\forall u\in U\cap \mathcal{T}':\; C'[u]+Z'_u \le \tau
\,\bigr],
\]
where $Z_u$ is the Laplace noise.
As $\spacesaving$ is composed of $\tilde{k}$ counters that sum to $|X|$, the minimum count is at most $|X|/\tilde{k}$.
Therefore, by Corollary~\ref{cor:state_difference} and Lemma~\ref{lem:ss_bound}, $|U|\le 2$ and each such $u\in U$ has \emph{true} count at most $|X|/\tilde{k}+1$.
Since $\tau \ge |X|/\tilde{k} + 1 + \gamma$, the margin from the minimum count to the threshold is at least $\gamma$ for every such $u \in U$.
For Laplace noise $Z\sim \mathrm{Lap}(1/\varepsilon)$ we have the standard tail bound
\(
\Pr[\,Z > \gamma\,] = \tfrac{1}{2}e^{-\varepsilon\gamma}.
\)
Applying a union bound over at most two unstable labels and two independent runs yields
\[
\Pr[\,\neg E\,]
\;\le\;
2 \cdot 2 \cdot \tfrac{1}{2}\, e^{-\varepsilon\gamma}
\;=\;
2\, e^{-\varepsilon\gamma}
\;\le\;
\delta,
\]
where the last inequality uses $\gamma = \frac{1}{\varepsilon}\ln\!\left(\frac{2}{\delta}\right)$.

Conditioned on the event $E$,
the released index sets $\hat{\mathcal{T}}$ and $\hat{\mathcal{T}}'$ are contained in $\mathcal{T}\cap \mathcal{T}'$.
Let $m := |\mathcal{T}\cap \mathcal{T}'|\ge \tilde{k}-2$, and fix an arbitrary but common ordering of the labels in $\mathcal{T}\cap \mathcal{T}'$.
Define the vector-valued function
\(
F(X) \in \mathbb{R}^m
\)
as the ordered list of common counters
\(
F(X) := (\, C[x] \,)_{x\in\mathcal{T}\cap\mathcal{T}'},
\)
and similarly $F(X') := (\, C'[x] \,)_{x\in\mathcal{T}\cap\mathcal{T}'}$.
By Corollary~\ref{cor:state_difference}, $F(X)$ and $F(X')$ differ by at most $1$ in a single coordinate and are equal elsewhere. 
Thus, 
\(
\|F(X)-F(X')\|_1 \le 1.
\)

Let $Z=(Z_x)_{x\in\mathcal{T}\cap\mathcal{T}'}$ and $Z'=(Z'_x)_{x\in\mathcal{T}\cap\mathcal{T}'}$ be noise vectors with $Z_x,Z'_x\sim \mathrm{Lap}(1/\varepsilon)$, independent across coordinates.
Then, by Lemma~\ref{lem:laplace_vector}, for all measurable $A\subseteq \mathbb{R}^m$,
\begin{equation}\label{eq:laplace-mech}
    \Pr[\,F(X) + Z \in A\,] \;\le\; e^{\varepsilon}\, \Pr[\,F(X\pr) + Z\pr \in A\,].
\end{equation}

Now define the (deterministic) post-processing map $\Phi$
that keeps precisely those coordinates of $F(X) + Z$ whose value exceeds $\tau$, together with their labels.
Conditioned on $E$, 
by the post-processing property of differential privacy, applying $\Phi$ preserves Inequality~\eqref{eq:laplace-mech}.
That is, for all measurable $S,$
\begin{equation}\label{eq:cond-dp}
    \Pr[\, \Phi(F(X) + Z) \in S \mid E \,]
    \;\le\;
    e^{\varepsilon}\,
    \Pr[\, \Phi(F(X\pr) + Z\pr) \in S \mid E \,].
\end{equation}

Condition on $E$, we have $\mathcal{A}(X) = \Phi(F(X) + Z)$.
Putting everything together,
using~\eqref{eq:cond-dp}, $\Pr[E]\le 1$, and $\Pr[\neg E]\le \delta$, we obtain
\begin{align*}
\Pr[\,\mathcal{A}(X)\in S\,]
&= \Pr[\,\mathcal{A}(X)\in S \mid E\,]\,\Pr[E] \\
&+ 
\Pr[\,\mathcal{A}(X)\in S \mid \neg E\,]\,\Pr[\neg E] \\
&\le
e^{\varepsilon}\,\Pr[\,\mathcal{A}(X')\in S \mid E\,]\,\Pr[E] \\
&+
\Pr[\,\mathcal{A}(X)\in S \mid \neg E\,]\,\Pr[\neg E]
\\
&\le
e^{\varepsilon}\,\Pr[\,\mathcal{A}(X')\in S\, \mid E]\Pr[E]
\;+\;
\Pr[\neg E]
\\
&\le
e^{\varepsilon}\,\Pr[\,\mathcal{A}(X')\in S\,]
\;+\;
\Pr[\neg E] \\
&\le
e^{\varepsilon}\,\Pr[\,\mathcal{A}(X')\in S\,]
\;+\;
\delta \\
\end{align*}
The same analysis holds when exchanging the roles of $X$ and $X'$.
Thus, Algorithm~\ref{alg:DPSS} satisfies $(\varepsilon,\delta)$-differential privacy.
\end{proof}

\subsection{Performance and Utility}

Algorithm~\ref{alg:DPSS} inherits the efficiency of the underlying $\spacesaving$ algorithm, offering strong performance guarantees in both time and space.
Using an implementation based on hash tables and doubly linked lists,  Algorithm~\ref{alg:DPSS} can be implemented with $O(\tilde{k})$ space complexity and can process each stream update in $O(1)$ time \cite{cormode2008finding}. 
This ensures that the algorithm scales gracefully with high-throughput data streams.

In practice, for large streams, the bound on the minimum counter value $|X|/\tilde{k}$ dominates the noise parameter $\gamma = \frac{1}{\varepsilon}\ln\left(\frac{2}{\delta}\right)$. 
Consequently, it is sufficient to set $\tilde{k} = O(k)$ without compromising utility. 
Thus, the additional space does not come at an asymptotical cost.

For example, consider a stream of length $|X| = 2^{28}$, with $k = 512$, and privacy parameters $\varepsilon = 0.1$ and $\delta = 10^{-3}$.
We subsequently choose $\tilde{k} > k$ to ensure that the suppression threshold does not exclude true heavy hitters.
Recall the threshold used by Algorithm~\ref{alg:DPSS}:
\[
\tau \;=\; \max\Bigl\{\, |X|/k - \gamma,\; |X|/\tilde{k} + 1 + \gamma \,\Bigr\},
\quad\text{where}\quad
\gamma \;=\; \frac{1}{\varepsilon}\ln\!\left(\frac{2}{\delta}\right).
\]
To ensure $\tau = |X|/k - \gamma$, so heavy hitters are not suppressed, we require
\[
\frac{|X|}{\tilde{k}} + 1 + 2\gamma \le \frac{|X|}{k}
\quad\Rightarrow\quad
\tilde{k} \ge \frac{|X|}{\frac{|X|}{k} - (1 + \gamma)}.
\]
Plugging in values, with $\gamma = 89.87$, we get that it is sufficient to set $\tilde{k}$ to 513.

These observations are summarized in the following result.
\begin{restatable}{theorem}{thmutility}
    For any $\varepsilon>0$ and $\delta\in(0,1)$, let $\gamma=\tfrac{1}{\varepsilon}\ln\left(\tfrac{2}{\delta}\right)$.
    Suppose the input stream $X$ and target capacity $k$ satisfy
    \[
        |X|/(2k)>2(\gamma+1).
    \]
    Then Algorithm~\ref{alg:DPSS} can be implemented with $\mathcal{O}(k)$ words of memory and $\mathcal{O}(1)$ update time.
    For any item $x\in\hat{\mathcal{T}}$ with true frequency $f_x$, with probability at least $1-\delta$, the released estimate $\hat{C}[x]$ obeys
    \[
    f_x \;-\; \tfrac{1}{\varepsilon}\ln\tfrac{1}{\delta}
    \;\le\; \hat{C}[x]
    \;\le\; f_x + \frac{|X|}{2k} + \tfrac{1}{\varepsilon}\ln\tfrac{1}{\delta}.
    \]
    In addition, any item $y$ with $f_y>|X|/k$ is released (that is, $y\in\hat{\mathcal{T}}$) with probability at least $1-\delta$.
\end{restatable}
\noindent
The proof is relegated to Appendix~\ref{app:sub_utility}.
Notably, there are inherent accuracy lower bounds for heavy-hitter mechanisms.
In the streaming setting, any method that outputs at most \(k\) items has a worst-case input on which the error is at least \(|X|/(k+1)\) \cite{bose2003bounds}. 
When differential privacy is imposed, the error must additionally scale as \(\mathcal{O}\!\left(\tfrac{1}{\varepsilon}\log\!\left(\tfrac{1}{\delta}\right)\right)\) for some inputs~\cite{balcer2017differential}.
Therefore, the error of our mechanism is asymptotically optimal.

\section{Frequent Items with a Frequency Oracle}
\label{sec:lin_sketch}

Linear sketches with $\diffpriv$ frequency estimates have been used as frequency oracles in  resource intensive (relative to the streaming model of computation) heavy hitter search algorithms. 
A simple way to bypass central server processing of frequency oracles would be to track a compact set of candidates online, akin to counter-based approaches.
However, this strategy is not automatically label-private.
Privacy leakage can occur, for neighboring streams $X = X\pr \cup \{y\}$, when $y$ appears as a tracked item in $X$ but not $X\pr$.
For example, imagine $y$ is the last item in stream $X$.
At time $|X|-1$, the tracked item sets have the same output distributions.
However, after the appearance of $y$ on $X$, the output distributions of the tracked item sets can differ significantly.

To mitigate this concern, 
we show that a simple wrapper, parameterized by a capacity $\tilde{k}>k$ and a single threshold tied to the oracle's error envelope, converts any $\diffpriv$ frequency estimator into a label-private heavy-hitter mechanism in the single-observation model.
The wrapper comes at an additional memory cost of $\mathcal{O}(\tilde{k})$.

\subsection{Overview}
The streaming data structure $\mathcal{D}$ underlying the wrapper requires the following two properties.
First, at time $t$, $\mathcal{D}$ provides frequency estimate $\tilde{f}_t(x)$ for stream arrival $x$, and admits the error bound
\[
\gamma_1(t)\ \ge\ \tilde{f}_t(x)-f_t(x)\ \ge\ \gamma_2(t),
\]
with probability $1-\delta/T$,
where $\gamma_1,\gamma_2:\mathbb{N}\to\mathbb{R}$ are non-decreasing in $t$. 
We refer to this error bound as the \emph{Error Envelope}.
Here $f_t(x)$ is the exact frequency of $x$ in the prefix of length $t$.
Second, all frequency estimates in $\mathcal{D}$ are $\varepsilon$-DP under single observation.

Similar to the $\spacesaving$ strategy, for target heavy hitter capacity $k$, we choose $\tilde{k}>k$ to isolate low-count, unstable labels.
We maintain, throughout the stream, a tracked set $\mathcal{T}$ of at most $\tilde{k}$ items whose arrival estimates are top-$\tilde{k}$. 
At release time $T$, we apply a single threshold
\[
\tau(T)\ :=\ \frac{T}{\tilde{k}}\ +\ \gamma_2(T)\ +\ 2\gamma_1(T) ,
\]
and return only those tracked items whose final estimate exceeds $\tau(T)$. 

Let $\mathcal{T}$ and $\mathcal{T}\pr$, respectively, denote the tracked item sets on neighboring $X$ and $X\pr$.
Intuitively, if $y$ appears in $\mathcal{T}$ but not $\mathcal{T}\pr$, then when $y$ last failed to be tracked on $X\pr$ (at some time $t<T$), its maximum \emph{true} frequency is bounded by $\frac{t}{\tilde{k}}+\gamma_2(t)+\gamma_1(t)\le \tau(T) + \gamma_1(T)$.
Therefore, unstable labels are suppressed by the threshold, while labels that are admitted by the threshold are protected by $\varepsilon$-DP through their frequency estimates.

\begin{algorithm}[t]
  \SetAlgoLined
  \DontPrintSemicolon
  \SetKwProg{myproc}{define}{}{}
  \SetKwInOut{require}{require}
    \myproc{$\mathsf{Top-}\tilde{k}(\mathcal{D}, X, \tilde{k}, \varepsilon)$}{
        Initialize $\varepsilon$-DP frequency estimator $\mathcal{D}$\;
        $\mathcal{T}\gets \varnothing$\;
        $C \gets \varnothing$\;
        \For{$x \in X$}
        {
            $\tilde{f}_x \gets \mathcal{D}.\mathsf{update}(x)$\;
            \uIf{$x\in\mathcal{T}$}{
                $C[x] \gets \tilde{f}_x$\;
            }\uElseIf{$|\mathcal{T}|<\tilde{k}$}{
                $\mathcal{T} \gets \mathcal{T} \cup \{x\}$\;
            }\uElse{
                $y \gets \arg \min_{z \in \mathcal{T}} C[z]$\;
                \If{$\tilde{f}_x > C[y]$}{
                  $\mathcal{T} \gets (\mathcal{T}\setminus \{(y,m)\}) \cup \{(x,\tilde{f}_x)\} $\;
            }
          }
        }
        \KwRet ${\mathcal{T}}, C$\;
    }
  \caption{Tracking Top-$\tilde{k}$ frequent items}
  \label{alg:topk}
\end{algorithm}

\begin{algorithm}[t]
  \SetAlgoLined
  \DontPrintSemicolon
  \SetKwProg{myproc}{define}{}{}
  \SetKwInOut{require}{require}
  \require{
    $\mathcal{D}$ such that 
    $\gamma_1(t)\ \ge\ \tilde{f}_t(x)-f_t(x)\ \ge\ \gamma_2(t)$
    }
    \myproc{$\mathsf{EEHH}(\mathcal{D}, X, k, \tilde{k}, \varepsilon, \delta)$}{
        $\mathcal{T}, C\gets \mathsf{Top-}\tilde{k}(\mathcal{D}, X, \tilde{k}, \varepsilon)$\tcp*{Algorithm~\ref{alg:topk}}
        $T\gets |X|$\;
        $\hat{\mathcal{T}}, \hat{C} \gets \varnothing$\;
        $\tau \gets \max\{T/k, T/\tilde{k} + 2\gamma_1(T) + \gamma_2(T)\}$\;
        \For{$x \in \mathcal{T}$}
        {
            $\tilde{f} \gets \mathcal{D}.\mathsf{query}(x)$\;
            \If{$C[x] > \tau \land \tilde{f} > \tau$}
            {
                $\hat{\mathcal{T}} \gets \hat{\mathcal{T}}\cup \{x\}$\;
                $\hat{C}[x] \gets \tilde{f}$\;
            }
        }
        \KwRet $\hat{\mathcal{T}}, \hat{C}$\;
    }
  \caption{Error Envelope Heavy Hitters}
  \label{alg:EEHH}
\end{algorithm}

\subsection{Privacy}

We now formalize the concept of suppressing of unstable labels on neighboring streams.

\begin{lemma}[Suppression of unstable labels]
\label{lem:suppress}
Let $\mathcal{D}$ be a data structure that estimates item frequency of stream arrival $x$ at time $t$ with the following error bounds
\[
\gamma_1(t)\ \ge\ \tilde{f}_t(x)-f_t(x)\ \ge\ \gamma_2(t).
\]
Let $X,X'$ be neighboring streams of length $T$, and let $\mathcal{T}, \mathcal{T}\pr$ be the tracked sets produced by Algorithm~\ref{alg:topk}. 
Fix $y\in \mathcal{T}\setminus \mathcal{T}\pr$. 
Let $t-1<T$ be the last time at which $y$ appeared in either $X\pr$ or $\mathcal{T}\pr$ (whichever is later). 
Then
\[
f_t(y)\ \le\ \frac{t}{\tilde{k}}+\gamma_2(t)+\gamma_1(t).
\]
The symmetric case $y\in \mathcal{T}\pr\setminus \mathcal{T}$ is analogous.
\end{lemma}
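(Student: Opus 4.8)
The strategy is to locate the ``critical time'' $t$ supplied by the statement, show that at that instant the run of Algorithm~\ref{alg:topk} on $X'$ has $y$ occupying (or being turned away by) its minimum-estimate slot among $\tilde k$ tracked items, bound that minimum by an averaging argument, and finally translate the estimate bound into a true-frequency bound using the error envelope together with the monotonicity of $\gamma_1,\gamma_2$. Throughout I condition on the good event that all $\le T$ per-step estimates lie within their envelopes, which holds with probability $1-\delta$ by a union bound over the pass. (If $y$ never appears in $X'$ nor in $\mathcal{T}'$ at all, the single neighbouring edit forces $f_T(y)\le 1$ and a direct check suffices, so assume such a $t$ exists.)

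\textbf{Step 1: locate $y$ in the $X'$-run at time $t$.} By maximality of $t$, after time $t$ the item $y$ neither arrives in $X'$ nor belongs to $\mathcal{T}'_s$. I would distinguish the two ways this can be forced at time $t$ (resp.\ $t+1$): either (i) $y\in\mathcal{T}'_t$ but $y$ is evicted at step $t+1$, which by the replacement rule of Algorithm~\ref{alg:topk} requires $y=\arg\min_{z\in\mathcal{T}'_t}C'[z]$; or (ii) $t$ is an arrival of $y$ in $X'$ with $y\notin\mathcal{T}'_{t-1}$ that fails to be inserted, which (the tracker being full) requires the fresh estimate $\tilde f_t(y)$ to be at most $\min_{z\in\mathcal{T}'_{t-1}}C'[z]$. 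In either case the relevant estimate of $y$ is at most $M:=\min_z C'[z]$, the smallest stored estimate among the $\tilde k$ tracked items at that time. The care here is to pin down when $C'[y]$ was last refreshed — it is the last arrival of $y$ while tracked, call it $s_y\le t$ — and to use maximality of $t$ to rule out $y$ re-entering $\mathcal{T}'$ afterwards.

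\textbf{Step 2: averaging, then envelope, then lift to $T$.} Each stored value is $C'[z]=\tilde f_{s_z}(z)$ for some $s_z\le t$; by the upper envelope and the monotonicity of $\gamma_1$ and of frequencies, $C'[z]\le f_{s_z}(z)+\gamma_1(s_z)\le f_t(z)+\gamma_1(t)$. Summing over the $\tilde k$ tracked items and using $\sum_z f_t(z)\le t$ gives $\tilde k\,M\le t+\tilde k\,\gamma_1(t)$, i.e.\ $M\le t/\tilde k+\gamma_1(t)$. Since $y$ does not arrive in $X'$ after $s_y$, we have $f_t(y)=f_{s_y}(y)$, and applying the lower envelope to $y$'s estimate (which is $\le M$ by Step 1) together with monotonicity of $\gamma_2$ yields $f_t(y)\le t/\tilde k+\gamma_1(t)+\gamma_2(t)$. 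Finally, $f(y)$ does not grow on $X'$ after $t$, the single neighbouring edit perturbs $y$'s count by at most one occurrence, and $t\mapsto t/\tilde k$, $\gamma_1$, $\gamma_2$ are non-decreasing, so $f_T(y)\le t/\tilde k+\gamma_1(t)+\gamma_2(t)\le T/\tilde k+\gamma_1(T)+\gamma_2(T)=\tau(T)$. The case $y\in\mathcal{T}'\setminus\mathcal{T}$ is identical with $X$ and $X'$ exchanged.

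\textbf{Main obstacle.} The averaging and envelope inequalities are routine once set up; the delicate part is Step~1, namely establishing rigorously that at the critical time $y$ holds (or is blocked by) the minimum stored estimate, while the stored estimates $C'[z]$ are \emph{stale} snapshots taken at different times $s_z\le t$ — this forces one to enumerate the sub-cases where $y$ was evicted, was never admitted, or bounced in and out of $\mathcal{T}'$, and to track exactly when each $C'[z]$ was last refreshed. This is precisely why the monotonicity hypotheses on $\gamma_1,\gamma_2$ are needed, and it is also where one must be careful to orient the lower-envelope inequality so that the $\gamma_2$ slack lands on the intended side.
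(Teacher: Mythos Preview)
Your proposal is correct and follows the paper's strategy: at the critical time $t$, bound $y$'s estimate by the minimum stored value in $\mathcal{T}'$, upper-bound that minimum via an averaging/pigeonhole argument over the $\tilde k$ tracked items, convert to a true-frequency bound via the error envelope, and lift to time $T$ using the monotonicity of $\gamma_1,\gamma_2$. Your version is more careful than the paper's terse three-sentence sketch --- you explicitly separate the eviction-versus-rejection cases and address the staleness of the stored values $C'[z]=\tilde f_{s_z}(z)$, which the paper elides --- and you allocate $\gamma_1$ to the averaging step and $\gamma_2$ to the envelope conversion for $y$ (the paper nominally swaps these roles), but both routes land on the same $\gamma_1+\gamma_2$ total slack.
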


\begin{figure*}[t]
  \centering

  \begin{subfigure}[t]{0.31\textwidth}
    \centering
    \includegraphics[width=\linewidth]{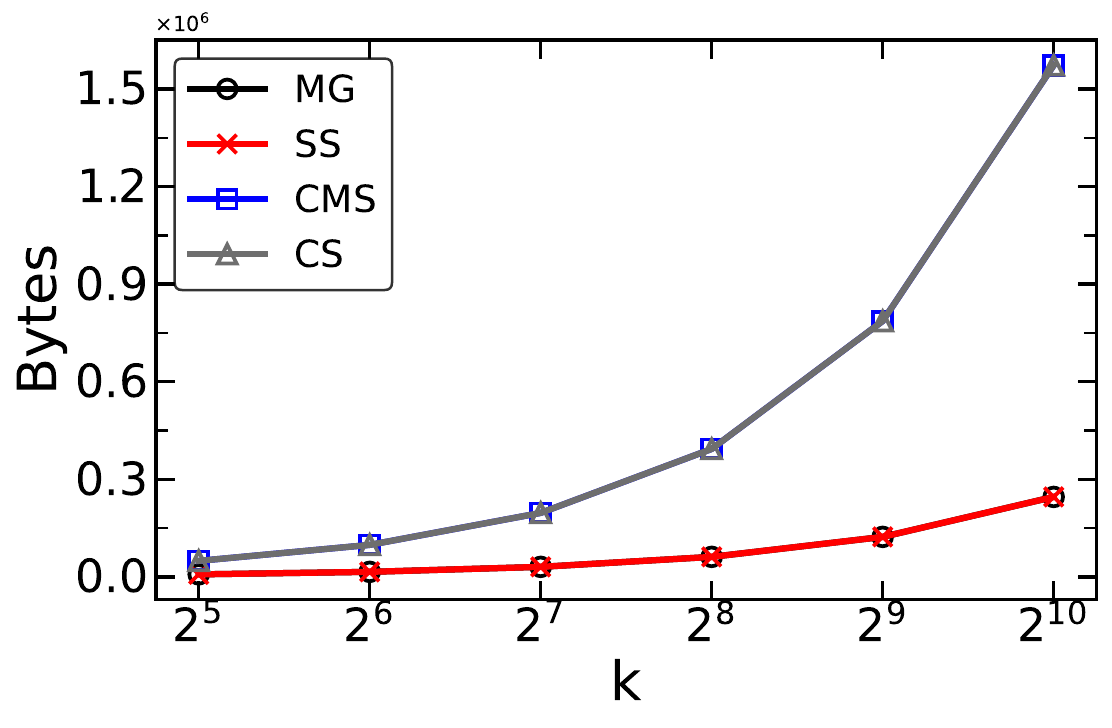}
    \caption{Memory vs $k$. Skew is $1.1$ and $\varepsilon = 0.1$.}
    \label{fig:syn_k_bytes}
  \end{subfigure}\hfill
  \begin{subfigure}[t]{0.31\textwidth}
    \centering
    \includegraphics[width=\linewidth]{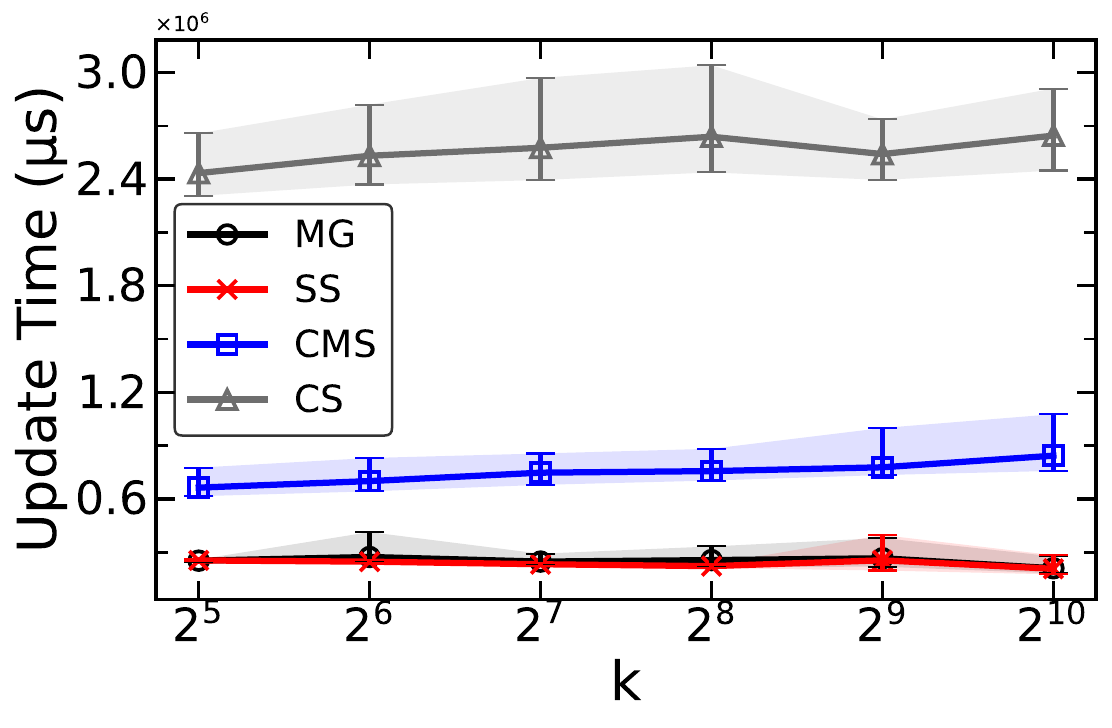}
    \caption{Time vs $k$. Skew is $1.1$ and $\varepsilon = 0.1$.}
    \label{fig:syn_k_update}
  \end{subfigure}\hfill
  \begin{subfigure}[t]{0.31\textwidth}
    \centering
    \includegraphics[width=\linewidth]{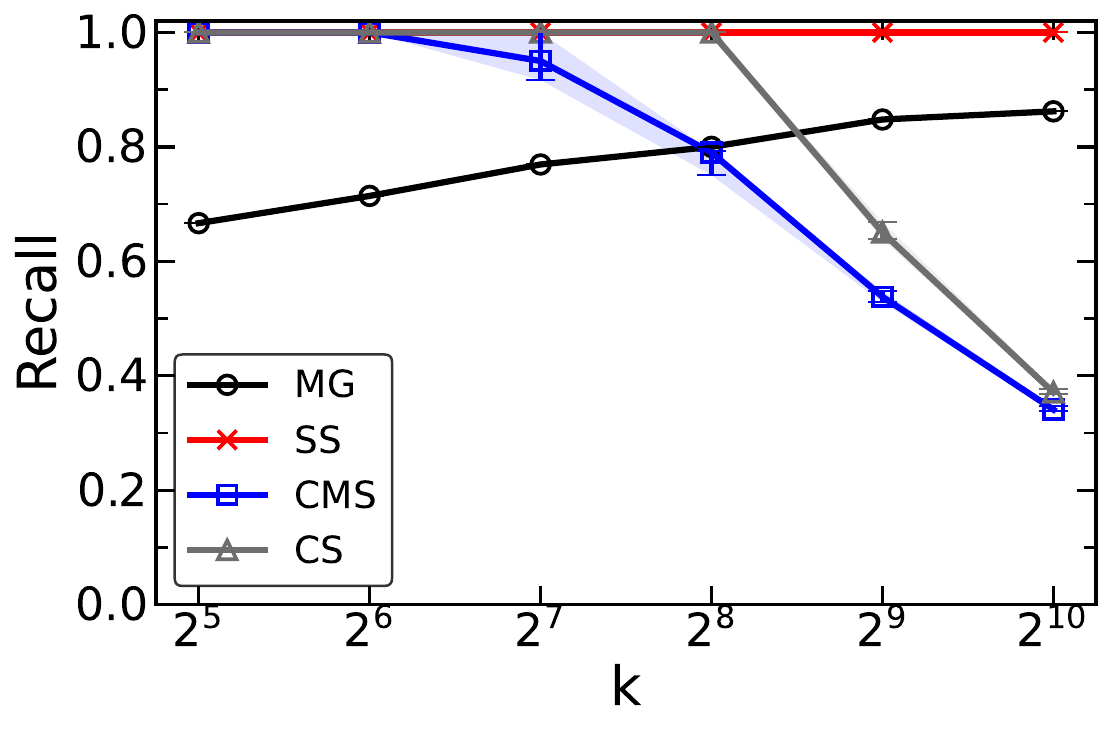}
    \caption{Recall vs $k$. Skew is $1.1$ and $\varepsilon = 0.1$.}
    \label{fig:syn_k_recall}
  \end{subfigure}

  \vspace{0.6em}

  \begin{subfigure}[t]{0.31\textwidth}
    \centering
    \includegraphics[width=\linewidth]{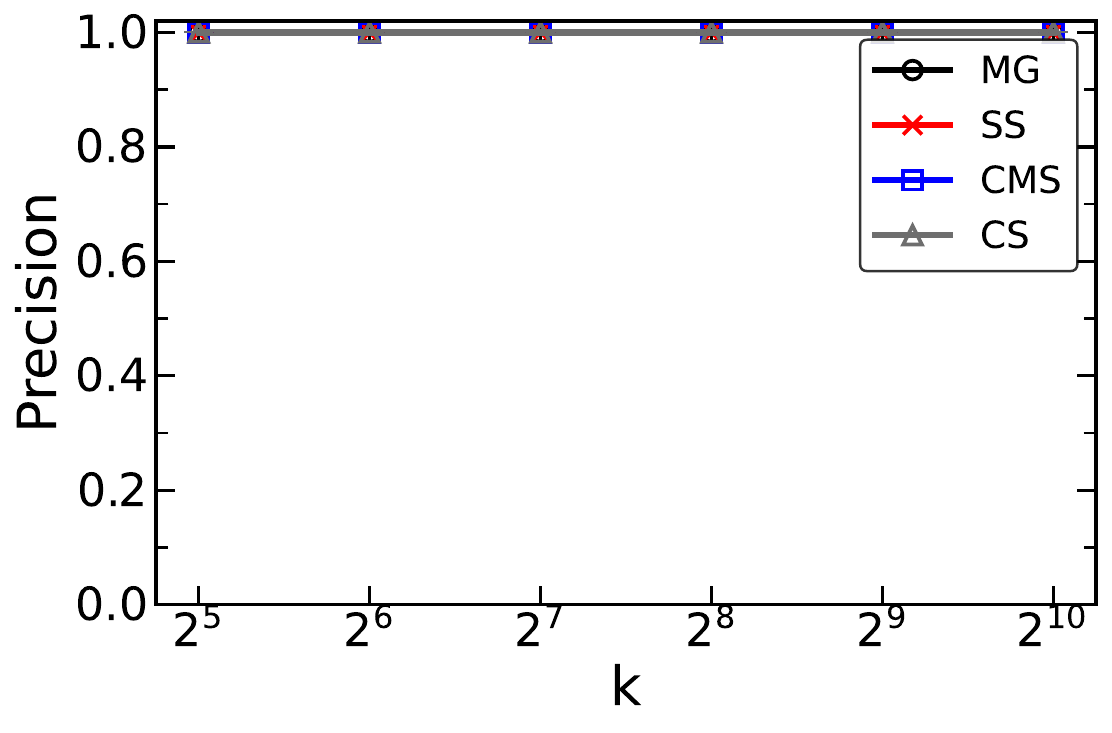}
    \caption{Precision vs $k$. Skew is $1.1$ and $\varepsilon = 0.1$.}
    \label{fig:syn_k_precision}
  \end{subfigure}\hfill
  \begin{subfigure}[t]{0.31\textwidth}
    \centering
    \includegraphics[width=\linewidth]{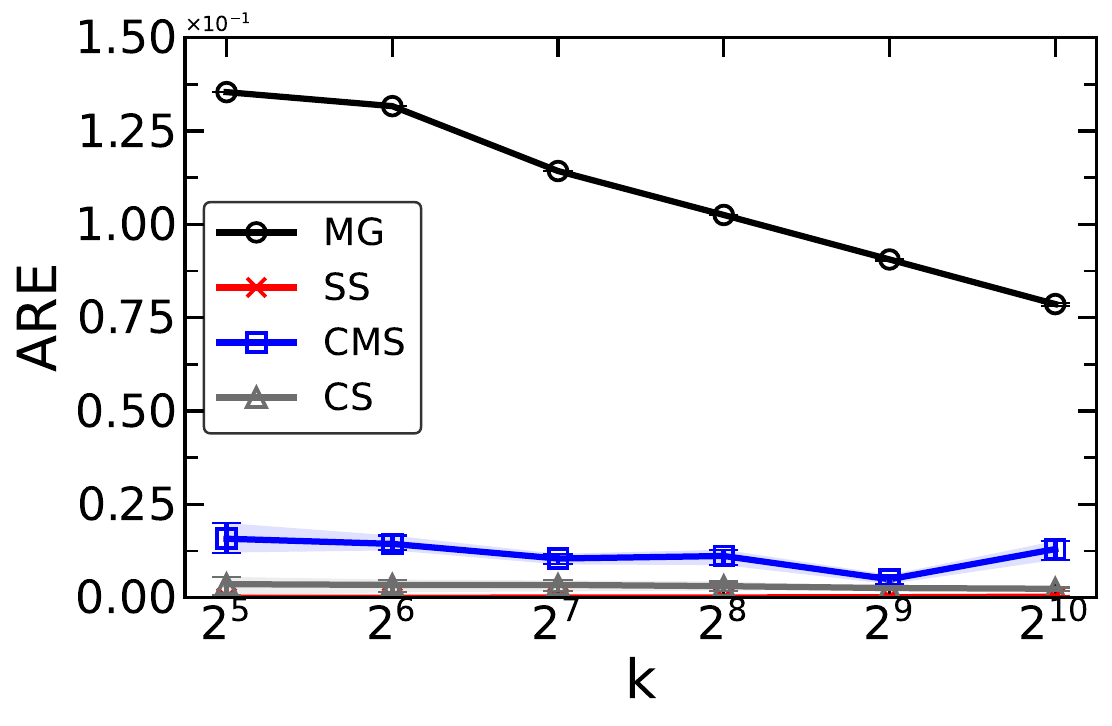}
    \caption{ARE vs $k$. Skew is $1.1$ and $\varepsilon = 0.1$.}
    \label{fig:syn_k_ARE}
  \end{subfigure}\hfill
  \begin{subfigure}[t]{0.31\textwidth}
    \centering
    \includegraphics[width=\linewidth]{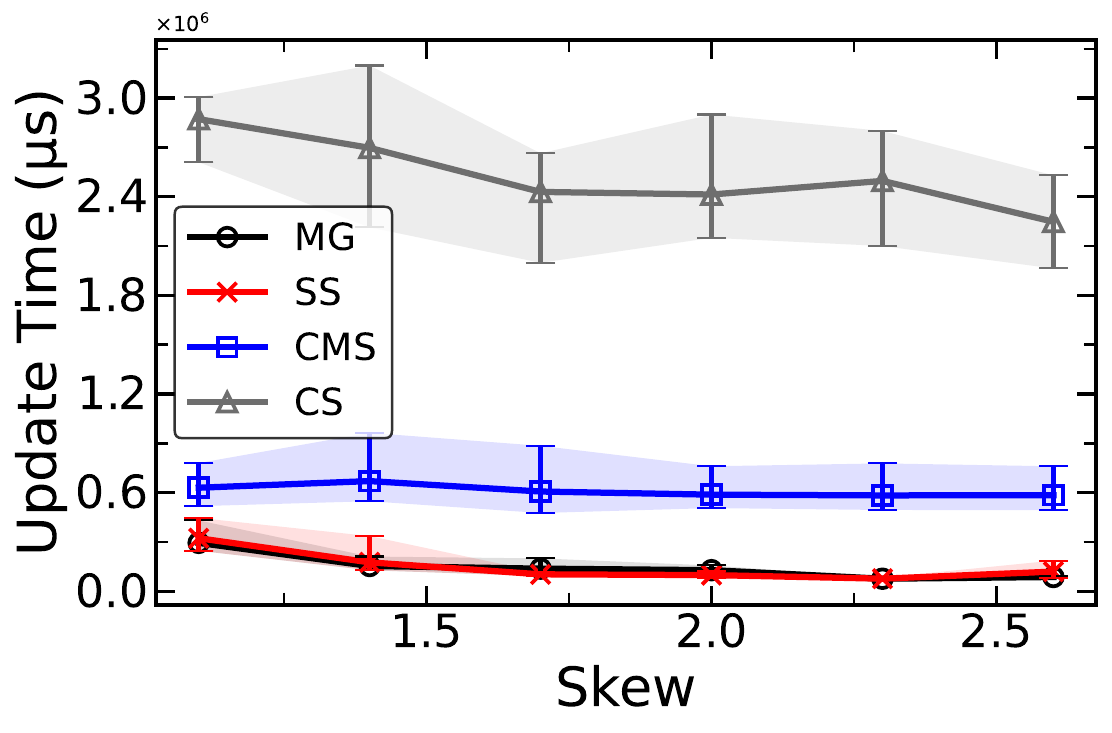}
    \caption{Time vs skew. Parameters: $k = 128$, $\varepsilon = 0.1$.}
    \label{fig:syn_skew_update}
  \end{subfigure}

  \caption{Experiments on synthetic data}
  \label{fig:syn_k}
\end{figure*}

\begin{proof}
At time $t$, every label in $\mathcal{T}\pr$ had estimate at least the $\tilde{k}$-th order statistic.
The maximum value of the $\tilde{k}$-th order statistic is bounded above by ${t}/{\tilde{k}}+\gamma_2(t)$: at most $\tilde{k}$ labels can have true frequency exceeding $t/\tilde{k}$, and the sketch's lower error bound inflates estimates by no more than $\gamma_1(t)$. 
Meanwhile, $y$'s estimate can be below its exact frequency by up to $\gamma_2(t)$. 
Combining, and using the monotonicity of $\gamma_1,\gamma_2$, yields the stated bound on $\tilde{f}_T(y)$ at release time. 
\end{proof}

As $\gamma_1(T)$ provides an upper bound on the error of each estimate in Algorithm~\ref{alg:EEHH},
by Lemma~\ref{lem:suppress}, an item $x \in \mathcal{T}\Delta \mathcal{T}\pr$ in the symmetric difference has frequency estimate at most
\[
  \hat{f}_T(x) \leq f_T(x) + \gamma_1(T) \leq \frac{t}{\tilde{k}} + \gamma_2(T) + 2\gamma_1(T) \leq \tau.  
\]
Therefore, unstable items are safely suppressed by the single threshold.
This allows us to establish that, conditioned on the Error Envelope, Algorithm~\ref{alg:EEHH} is as private as the underlying sketch.

\begin{restatable}{theorem}{privfreqoracle}
\label{thm:privacy-ls}
If the sketch $\mathcal{D}$ is $\varepsilon$-DP under single observation and incurs the following error bound for any $x \in X$
\[
\gamma_1(t)\ \ge\ \tilde{f}_t(x)-f_t(x)\ \ge\ \gamma_2(t)
\]
with probability at least $1-\delta/(2(T + \tilde{k}))$,
then Algorithm~\ref{alg:EEHH} is $(\varepsilon,\delta)$-differentially private.
\end{restatable}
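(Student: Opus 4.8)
The plan is to mirror the proof of Theorem~\ref{thm:priv_ss}, replacing its ``stable selection'' event by an \emph{error‑envelope} event. Fix the parameters and write $\mathcal{A}(X) := \mathsf{EEHH}(\mathcal{D}, X, k, \tilde{k}, \varepsilon, \delta)$; fix neighboring streams $X \sim X\pr$, a measurable output set $S$, and let $T$ be the stream length. Define the event $E$ as the conjunction, over the run on $X$ and the run on $X\pr$, that every arrival $x_t$ satisfies $\gamma_2(t) \le \tilde{f}_t(x_t) - f_t(x_t) \le \gamma_1(t)$. By hypothesis each arrival is ``good'' with probability at least $1 - \delta/(2T)$, so a union bound over the arrivals in both runs gives $\prob[\neg E] \le \delta$.

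Next I would argue that, conditioned on $E$, the released labels come only from the common tracked set. By Lemma~\ref{lem:suppress}, on $E$ every label $u$ in the symmetric difference $U := \mathcal{T} \triangle \mathcal{T}\pr$ produced by Algorithm~\ref{alg:topk} has release‑time true frequency at most $\tau(T)$, and with the error envelope $\gamma_1$ its release‑time estimate is therefore at most $\tau$ as well; hence $u$ fails the threshold test of Algorithm~\ref{alg:EEHH} in whichever run it is tracked, so $\hat{\mathcal{T}} \subseteq \mathcal{T} \cap \mathcal{T}\pr$ and $\hat{\mathcal{T}}\pr \subseteq \mathcal{T} \cap \mathcal{T}\pr$. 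Thus, on $E$, the output of $\mathcal{A}$ is determined by the $\varepsilon$-DP estimates that $\mathcal{D}$ supplies for labels common to both runs, together with the data‑independent rule ``keep $a$ iff its stored estimate and its release‑time estimate both exceed $\tau$'' (where $\tau$ depends on the input only through the publicly known length $T$). Since $\mathcal{D}$ is $\varepsilon$-DP under single observation, this collection of estimates is $\varepsilon$-DP, and post‑processing gives $\prob[\mathcal{A}(X) \in S \mid E] \le e^{\varepsilon}\,\prob[\mathcal{A}(X\pr) \in S \mid E]$.

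The two pieces are then combined exactly as in Theorem~\ref{thm:priv_ss}: decomposing on $E$ and $\neg E$, using $\prob[E] \le 1$ and $\prob[\neg E] \le \delta$,
\[
\prob[\mathcal{A}(X) \in S] \;\le\; e^{\varepsilon}\,\prob[\mathcal{A}(X\pr) \in S \mid E]\,\prob[E] + \prob[\neg E] \;\le\; e^{\varepsilon}\,\prob[\mathcal{A}(X\pr) \in S] + \delta,
\]
and symmetrically with the roles of $X$ and $X\pr$ exchanged. I expect the delicate step to be the middle paragraph: Algorithm~\ref{alg:topk} builds $\mathcal{T}$ using the actual item identities in the stream, not merely the output of $\mathcal{D}$, so a priori $\mathcal{A}$ is \emph{not} a post‑processing of an $\varepsilon$-DP object. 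The purpose of conditioning on the error envelope is precisely to neutralize this stream‑dependence — Lemma~\ref{lem:suppress} identifies the ``unstable'' part of $\mathcal{T}$ with exactly the symmetric difference $U$, and on $E$ that part is deterministically suppressed, so the identities that survive to the output are common to both neighbors and their release reduces to post‑processing $\mathcal{D}$'s $\varepsilon$-DP estimates. Care is also needed in stating what ``$\mathcal{D}$ is $\varepsilon$-DP under single observation'' means for the adaptive sequence of updates and final queries issued by the wrapper (one should take the full transcript of $\mathcal{D}$'s responses as the single $\varepsilon$-DP release), and in checking that the monotonicity of $\gamma_1,\gamma_2$ lets the time-$t$ bound of Lemma~\ref{lem:suppress} propagate to the release threshold $\tau(T)$.
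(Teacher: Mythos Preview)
Your proposal is correct and follows essentially the same route as the paper's proof: define the error-envelope event on both runs, union-bound to get failure probability at most $\delta$, invoke Lemma~\ref{lem:suppress} to show that on this event every label in the symmetric difference is suppressed by the threshold, and then reduce the release on the common labels to post-processing of the $\varepsilon$-DP oracle before decomposing exactly as in Theorem~\ref{thm:priv_ss}. The caveats you flag in your final paragraph (that $\mathcal{T}$ depends on the stream identities and that one must interpret ``$\mathcal{D}$ is $\varepsilon$-DP'' as covering the full transcript of responses) are genuine subtleties that the paper's proof also leaves implicit, so your write-up is if anything slightly more careful than the original.
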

\noindent
The proof is relegated to Appendix~\ref{app:proof_2} due to its similarities to Theorem~\ref{thm:priv_ss}.
The concrete envelopes and tail probabilities depend on $\mathcal{D}$, but the wrapper is agnostic to the internal design.
We give examples of Error Envelopes for linear sketches in Appendix~\ref{app:error_envelope}.

\section{Experiments}
\label{sec:experiments}

In this paper, we have proposed novel $\diffpriv$ heavy hitter algorithms tailored to the single observation streaming model.
Our central contribution is a $\diffpriv$ adaptation of $\spacesaving$, a state-of-the-art heavy hitters algorithm in non-private streams.
We posit that the strong empirical performance of $\spacesaving$ in the non-private setting carries over to the single-observation $\diffpriv$ regime.
 
To substantiate this claim, we implement a suite of $\diffpriv$ heavy hitter algorithms and conduct an empirical evaluation, analyzing the trade-offs between privacy, utility, and computational performance.

\subsection{Implemented Algorithms}
The following algorithms are implemented and evaluated for heavy hitter detection:
\begin{itemize}
    \item Differentially private $\mg$ \cite{lebeda2023better}.
    As prior work, it serves as the key baseline.
    \item Differentially private $\spacesaving$. 
    \item The Error Envelope Heavy Hitters implemented with both the $\cms$ and $\cs$.
    The internal Error Envelopes are discussed in Appendix~\ref{app:error_envelope}.
    Note that the $\cs$ additionally requires estimation of $F_2$, which can be done using the rows of the sketch.
\end{itemize}
The goal of each algorithm is to admit items with frequency greater than $T/k$, for stream length $T$.
The instantiations of differentially private $\spacesaving$, $\cms$, and $\cs$ depend on the additional oversampling parameter $\tilde{k}$, which, through a trade-off between memory and error,  allows isolated elements to be suppressed without impeding recall.
For linear sketches, to accommodate the Error Envelope, we set $\tilde{k} = 4k$.
For the counter-based algorithms, we set $\tilde{k}=2k$.
This results in a slight modification to $\mg$, where we increase the number of counters to $\tilde{k}$ and (similar to $\spacesaving$) admit items that exceed the maximum of $T/k$ and the suppression threshold.

\begin{figure*}[t]
  \centering

  \begin{subfigure}[t]{0.31\textwidth}
    \centering
    \includegraphics[width=\linewidth]{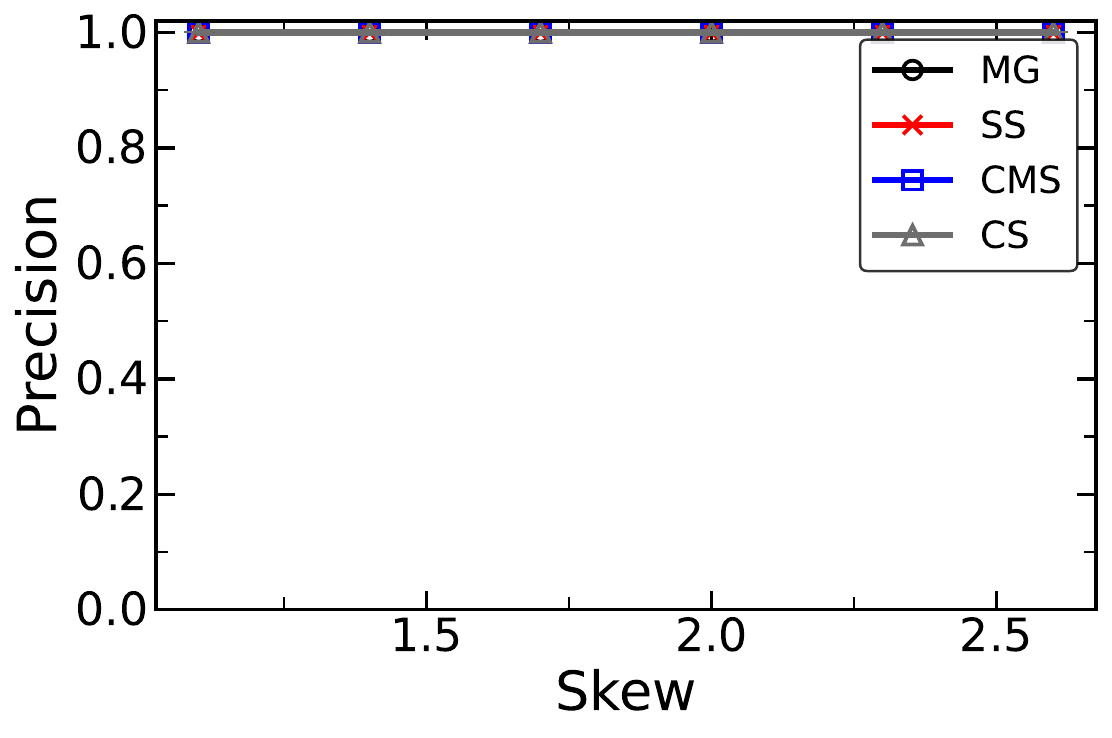}
    \caption{Precision vs Skew. Parameters: $k = 128$, $\varepsilon = 0.1$.}
    \label{fig:syn_skew_precision}
  \end{subfigure}\hfill
  \begin{subfigure}[t]{0.31\textwidth}
    \centering
    \includegraphics[width=\linewidth]{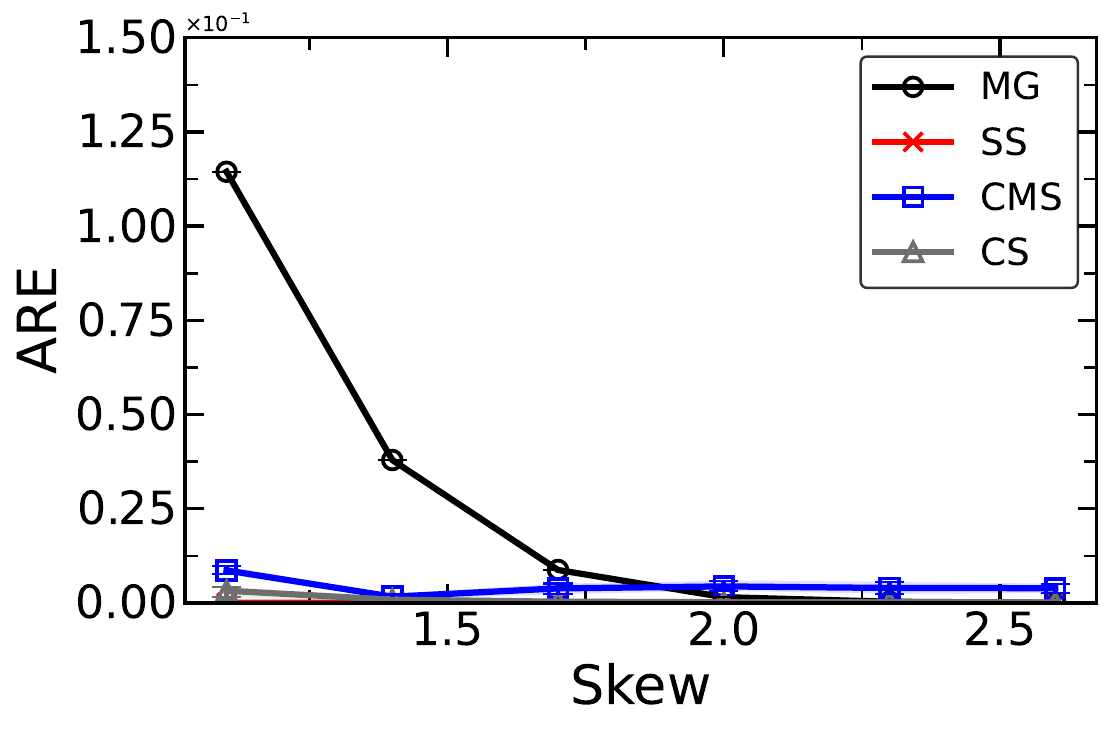}
    \caption{ARE vs Skew. Parameters: $k = 128$, $\varepsilon = 0.1$.}
    \label{fig:syn_skew_ARE}
  \end{subfigure}\hfill
  \begin{subfigure}[t]{0.31\textwidth}
    \centering
    \includegraphics[width=\linewidth]{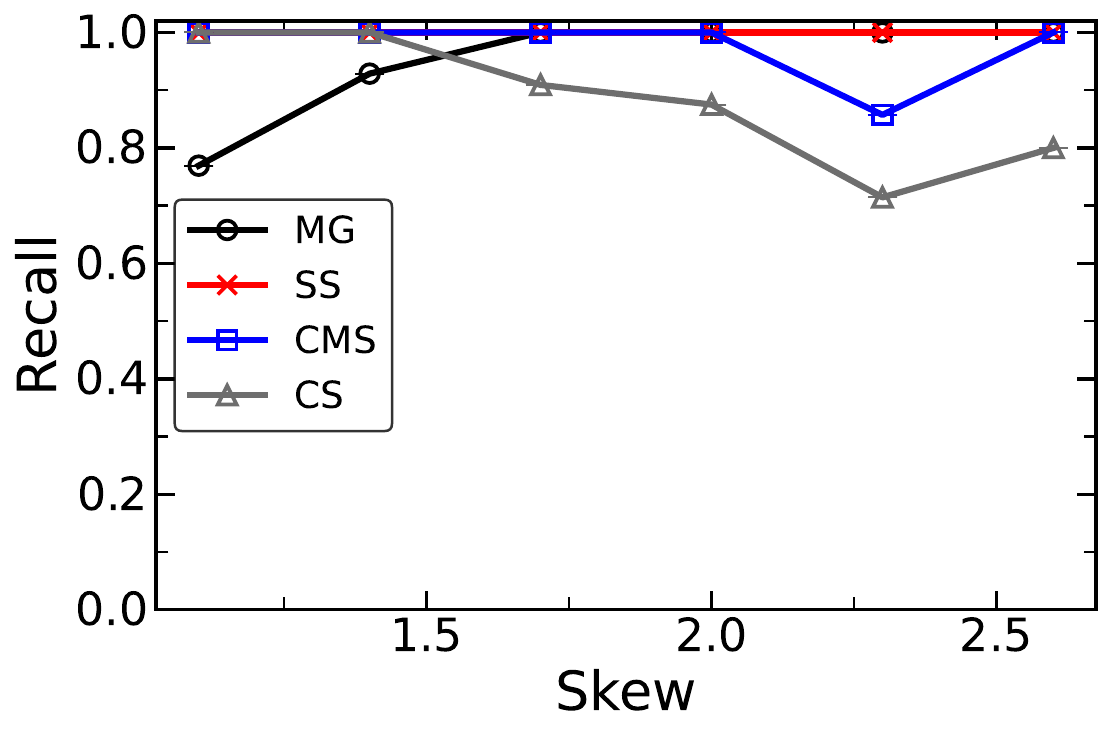}
    \caption{Recall vs Skew. Parameters: $k = 128$, $\varepsilon = 0.1$.}
    \label{fig:syn_skew_recall}
  \end{subfigure}

  \vspace{0.6em}

  \begin{subfigure}[t]{0.31\textwidth}
    \centering
    \includegraphics[width=\linewidth]{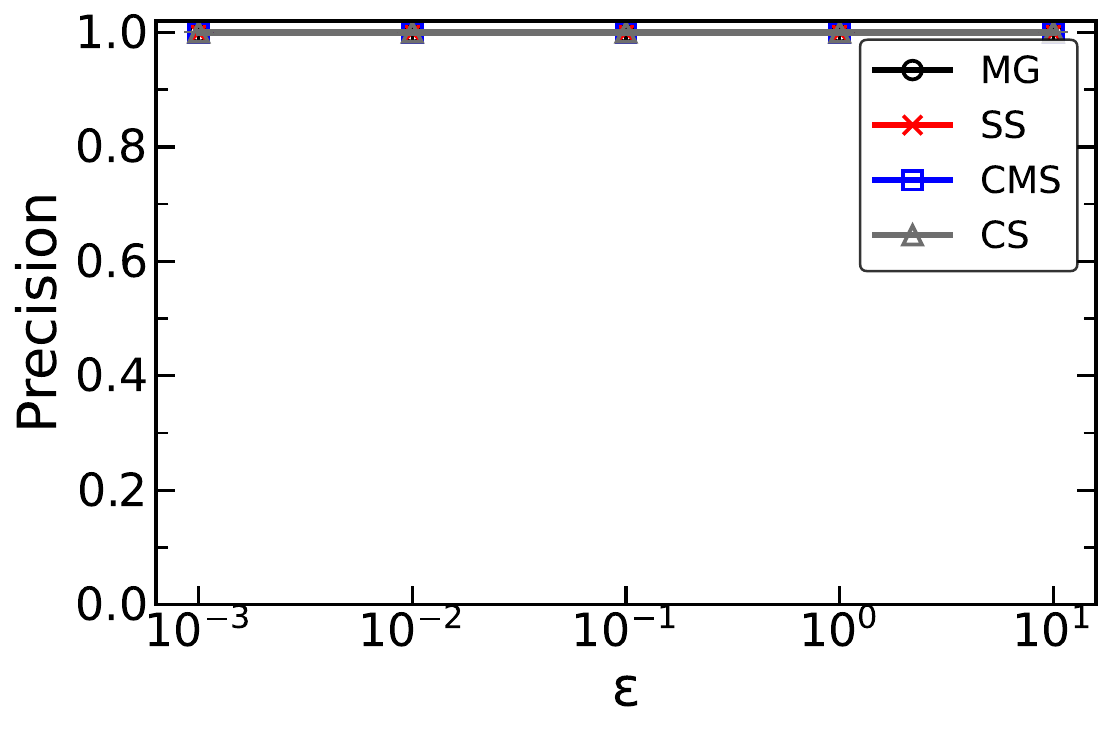}
    \caption{Precision vs $\varepsilon$. Skew is $1.1$ and $k = 128$.}
    \label{fig:syn_eps_precision}
  \end{subfigure}\hfill
  \begin{subfigure}[t]{0.31\textwidth}
    \centering
    \includegraphics[width=\linewidth]{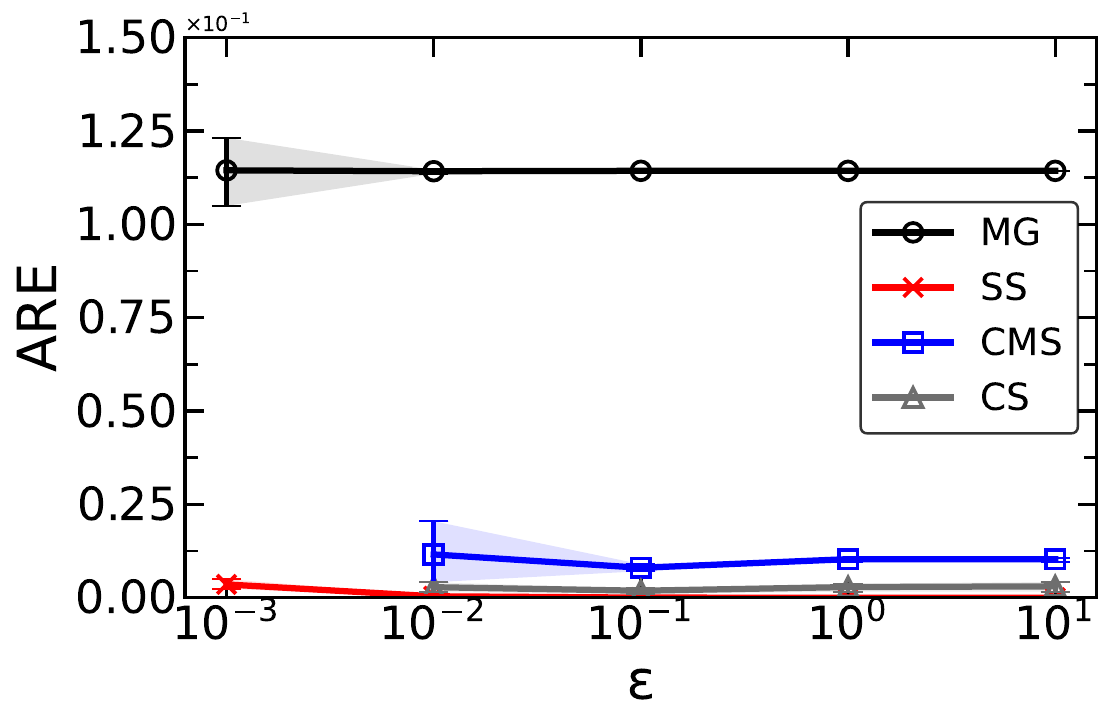}
    \caption{ARE vs $\varepsilon$. Skew is $1.1$ and $k = 128$.}
    \label{fig:syn_eps_ARE}
  \end{subfigure}\hfill
  \begin{subfigure}[t]{0.31\textwidth}
    \centering
    \includegraphics[width=\linewidth]{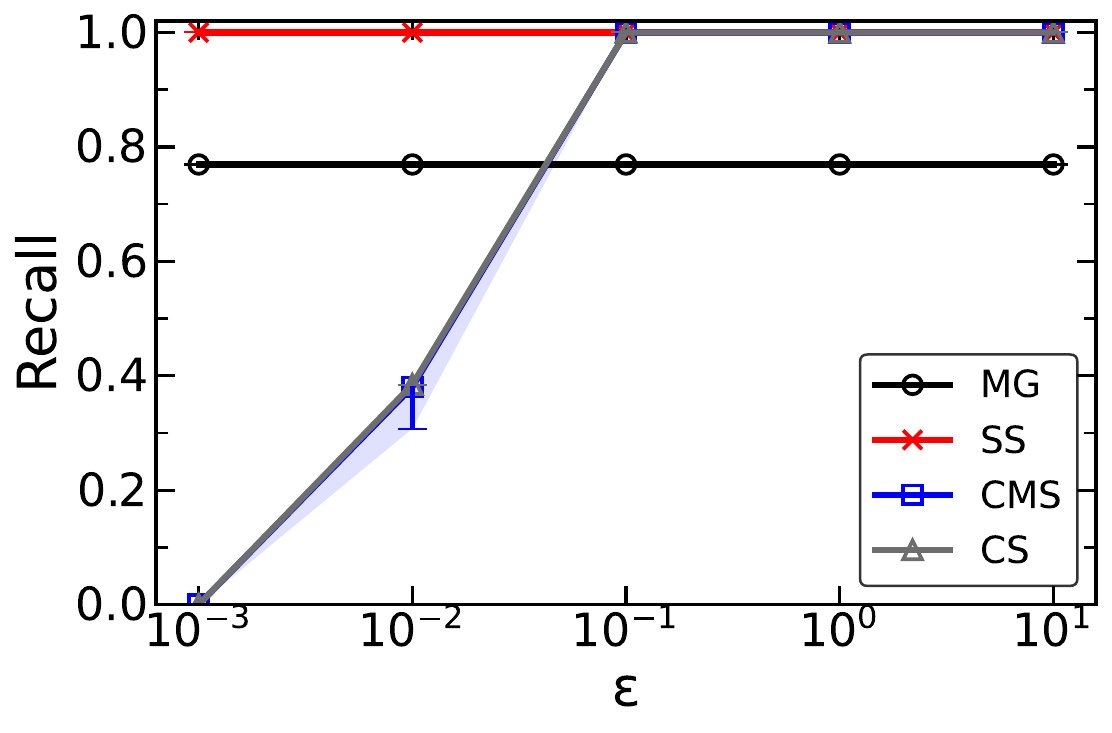}
    \caption{Recall vs $\varepsilon$. Skew is $1.1$ and $k = 128$.}
    \label{fig:syn_eps_recall}
  \end{subfigure}

  \caption{Experiments on Syntetic Data}
  \label{fig:2x3_part2}
\end{figure*}

\subsection{Datasets and Metrics}

We ran experiments using both real network traffic and generated synethetic
data. 
The network data set was drawn from anonymized traffic traces from the CAIDA 2019 dataset\footnote{\url{https://www.caida.org/data/passive/passive\_2019\_dataset.xml}}.
The dataset consists of approximately $26.6$ million total source IP addresses and around $312{,}000$ distinct source IPs.
We generated synthetic data from a skewed distribution (Zipf), varying the skew from 1.1 to 2.7 (in order to obtain meaningful distributions that produce at
least one heavy hitter per run). 

For both synthetic and real data we varied two parameters: the heavy hitter threshold $k$  and the privacy parameter $\varepsilon$.
The privacy failure probability $\delta$ was set to $0.001$ for all experiments.

Following \cite{cormode2008finding}, we evaluate the algorithms along several dimensions of efficiency and accuracy:
\begin{itemize}
\item {Update throughput}, reported as the average time per update in microseconds.
\item {Memory footprint}, measured in bytes.
\item {Recall}, defined as the fraction of true heavy hitters (as identified by an exact baseline) that are successfully reported.
\item {Precision}, defined as the fraction of reported items that are true heavy hitters, capturing the rate of false positives.
\item {Average relative error} of the estimated frequencies for reported items.
\end{itemize}
Each experimental configuration is repeated 20 times.
Algorithms are evaluated independently to avoid interference from shared caching effects.
All plots report mean values, with error bars indicating the 5th and 95th percentiles.

All implementations are written in C++ and is publicly available\footnote{\url{https://github.com/rayneholland/DPHH}}. 
Experiments were conducted on a Dell Latitude 7430 running Windows 11, equipped with an Intel Core i7 processor and 32GB of RAM. 

\begin{figure*}[t]
  \centering

  \begin{subfigure}[t]{0.31\textwidth}
    \centering
    \includegraphics[width=\linewidth]{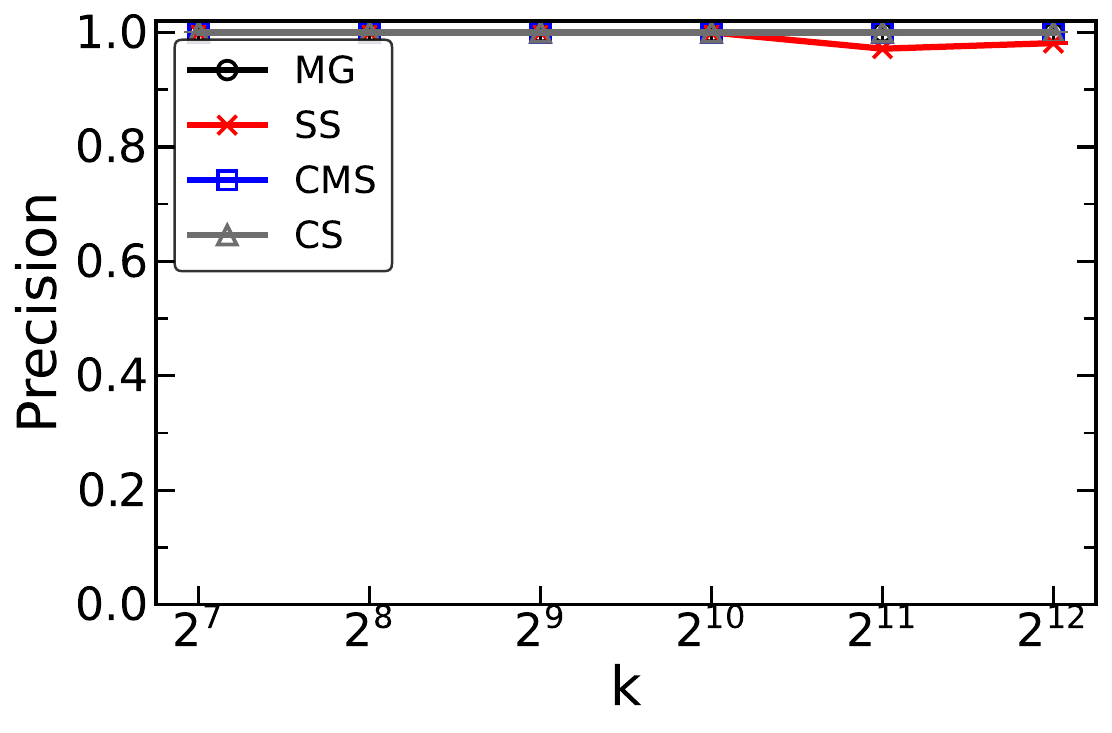}
    \caption{Precision vs $k$. Privacy $\varepsilon = 0.1$.}
    \label{fig:caida_k_precision}
  \end{subfigure}\hfill
  \begin{subfigure}[t]{0.31\textwidth}
    \centering
    \includegraphics[width=\linewidth]{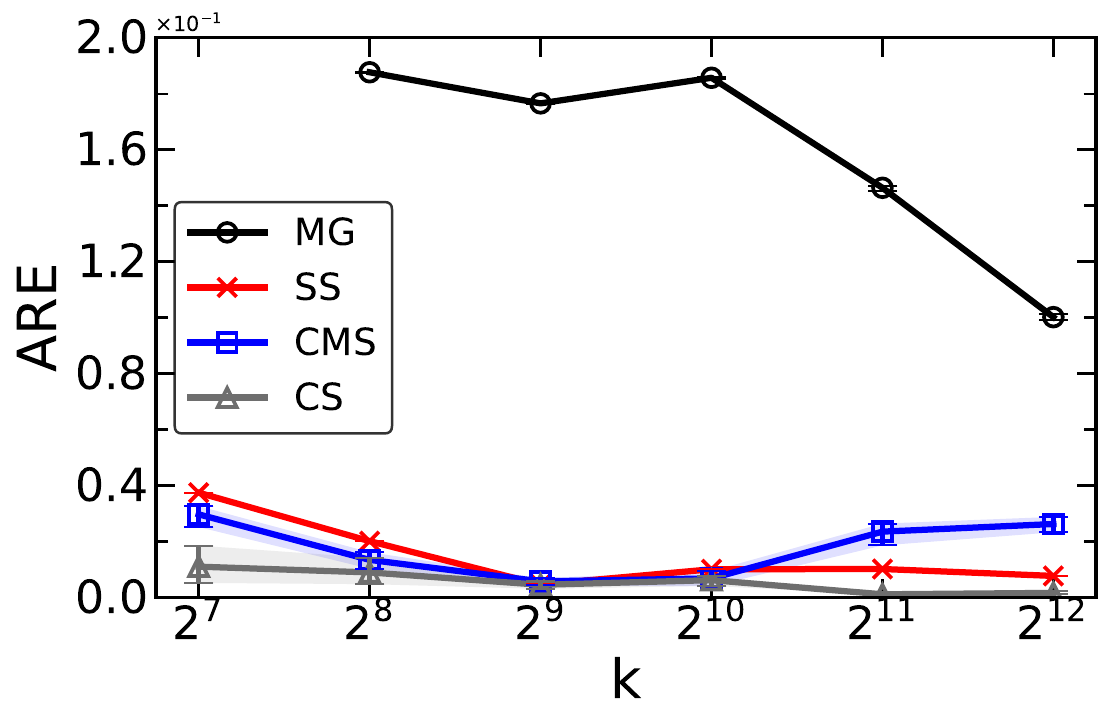}
    \caption{ARE vs $k$. Privacy $\varepsilon = 0.1$.}
    \label{fig:caida_k_ARE}
  \end{subfigure}\hfill
  \begin{subfigure}[t]{0.31\textwidth}
    \centering
    \includegraphics[width=\linewidth]{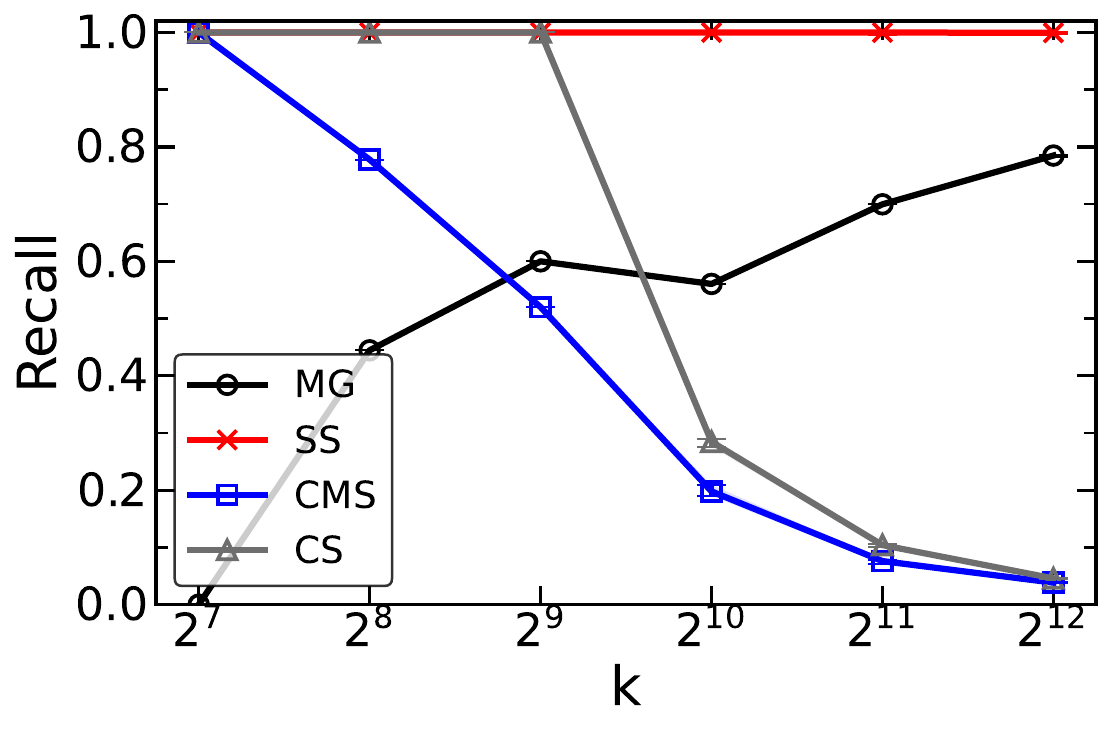}
    \caption{Recall vs $k$. Privacy $\varepsilon = 0.1$.}
    \label{fig:caida_k_recall}
  \end{subfigure}

  \vspace{0.6em}

  \begin{subfigure}[t]{0.31\textwidth}
    \centering
    \includegraphics[width=\linewidth]{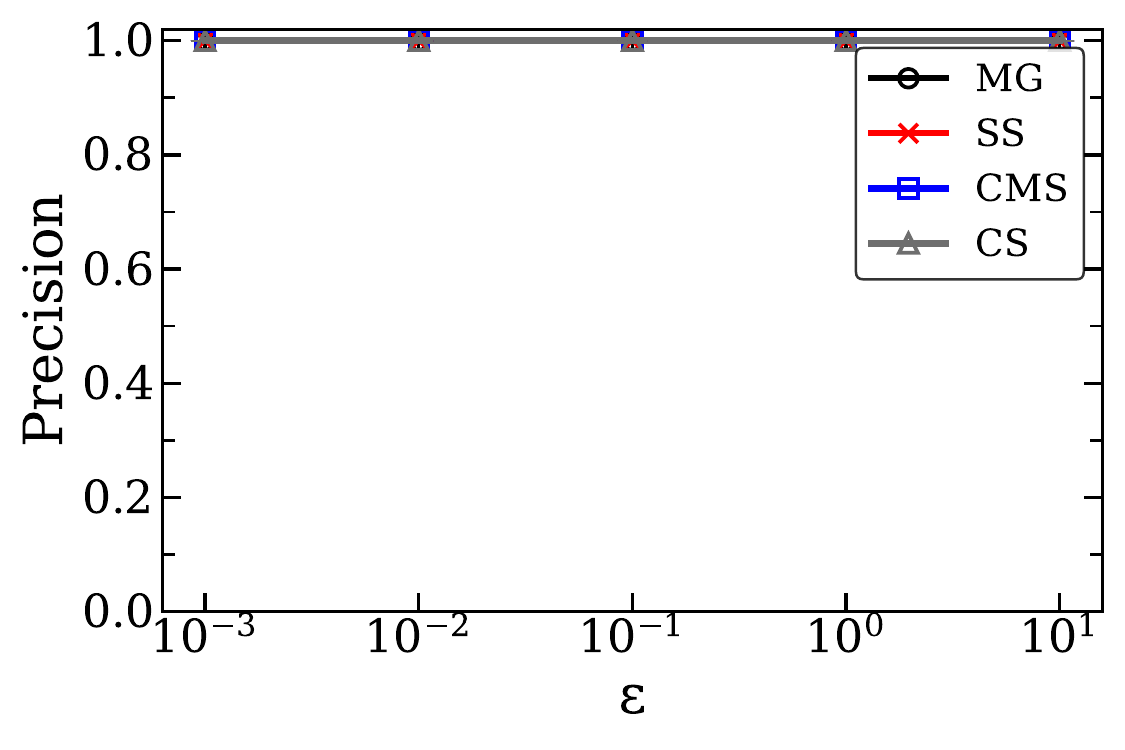}
    \caption{Precision vs $\varepsilon$. Parameter $k = 256$.}
    \label{fig:caida_eps_precision}
  \end{subfigure}\hfill
  \begin{subfigure}[t]{0.31\textwidth}
    \centering
    \includegraphics[width=\linewidth]{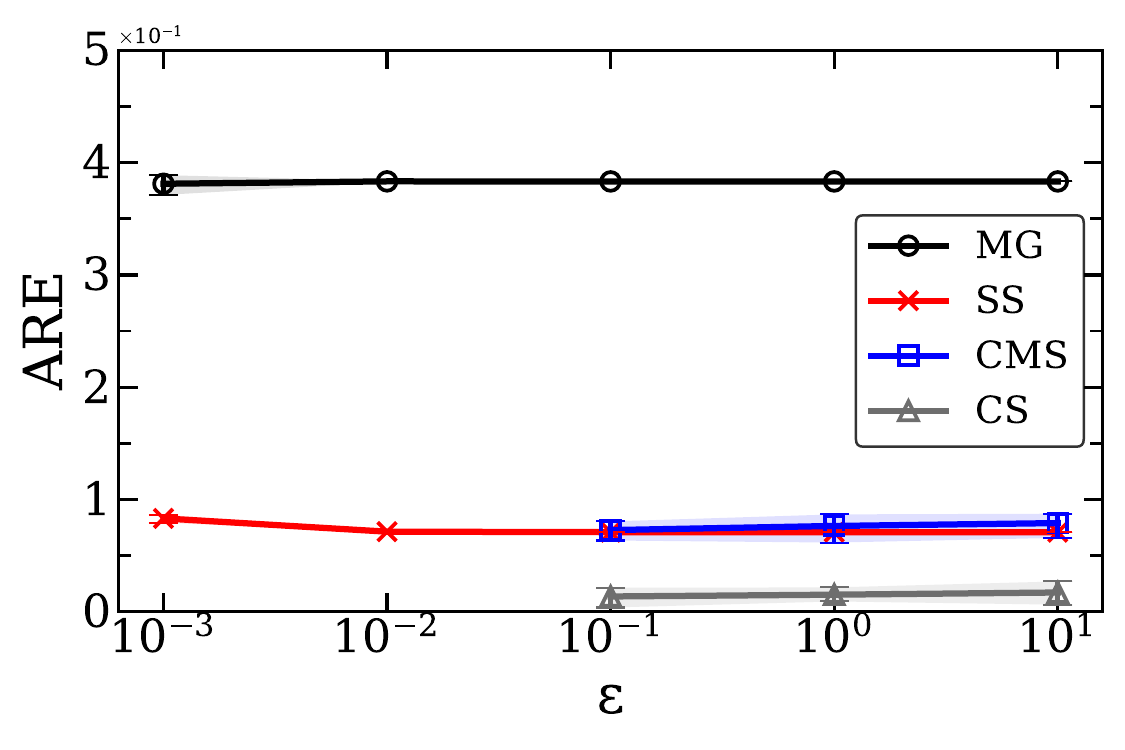}
    \caption{ARE vs $\varepsilon$. Parameter $k = 256$.}
    \label{fig:caida_eps_ARE}
  \end{subfigure}\hfill
  \begin{subfigure}[t]{0.31\textwidth}
    \centering
    \includegraphics[width=\linewidth]{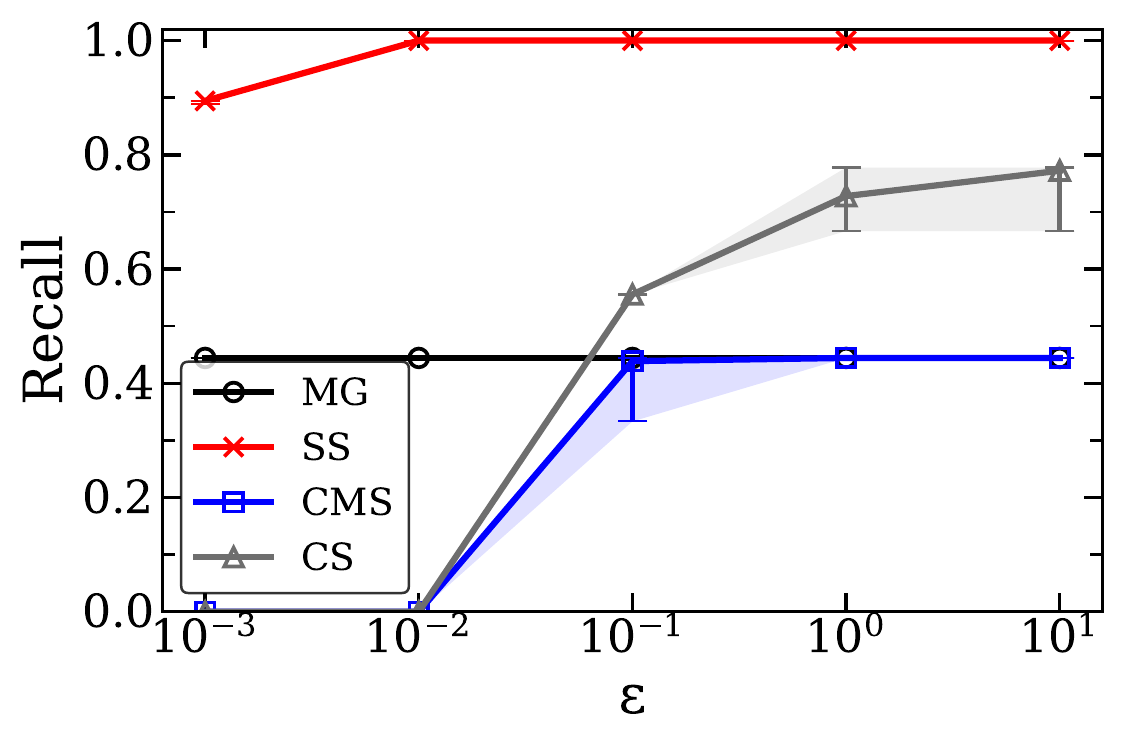}
    \caption{Recall vs $\varepsilon$. Parameter $k = 256$.}
    \label{fig:caida_eps_recall}
  \end{subfigure}

  \vspace{0.6em}

  \begin{subfigure}[t]{0.31\textwidth}
    \centering
    \includegraphics[width=\linewidth]{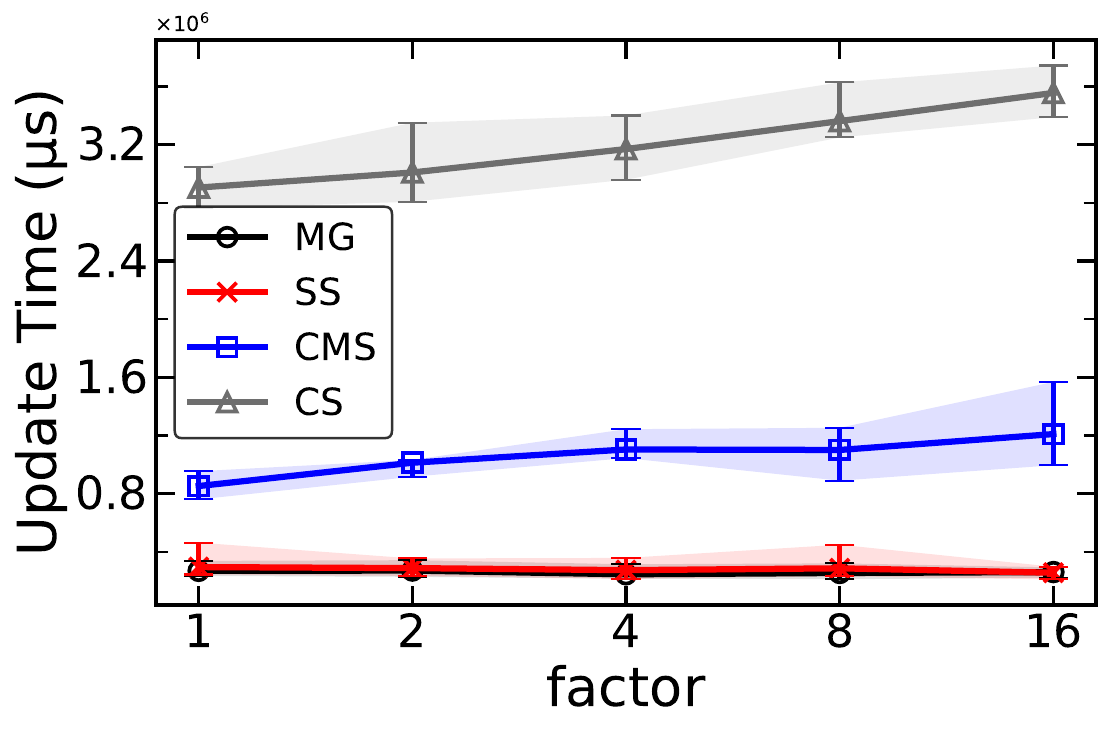}
    \caption{Time vs $\tilde{k} = c\cdot k$. Parameter $k = 1024$.}
    \label{fig:caida_tilde_k_update}
  \end{subfigure}\hfill
  \begin{subfigure}[t]{0.31\textwidth}
    \centering
    \includegraphics[width=\linewidth]{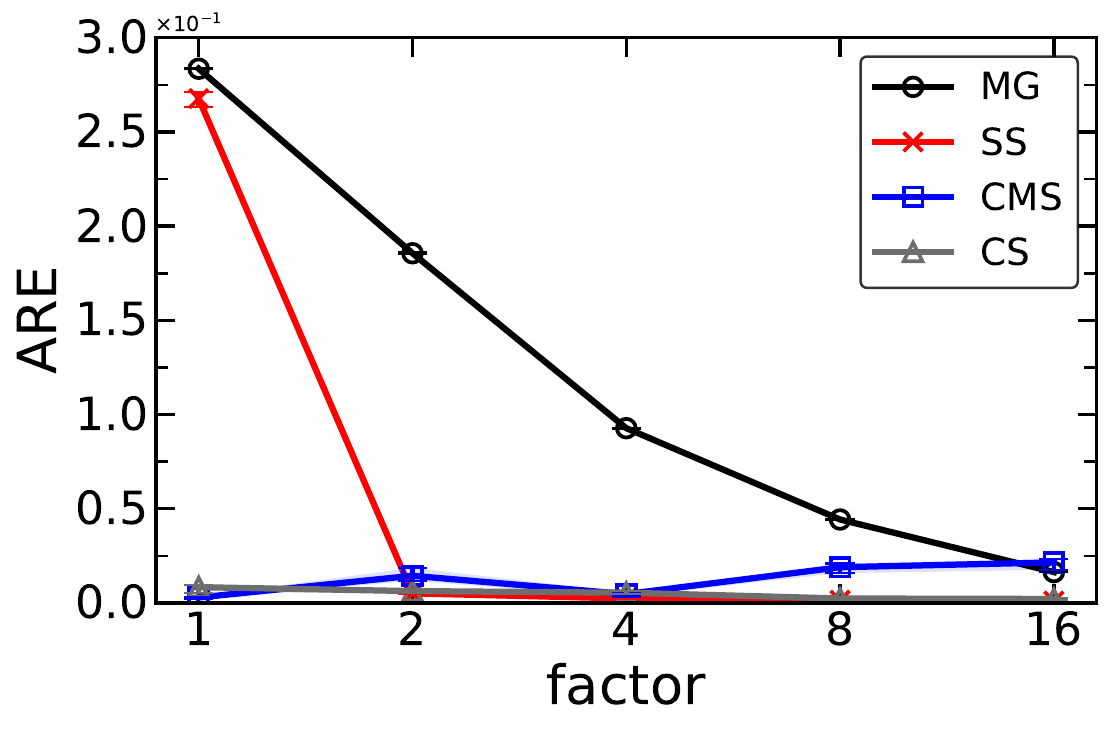}
    \caption{ARE vs $\tilde{k} = c\cdot k$. Parameter $k = 1024$.}
    \label{fig:caida_tilde_k_ARE}
  \end{subfigure}\hfill
  \begin{subfigure}[t]{0.31\textwidth}
    \centering
    \includegraphics[width=\linewidth]{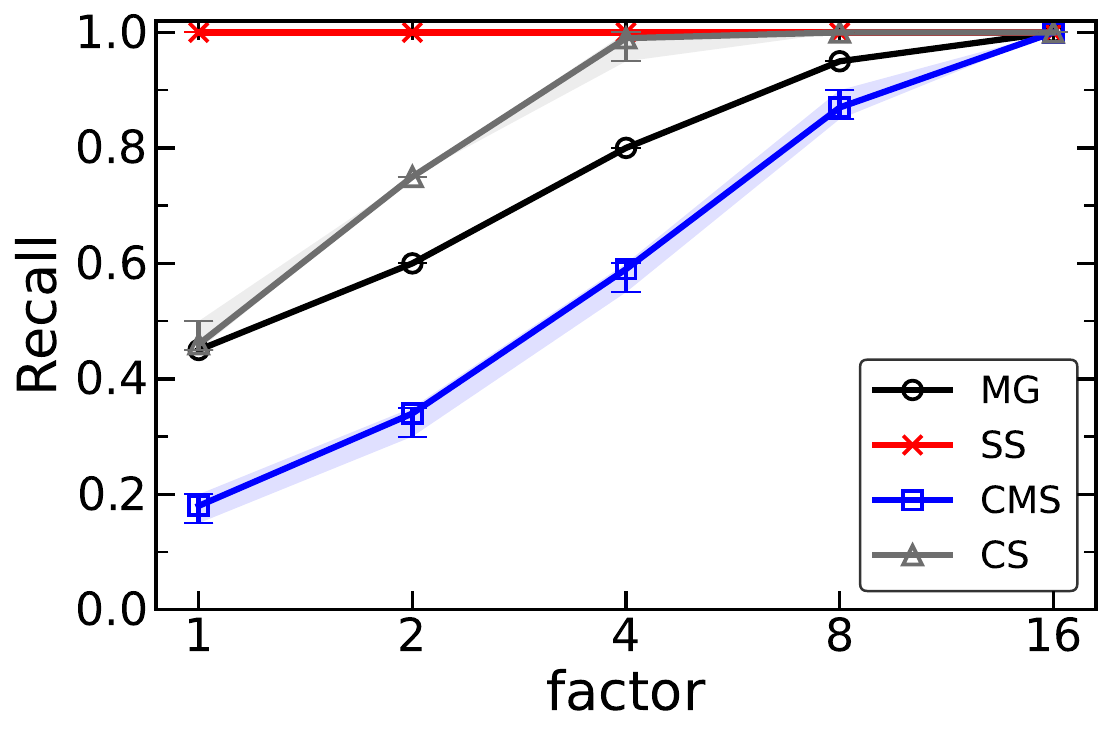}
    \caption{Recall vs $\tilde{k} = c\cdot k$. Parameter $k = 1024$.}
    \label{fig:caida_tilde_k_recall}
  \end{subfigure}

  \caption{Experiments on CAIDA data.}
  \label{fig:three_by_three}
\end{figure*}

\subsection{Synthetic Data}

\paragraph{Varying $k$.}
Figure~\ref{fig:syn_k_bytes} shows that, at a fixed $k$, the linear sketches consume {significantly} more memory than the counter-based summaries, roughly a factor of six.
This disparity stems from the large depth ($d\!>\!20$) required needed to enforce the error envelope.
The counter-based algorithms require at most 240KB of memory, underscoring their compactness.
Using a heap-based implementation the memory allocation of counter-based algorithm can be increased further, albeit at the cost of a slightly increased update time. 

Update throughput (Figure~\ref{fig:syn_k_update}) favors the counter-based methods by a wide margin.
The sketches show only a weak dependence on $k$ due to heap-based candidate tracking. 
Recall against the frequency threshold (Figure~\ref{fig:syn_k_recall}) is perfect for $\spacesaving$, improves steadily for $\mg$ as the number of registers grows, and degrades for the linear sketches as the envelope is tightens. 
This degradation arises because linear sketches require substantially more injected noise; as $T/k$ approaches the bound imposed by the privacy-induced noise (for larger $k$), recall correspondingly declines.
Precision is uniformly $1.0$ across all four algorithms (Figure~\ref{fig:syn_k_precision}) under our thresholding rule. 
For ARE, $\mg$ exhibits the largest error, $\spacesaving$ performs best, and $\cs$ has a consistent edge over $\cms$ (Figure~\ref{fig:syn_k_ARE}).

\paragraph{Varying skew $\phi$.}
As skew increases, throughput improves modestly for all methods (Figure~\ref{fig:syn_skew_update}), reflecting fewer candidate replacements when the head of the distribution dominates. 
Precision remains perfect across all four algorithms (Figure~\ref{fig:syn_skew_precision}). 
ARE decreases with lower skew (Figure~\ref{fig:syn_skew_ARE}), as is expected in frequency estimation.
As before, $\spacesaving$ and $\cs$ achieve the lowest error. 
Recall (Figure~\ref{fig:syn_skew_recall}) is perfect for $\spacesaving$, improves for $\mg$ as skew decreases, and is lowest for $\cs$ due to its larger envelope driven by the need to estimate and bound $F_2$.

\paragraph{Varying privacy $\varepsilon$.}
Precision is unaffected by privacy and remains perfect (Figure~\ref{fig:syn_eps_precision}). 
Privacy impacts ARE primarily, in the form of higher estimation variance, at tight budgets: for very small $\varepsilon$, error is influenced by the noise term (Figure~\ref{fig:syn_eps_ARE}).
The $\cms$ is most sensitive because minimum-of-rows estimation biases toward negative fluctuations.
This is also compounded by requiring a large depth to secure the error envelope.
In Contrast, median-of-rows in $\cs$ composes with noise error better and yields a lower ARE. 
Recall (Figure~\ref{fig:syn_eps_recall}) is essentially unchanged for the counter-based methods (with a minor dip for $\mg$ only at very small $\varepsilon$), while the sketches benefit as $\varepsilon$ increases.
This is because the envelope shrinks and heavy hitters are admitted more reliably.

\paragraph{Main takeaway.}
Across synthetic streams, $\spacesaving$ delivers the best overall utility (perfect recall and precision, lowest ARE) and the highest throughput, while using substantially less memory than the linear sketches. 
$\cs$ consistently outperforms $\cms$ on error, particularly under privacy noise. 
These trends collectively indicate that our $\diffpriv$ instantiations preserve the empirical strengths and performance of the underlying non-private algorithms, with $\spacesaving$ retaining its advantage across all metrics.

\subsection{Real Data}

\paragraph{Varying $k$.}
Precision is near-perfect across all methods (Figure~\ref{fig:caida_k_precision}), with $\spacesaving$ admitting a small number of false positives as the threshold is relaxed. 
ARE generally declines with larger $k$ (Figure~\ref{fig:caida_k_ARE}).
$\cms$ exhibits a modest increase in error as more lower-frequency items are admitted as heavy, while $\cs$ and $\spacesaving$ maintain error below $0.04$ throughout. 
Recall trends diverge (Figure~\ref{fig:caida_k_recall}) between algorithms.
Notably, $\mg$ benefits from larger tables, whereas both linear sketches degrade as $k$ increases, a consequence of the privacy-induced noise dominating the frequency threshold $T/k$.

\paragraph{Varying $\varepsilon$.}
Precision is unaffected and remains perfect for all data structures (Figure~\ref{fig:caida_eps_precision}). 
ARE is essentially insensitive to $\varepsilon$ (Figure~\ref{fig:caida_eps_ARE}), as estimation error from the underlying non-private summaries dominates the total error in the $\diffpriv$ variants.
Recall is impacted only at very small privacy budgets (Figure~\ref{fig:caida_eps_recall}), where the suppression thresholds for linear sketches include an $\varepsilon^{-1}$ term.

\paragraph{Varying $\tilde{k}$.}
Increasing the oversampling factor $c$ in $\tilde{k}=c\cdot k$ leaves update time unchanged for the counter-based summaries (Figure~\ref{fig:caida_tilde_k_update}), consistent with constant-time updates. 
Linear sketches, which maintain heaps for candidate tracking, show a dependence on $\tilde{k}$. 
By design, utility improves uniformly with larger $\tilde{k}$.
ARE decreases as additional counters reduce collision and estimation error (Figure~\ref{fig:caida_tilde_k_ARE}).
Note that $\spacesaving$ obtains higher error at Factor 1 as it is the only algorithm with high recall and therefore gets tested against lower frequency heavy items.
Recall rises because the expanded capacity tightens error envelopes and enables finer thresholding without violating privacy (Figure~\ref{fig:caida_tilde_k_recall}).

\paragraph{Main takeaway.}
$\spacesaving$ retains its empirical edge, delivering high precision and recall across configurations.
Linear sketches remain competitive on error but can suffer recall penalties under tight envelopes. 
Notably, the oversampling parameter $\tilde{k}$ is an effective knob for improving recall, at the cost of additional memory.

\section{Conclusion}

We introduced new methods for privatizing heavy hitter algorithms in the data-stream model of computation. 
This includes the first $\diffpriv$ variant of $\spacesaving$ and a generic wrapper that converts any $\diffpriv$ frequency oracle into a label-private heavy-hitter mechanism.
Both constructions combine calibrated noise with a single threshold tied to the error envelope, preserving the efficiency of their non-private counterparts.

Our experiments confirm that $\spacesaving$ continues to dominate under differential privacy, achieving near-perfect precision and recall with minimal overhead.
$\mg$ and $\spacesaving$ largely match the empirical performance of their non-private versions. 
Linear sketches remain close to their non-private behavior but incur a small penalty from deeper tables required to guarantee envelopes.
Linear sketches nonetheless offer advantages such as deletions. 
A promising direction for future work is to further tighten sketch error envelopes.

\bibliographystyle{plain}
\bibliography{references}

@article{pagh2022improved,
  title={Improved utility analysis of private countsketch},
  author={Pagh, Rasmus and Thorup, Mikkel},
  journal={Advances in Neural Information Processing Systems},
  volume={35},
  pages={25631--25643},
  year={2022}
}

@article{cormode2008finding,
  title={Finding frequent items in data streams},
  author={Cormode, Graham and Hadjieleftheriou, Marios},
  journal={Proceedings of the VLDB Endowment},
  volume={1},
  number={2},
  pages={1530--1541},
  year={2008},
  publisher={VLDB Endowment}
}

@article{epasto2023differentially,
  title={Differentially private continual releases of streaming frequency moment estimations},
  author={Epasto, Alessandro and Mao, Jieming and Medina, Andres Munoz and Mirrokni, Vahab and Vassilvitskii, Sergei and Zhong, Peilin},
  journal={arXiv preprint arXiv:2301.05605},
  year={2023}
}

@inproceedings{charikar2002finding,
  title={Finding frequent items in data streams},
  author={Charikar, Moses and Chen, Kevin and Farach-Colton, Martin},
  booktitle={International Colloquium on Automata, Languages, and Programming},
  pages={693--703},
  year={2002},
  organization={Springer}
}

@article{cormode2005improved,
  title={An improved data stream summary: the count-min sketch and its applications},
  author={Cormode, Graham and Muthukrishnan, Shan},
  journal={Journal of Algorithms},
  volume={55},
  number={1},
  pages={58--75},
  year={2005},
  publisher={Elsevier}
}

@article{bassily2017practical,
  title={Practical locally private heavy hitters},
  author={Bassily, Raef and Nissim, Kobbi and Stemmer, Uri and Guha Thakurta, Abhradeep},
  journal={Advances in Neural Information Processing Systems},
  volume={30},
  year={2017}
}

@article{biswas2024differentially,
  title={Differentially Private Hierarchical Heavy Hitters},
  author={Biswas, Ari and Cormode, Graham and Kanza, Yaron and Srivastava, Divesh and Zhou, Zhengyi},
  journal={Proceedings of the ACM on Management of Data},
  volume={2},
  number={5},
  pages={1--25},
  year={2024},
  publisher={ACM New York, NY, USA}
}

@inproceedings{chan2012differentially,
  title={Differentially private continual monitoring of heavy hitters from distributed streams},
  author={Chan, T -H Hubert and Li, Mingfei and Shi, Elaine and Xu, Wenchang},
  booktitle={Privacy Enhancing Technologies: 12th International Symposium, PETS 2012, Vigo, Spain, July 11-13, 2012. Proceedings 12},
  pages={140--159},
  year={2012},
  organization={Springer}
}

@inproceedings{lebeda2023better,
  title={Better differentially private approximate histograms and heavy hitters using the Misra-Gries sketch},
  author={Lebeda, Christian Janos and Tetek, Jakub},
  booktitle={Proceedings of the 42nd ACM SIGMOD-SIGACT-SIGAI Symposium on Principles of Database Systems},
  pages={79--88},
  year={2023}
}

@article{dwork2014algorithmic,
  title={The algorithmic foundations of differential privacy},
  author={Dwork, Cynthia and Roth, Aaron and others},
  journal={Foundations and Trends{\textregistered} in Theoretical Computer Science},
  volume={9},
  number={3--4},
  pages={211--407},
  year={2014},
  publisher={Now Publishers, Inc.}
}

@inproceedings{mitzenmacher2012hierarchical,
  title={Hierarchical heavy hitters with the space saving algorithm},
  author={Mitzenmacher, Michael and Steinke, Thomas and Thaler, Justin},
  booktitle={2012 Proceedings of the Fourteenth Workshop on Algorithm Engineering and Experiments (ALENEX)},
  pages={160--174},
  year={2012},
  organization={SIAM}
}

@article{misra1982finding,
  title={Finding repeated elements},
  author={Misra, Jayadev and Gries, David},
  journal={Science of computer programming},
  volume={2},
  number={2},
  pages={143--152},
  year={1982},
  publisher={Elsevier}
}

@article{zhao2022differentially,
  title={Differentially private linear sketches: Efficient implementations and applications},
  author={Zhao, Fuheng and Qiao, Dan and Redberg, Rachel and Agrawal, Divyakant and El Abbadi, Amr and Wang, Yu-Xiang},
  journal={Advances in Neural Information Processing Systems},
  volume={35},
  pages={12691--12704},
  year={2022}
}

@inproceedings{wang2021continuous,
  title={Continuous release of data streams under both centralized and local differential privacy},
  author={Wang, Tianhao and Chen, Joann Qiongna and Zhang, Zhikun and Su, Dong and Cheng, Yueqiang and Li, Zhou and Li, Ninghui and Jha, Somesh},
  booktitle={Proceedings of the 2021 ACM SIGSAC Conference on Computer and Communications Security},
  pages={1237--1253},
  year={2021}
}

@article{greenwald2001space,
  title={Space-efficient online computation of quantile summaries},
  author={Greenwald, Michael and Khanna, Sanjeev},
  journal={ACM SIGMOD Record},
  volume={30},
  number={2},
  pages={58--66},
  year={2001},
  publisher={ACM New York, NY, USA}
}

@inproceedings{shrivastava2004medians,
  title={Medians and beyond: new aggregation techniques for sensor networks},
  author={Shrivastava, Nisheeth and Buragohain, Chiranjeeb and Agrawal, Divyakant and Suri, Subhash},
  booktitle={Proceedings of the 2nd international conference on Embedded networked sensor systems},
  pages={239--249},
  year={2004}
}

@inproceedings{erlingsson2014rappor,
  title={Rappor: Randomized aggregatable privacy-preserving ordinal response},
  author={Erlingsson, {\'U}lfar and Pihur, Vasyl and Korolova, Aleksandra},
  booktitle={Proceedings of the 2014 ACM SIGSAC conference on computer and communications security},
  pages={1054--1067},
  year={2014}
}

@article{bun2019heavy,
  title={Heavy hitters and the structure of local privacy},
  author={Bun, Mark and Nelson, Jelani and Stemmer, Uri},
  journal={ACM Transactions on Algorithms (TALG)},
  volume={15},
  number={4},
  pages={1--40},
  year={2019},
  publisher={ACM New York, NY, USA}
}

@inproceedings{qin2016heavy,
  title={Heavy hitter estimation over set-valued data with local differential privacy},
  author={Qin, Zhan and Yang, Yin and Yu, Ting and Khalil, Issa and Xiao, Xiaokui and Ren, Kui},
  booktitle={Proceedings of the 2016 ACM SIGSAC conference on computer and communications security},
  pages={192--203},
  year={2016}
}

@inproceedings{wu2022asymptotically,
  title={Asymptotically optimal locally private heavy hitters via parameterized sketches},
  author={Wu, Hao and Wirth, Anthony},
  booktitle={International Conference on Artificial Intelligence and Statistics},
  pages={7766--7798},
  year={2022},
  organization={PMLR}
}

@article{holland2025private,
  title={Private Synthetic Data Generation in Bounded Memory},
  author={Holland, Rayne and Camtepe, Seyit and Thapa, Chandra and Xue, Minhui},
  journal={Proceedings of the ACM on Management of Data},
  volume={3},
  number={2},
  pages={1--25},
  year={2025},
  publisher={ACM New York, NY, USA}
}

@inproceedings{bose2003bounds,
  title={Bounds for Frequency Estimation of Packet Streams.},
  author={Bose, Prosenjit and Kranakis, Evangelos and Morin, Pat and Tang, Yihui},
  booktitle={SIROCCO},
  volume={10},
  pages={18--20},
  year={2003}
}

@article{balcer2017differential,
  title={Differential privacy on finite computers},
  author={Balcer, Victor and Vadhan, Salil},
  journal={Journal of Privacy and Confidentiality },
  year={2019}
}

\appendix

\section{Related Work}
\label{app:related_work}

Heavy hitter detection has been studied extensively in the streaming model, with a broad taxonomy spanning \emph{counter-based} ($\mg$ \cite{misra1982finding}, $\spacesaving$ \cite{mitzenmacher2012hierarchical}), \emph{sketch-based} ($\cms$ \cite{cormode2005improved}, $\cs$ \cite{charikar2002finding}), and \emph{quantile}-based approaches \cite{greenwald2001space, shrivastava2004medians}. 
The most comprehensive empirical study is due to Cormode and Hadjieleftheriou, who implemented the principal families under uniform conditions and reported accuracy–space–throughput trade-offs \cite{cormode2008finding}. 
Their results show that, for insertion-only streams, $\spacesaving$ achieves highest accuracy with small memory and high throughput, while sketch-based summaries offer deletion support at the cost of more memory and less throughout. 
We extend this experimental lens to the privacy setting, measuring performance of differentially private counter- and sketch-based algorithms, and observe that the original conclusions still hold under privatization.

A number of works have explored differential privacy in the single-observation model \cite{chan2012differentially, holland2025private, lebeda2023better, pagh2022improved, zhao2022differentially}.
Early attempts to privatize $\mg$ used global sensitivity arguments, which implied noise scaling with the number of counters \cite{chan2012differentially}. 
Lebeda and Tětek resolved this by exploiting the fine-grained structure of neighboring $\mg$ sketches \cite{lebeda2023better}.
Their analysis reduces the effective sensitivity to \(O(1)\) (per coordinate) by leveraging shared offsets between neighboring counters.
Their approach yields \((\varepsilon,\delta)\)-DP with worst-case error matching the best non-streaming private histograms up to constants. 
Despite strong practical performance, $\spacesaving$ (which typically dominates $\mg$ empirically) has received little attention in the privacy literature. 

Linear sketches provide frequency vector summaries with rigorous probabilistic error guarantees, and have long been used for frequency estimation and frequent item detection. 
Recent work shows how to {privatize} these sketches in the single-observation $\diffpriv$ model. 
Pagh and Thorup analyzed \emph{private $\cs$}, giving improved utility bounds, showing that the error of the query method composes  nicely with Gaussian noise \cite{pagh2022improved}.
Zhao et al. developed efficient private linear-sketch implementations and demonstrated strong accuracy-space trade-offs in practice \cite{zhao2022differentially}. 
These mechanisms output $\diffpriv$ frequency estimates for \emph{queried} items. 
Producing heavy hitters, however, generally requires scanning or iterating over the universe to identify large counts.
This is an operational hurdle in large domains.

A common workaround uses a $\diffpriv$ {$\cs$ as a frequency oracle} inside a heavy-hitter pipeline \cite{bassily2017practical}. 
While effective, and improve over brute-force iteration, current instantiations still rely on substantial post-processing requirements to prune candidates and locate frequent items. 
These post-processing steps require a central server.
Our work bypasses this requirement and allows 
More broadly, heavy hitter detection has been given a lot of attention in the local DP model \cite{bun2019heavy,erlingsson2014rappor,qin2016heavy,wu2022asymptotically}, where users send one randomized report and the server reconstructs frequent items.

\section{Error Envelopes for Linear Sketches}
\label{app:error_envelope}

The non-private $\cms$ has the following error bound.
\begin{lemma}[Approximation error for $\cms$ \cite{cormode2005improved}]
    For $x \in \mathcal{U}$ and any $\beta, \eta \in (0,1)$, the estimation error of $\cms$ with  \( d = \lceil \log(1/\beta) \rceil \)  repetitions and table size $w = \lceil 2/\eta \rceil $ satisfies,
    \begin{align*}
        \prob[\hat{f}(x) - f(x) > \eta F_1] \leq \beta, 
    \end{align*}
    where $F_1 = \sum_{y\in \mathcal{U}} f(y)$.
    \label{lem:cms_approx}
\end{lemma}
Thus, by setting the width to $2\tilde{k}$, the error in $\cms$ is bounded by $t/\tilde{k}$, with probability $1-\beta$.
Moreover, a $\cms$ has sensitivity $2d$ on neighboring streams \cite{zhao2022differentially}.
Thus, by Lemma~\ref{eq:laplace-mech}, it is sufficient to add $\laplacedist(2d/\varepsilon)$ noise to each cell for $\varepsilon$-differential privacy.
By the Laplace tail bounds, with probability $1-\delta/2$, for $\psi = 2d/\varepsilon \ln (4\tilde{k}d/\delta)$, the additive error from a noise random variable is in the interval $[-\psi, \psi]$.
Therefore, by setting the depth to $d \lceil\ln(2(T+\tilde{k})/\delta)\rceil$, we can use $\gamma_1(t) = t/k + \psi$ and $\gamma_2(t) = -\psi$ for the Error Envelope.
A similar construction can be made for $\cs$ using the following result.

\begin{lemma}[Approximation error for $\cs$ \cite{charikar2002finding}]
    For \( x \in [n] \) and any \( \beta, \eta \in (0,1) \), the estimation error of Count Sketch with \( d = \lceil \ln(1/\beta) \rceil \) and table width \( \lceil w = 3/\eta^2 \rceil \) satisfies:
    \[
        \Pr\left[ |\hat{f}(x) - f(x)| > \eta\sqrt{F_2/w} \right] \leq \beta,
    \]
    where \( F_2 = \sum_{y \in [n]} f(y)^2 \) is the second moment of the frequency vector.
    \label{lem:cs_approx}
\end{lemma}

\section{Deferred Proofs Section~\ref{sec:soss}}
\label{app:proof}

\subsection{Lemma~\ref{lem:state_eos} Components}

\lemstwo*

\begin{proof}
If the process is in $\stwo$, we have to consider the cases $x_t \in \mathcal{T}_{t-1} \cap \mathcal{T}_{t-1}\pr$, $x_t \notin \mathcal{T}_{t-1} \cup \mathcal{T}_{t-1}\pr$, $x_t \in W_{t-1}$, or $x_t\in W_{t-1}\pr$.
For $x_t \in \mathcal{T}_{t-1}\cap \mathcal{T}_{t-1}\pr$, the sets of tracked items do not change and both $C_{t-1}[x_t]$ and $C_{t-1}\pr[x_t]$ are incremented to form $C_t$ and $C_t\pr$. 
As the counts of isolated elements do not change, the process stays in $\stwo$.

If $x_t \notin \mathcal{T} \cup \mathcal{T}\pr$, there are three possibilities, which are dependent on the identities of the items in the eviction registers.
\begin{itemize}
    \item[(i)] $\arg \max_{a\in S_t} \tau_{t-1}[a] = \arg \max_{a\in S_t\pr} \tau_{t-1}\pr[a]$. 
    This case follows the same logic as the `$x_t \in \mathcal{T}_{t-1}\cap \mathcal{T}_{t-1}\pr$' instance discussed above and we stay in $\stwo$.
    \item [(ii)] $z = \arg \max_{a\in S_t} \tau_{t-1}[a]$ and $z\pr = \arg \max_{a\in S_t\pr} \tau_{t-1}\pr[a]$.
    In this instance both $z$ and $z\pr$ are replaced by $x_t$ to form the sets $\mathcal{T}_t = \mathcal{T}_{t-1}$ (condition $\sone$(a)), with $C_t[x_t] = C_t\pr[x_t]+1$ (reflecting the count difference between $z$ and $z\pr$)  (condition $\sone$(b)).
    All other counts are left unchanged and we transition to $\sone$.
    \item [(iii)] $z \neq x^* = \arg \max_{a\in S_t} \tau_{t-1}[a]$ and $z\pr = \arg \max_{a\in S_t\pr} \tau_{t-1}\pr[a]$.
    The following tracked item replacements occur 
    \begin{align*}
        \mathcal{T}_{t} &\gets (\mathcal{T}_{t-1}\setminus \{x^*\}) \cup \{x_t\}, \\
        \mathcal{T}_{t}\pr &\gets (\mathcal{T}_{t-1}\pr\setminus \{z\pr\}) \cup \{x_t\}.
    \end{align*}
    This leads to $W_t = \{z\}$ and $W_t\pr = \{x^*\}$.
    As, $x^*$ in the eviction register of $\mathcal{T}_{t-1}$ implies that $x^*$ has the minimum count in $\mathcal{T}_{t-1}\pr\setminus \{z\pr\}$,
    $x_t$ is added to $\mathcal{T}_t\pr$ with count $C_{t-1}\pr[z\pr]+1 = C_{t-1}[z]\geq C_{t-1}[x^*]$, and $x^*$ obtains the minimal count in $\mathcal{T}_t\pr$.
    The resulting state depends on on the difference between $C_{t-1}[x^*]$ and $C_{t-1}\pr[z\pr]$.
    First, if $C_{t-1}[x^*] = C_{t-1}\pr[z\pr]$, then
    \[
        C_t\pr[x^*] = C_{t-1}[x^*] = C_{t-1}\pr[z\pr] = C_{t-1}[z] -1  = C_t[z] -1.
    \]
    The counts of the items in the intersection remain unchanged and we remain in $\stwo$.
    Second, if $C_{t-1}[x^*] \neq C_{t-1}\pr[z\pr]$, 
    the condition on $C_{t-1}[z]$ implies that $C_{t-1}[z] =C_{t-1}[x^*] = C_{t-1}\pr[z\pr]+1$.
    Therefore, we observe 
    \[
        C_t[x_t] = C_{t-1}[x^*]+1 = C_{t-1}\pr[z\pr] +2 = C_t\pr[x_t] + 1 \quad (\text{condition}\sthree\text{(b)})  
    \]
    and 
    \[
        C_t[z] = C_{t-1}[z] =C_{t-1}[x^*] = C_t\pr[x^*] \quad (\text{condition}\sthree\text{(c)}).
    \]
    Therefore, the process transitions to $\sthree$
\end{itemize}
This concludes the case $x_t \notin \mathcal{T}_{t-1}\cup \mathcal{T}_{t-1}\pr$.

When $x_t \in W_{t-1}$, $C_{t-1}[x_t]$ is incremented  to form $C_t$ and $x_t$ is added to $\mathcal{T}_t\pr$.
If $x_t$ replaces the item $z\pr \in W_{t-1}\pr$ in $\mathcal{T}_t\pr$,
then $\mathcal{T}_t = \mathcal{T}_t\pr$, with $C[x_t] = C\pr[x_t]+1$, and we transition to $\sone$.
Else, if $x_t$ replaces some $x^* \notin W_{t-1}\pr$ then we observe $W_t = \{x^*\}$ and $W_t\pr = \{z\pr\}$.
Note that we cannot have $C_{t-1}[x_t] = C_{t-1}\pr[x^*]$ as we have both $C_{t-1}[x_t] = C_{t-1}[z\pr]+1$ and $C_{t-1}[z\pr]\geq C_{t-1}\pr[x^*]$.
Therefore, 
\[
    C_t[x_t] = C_{t-1}[x_t] + 1 = C_{t-1}\pr[x^*] + 2 = C_t\pr[x_t] +1\quad (\text{condition}\sthree\text{(b)})  
\]
and 
\[
    C_t[x^*] =C_{t-1}[x^*] = C_{t-1}\pr[x^*] = C_{t-1}\pr[z\pr] = C_{t}\pr[z\pr] \quad (\text{condition}\sthree\text{(c)}).
\]
Therefore, the process transitions to $\sthree$

When $x_t \in W_{t-1}\pr$, $C_{t-1}\pr[x_t]$ is incremented to form $C_t\pr$ and $x_t$ is added to $\mathcal{T}_t$. 
The resulting state depends on the rank of $z \in W_{t-1}$ in $\mathcal{T}_{t-1}$.
If $z = \arg \max_{a \in S_{t}} \tau_{t-1}[a]$, then $x_t = z\pr$ replaces $z$ in $\mathcal{T}_{t-1}$ to form $\mathcal{T}_t$ and we transition to $\sone$.
Otherwise, $x_t = z\pr$ replaces $x^*$ in $\mathcal{T}_{t-1}$ to form $\mathcal{T}_t$.
This leads to the change $x^* \in W_t\pr$.
With a now familiar argument, 
$x^*$ in the eviction register of $\mathcal{T}_{t-1}$ implies that $x^*$ has the minimum count in $\mathcal{T}_{t-1}\pr \setminus\{z\pr\}$. 
Thus, $x^*$ obtains the minimum value in $\mathcal{T}_t\pr$.
The resulting state depends on the relative values of $C_{t-1}[x^*]$ and $C_{t-1}[z]$.
If $C_{t-1}[x^*] = C_{t-1}[z] = C_{t-1}\pr[x_t]+1$, then 
\[
    C_t[z] = C_{t-1}[z] = C_{t-1}[x^*] =  C_t\pr[x^*].
\]
and 
\[
    C_t[x_t] = C_{t-1}[x^*]+1 = C_{t-1}\pr[x_t]+2 =  C_{t-1}\pr[x_t]+1,
\]
And we transition to $\sthree$.
Following the same line of logic, if $C_{t-1}[x^*] \neq C_{t-1}[z]$, we stay in $\stwo$. 

This concludes the possible state transitions from $\stwo$.
\end{proof}

\lemsthree*
\begin{proof}
    If the process is in $\sthree$, we have to consider the cases $x_t \in \mathcal{T}_{t-1} \cap \mathcal{T}_{t-1}\pr$, $x_t \notin \mathcal{T}_{t-1} \cup \mathcal{T}_{t-1}\pr$, $x_t \in W_{t-1}$, or $x_t\in W_{t-1}\pr$.
    Similar to prior cases, the occurrence $x_t \in \mathcal{T}_{t-1} \cap \mathcal{T}_{t-1}\pr$ does not change the state and we remain in $\sthree$.

    If $x_t \notin \mathcal{T} \cup \mathcal{T}\pr$, there are four possibilities, which are dependent on the identities of the items in the eviction registers.
\begin{itemize}
    \item[(i)] $\arg \max_{a\in S_t} \tau_{t-1}[a] = \arg \max_{a\in S_t\pr} \tau_{t-1}\pr[a]$. 
    This case follows the same logic as the `$x_t \in \mathcal{T}_{t-1}\cap \mathcal{T}_{t-1}\pr$' instance discussed above and we stay in $\sthree$.
    \item [(ii)] $z = \arg \max_{a\in S_t} \tau_{t-1}[a]$ and $z\pr = \arg \max_{a\in S_t\pr} \tau_{t-1}\pr[a]$.
    In this instance both $z$ and $z\pr$ are replaced by $x_t$ to form the sets $\mathcal{T}_t = \mathcal{T}_{t-1}$ (condition $\sone$(a)), with $C_t[x_t] = C_t\pr[x_t]$.
    All other counts are left unchanged and we transition to $\sone$.
    \item [(iii)] $z \neq x^* = \arg \max_{a\in S_t} \tau_{t-1}[a]$ and $z\pr = \arg \max_{a\in S_t\pr} \tau_{t-1}\pr[a]$ (the case $z = \arg \max_{a\in S_t} \tau_{t-1}[a]$ and $z\pr \neq \arg \max_{a\in S_t\pr} \tau_{t-1}\pr[a]$ is symmetrical).
    The following tracked item replacements occur 
    \begin{align*}
        \mathcal{T}_{t} &\gets (\mathcal{T}_{t-1}\setminus \{x^*\}) \cup \{x_t\}, \\
        \mathcal{T}_{t}\pr &\gets (\mathcal{T}_{t-1}\pr\setminus \{z\pr\}) \cup \{x_t\},
    \end{align*}
    leading to $W_t = \{z\}$ and $W_t\pr = \{x^*\}$.
    As $C_{t-1}[z] = C_{t-1}\pr[z\pr]$, we have $C_{t-1}[x^*] = C_{t-1}\pr[z\pr]$.
    Therefore,
    \[
        C_t[x_t] = C_{t-1}[x^*] + 1 = C_{t-1}\pr[z\pr] + 1 = C_t\pr[x_t]
    \]
    and
    \[
        C_t\pr[x^*] = C_{t-1}[x^*] = C_{t-1}[z] = C_t[z].
    \] 
    Note that
    \[
        C_t\pr[x^*] = \arg \min_{a \in \mathcal{T}_{t-1}} = \arg \min_{a \in \mathcal{T}_{t-1}\pr}   C_t\pr[a]\leq \arg \min_{a \in \mathcal{T}_t} C_t\pr[a].
    \]
    Therefore, with all other counts unchanged, the process stays in $\sthree$.
    \item [(iv)] $z \neq \arg \max_{a\in S_t} \tau_{t-1}[a] \neq \arg \max_{a\in S_t\pr} \tau_{t-1}\pr[a] \neq z\pr$.
    Here, we have items $y, y\pr \in \mathcal{T}\cap \mathcal{T}\pr$, with $y\neq y\pr$, in the eviction register. 
    By Lemma~\ref{lem:tau_implication}, this can only happen if $y\pr$ is the item where $C_{t-1}[y\pr] = C_{t-1}\pr[y\pr] + 1$.
    The following tracked item replacements occur 
    \begin{align*}
        \mathcal{T}_{t} &\gets (\mathcal{T}_{t-1}\setminus \{y\}) \cup \{x_t\}, \\
        \mathcal{T}_{t}\pr &\gets (\mathcal{T}_{t-1}\pr\setminus \{y\pr\}) \cup \{x_t\}.
    \end{align*}
    This leads to $W_t = \{z, y\pr\}$ and $W_t\pr = \{z\pr, y\}$. 
    As, $C_{t-1}[y]= C_{t-1}\pr[y\pr]$ (condition $\sthree$(c) at time $t-1$ states that $\mathcal{T}_{t-1}$ and $\mathcal{T}_{t-1}\pr$ have the same minimum count), we obtain $C_t[x_t]=C_t\pr[x_t]$.
    In addition, $\forall a \in \mathcal{T}_{t-1}\cap \mathcal{T}_{t-1}\setminus \{y\pr\}$ we have $C_{t-1}[a]=C_{t-1}\pr[a]$.
    Therefore, 
    \[
        \forall a \in \mathcal{T}_t \cap \mathcal{T}_t\pr = \mathcal{T}_{t-1}\cap \mathcal{T}_{t-1}\setminus \{y\pr\} \cup \{x_t\}: C_t[a] = C_t\pr[a], 
    \]
    Satisfying condition $\sfour$(b).
    In addition,
    \begin{align*}
        C_t[z] &= C_{t-1}[z] = C_{t-1}[y]= C_{t-1}\pr[y] = C_t\pr[y]  \\
        C_t\pr[z\pr] &= C_{t-1}\pr[z\pr] = C_{t-1}\pr[y\pr] = C_{t-1}[y\pr] - 1 = C_t[y\pr] -1.
    \end{align*}
    This is exactly condition $\sfour$(c).
    Lastly, as 
    \[
        C_t\pr[x_t] = C_{t-1}\pr[y\pr]+1 = C_{t-1}[y] +1 = C_t\pr[y] + 1,
    \]
    $x_t$ does not appear in the eviction register of $\mathcal{T}_t\pr$.
    In addition, after the arrival of $x_t$,
    \[
        y = \arg \max_{a \in S_t} \tau_{t-1}[a] \implies y = \arg \max_{a \in S_{t+1}\pr\setminus\{z\pr\}} \tau_t\pr[a].
    \]
    Therefore either $y$ or $z\pr$ is in the eviction register of $\mathcal{T}_t\pr$ (condition $\sfour$(d)), and a transition to $\sfour$ occurs.
\end{itemize}
This concludes the case $x_t \notin \mathcal{T}_{t-1}\cup \mathcal{T}_{t-1}\pr$.

When $x_t \in W_{t-1}$, $C_{t-1}[x_t]$ is incremented  to form $C_t$ and $x_t$ is added to $\mathcal{T}_t\pr$.
If $x_t$ replaces the item $z\pr \in W_{t-1}\pr$ in $\mathcal{T}_t\pr$,
then $\mathcal{T}_t = \mathcal{T}_t\pr$, with $C_t[x_t] = C_t\pr[x_t]$, and we transition to $\sone$.
Else, if $x_t$ replaces some $x^* \notin W_{t-1}\pr$, then we observe $W_t = \{x^*\}$ and $W_t\pr = \{z\pr\}$.
As $C_{t-1}[x_t] = C_{t-1}\pr[x^*]$,
we observe $ C_t[x_t] = C_t\pr[x_t]$.
Lastly, 
\begin{align*}
     C_t[x^*] =C_{t-1}[x^*] = C_{t-1}\pr[x^*] = C_{t-1}\pr[z\pr] &= C_{t}\pr[z\pr] \\
     &= \min_{a \in \mathcal{T}_{t-1}\pr} C_{t-1}\pr[a] \\
     &\leq  \min_{a \in \mathcal{T}_{t}\pr} C_{t}\pr[a].
\end{align*}
As the remaining counts are untouched,
the process remains in $\sthree$.

By the symmetry of $C_{t-1}[z] = C_{t-1}\pr[z\pr] = \min_{a \in \mathcal{T}_{t-1}\pr} C_{t-1}\pr[a]$, the case $x_t \in W_{t-1}\pr$ leads to the same state transitions as $x_t \in W_{t-1}$.

This concludes the possible state transitions from $\sthree$.

\end{proof}

\lemsfour*
\begin{proof}
    If the process is in $\sfour$, we have to consider the cases $x_t \in \mathcal{T}_{t-1} \cap \mathcal{T}_{t-1}\pr$, $x_t \notin \mathcal{T}_{t-1} \cup \mathcal{T}_{t-1}\pr$, $x_t \in W_{t-1}$, or $x_t\in W_{t-1}\pr$.
    Similar to prior cases, the occurrence $x_t \in \mathcal{T}_{t-1} \cap \mathcal{T}_{t-1}\pr$ does not change the state and we remain in $\sfour$.

    When $x_t \notin \mathcal{T}_{t-1} \cup \mathcal{T}_{t-1}\pr$, there are two possibilities to consider based on the identities of the items in the eviction registers.
    Note that $\arg \max_{a\in S_t} \tau_{t-1}[a] = \arg \max_{a\in S_t\pr} \tau_{t-1}\pr[a]$ is not possible as, by condition $\sfour$(d) at time $t-1$, the eviction register in $\mathcal{T}_{t-1}\pr$ is occupied by a member of $W_{t-1}\pr$.
    \begin{itemize}
        \item [(i)] $z_2 = \arg \max_{a\in S_t} \tau_{t-1}[a]$ and $z_2\pr = \arg \max_{a\in S_t\pr} \tau_{t-1}\pr[a]$.
        As $C_{t-1}[z_2] = C_{t-1}[z_2\pr]$, we observe the update $C_{t}[x_t]=C_t\pr[x_t]$ and $W_t=\{z_1\}$ and $W_t\pr = \{z_1\pr\}$.
        By condition $\sfour$(c) at time $t-1$, 
        \[
            C_t[z_1] = C_{t-1}[z_1] = C_{t-1}[z_2] + 1 = C_t[z_2]. 
        \]
        Therefore, with all other counters unchanged, we transition to $\stwo$.
        \item[(ii)] $z_2 \neq x^* =\arg \max_{a\in S_t} \tau_{t-1}[a]$ and $z_2\pr = \arg \max_{a\in S_t\pr} \tau_{t-1}\pr[a]$.
        The following tracked item replacements occur 
        \begin{align*}
            \mathcal{T}_{t} &\gets (\mathcal{T}_{t-1}\setminus \{x^*\}) \cup \{x_t\}, \\
            \mathcal{T}_{t}\pr &\gets (\mathcal{T}_{t-1}\pr\setminus \{z_2\pr\}) \cup \{x_t\},
        \end{align*}
        leading to $W_t = \{z_1, z_2\}$ and $W_t\pr = \{z_1\pr,x^*\}$.
        As both $\mathcal{T}_{t-1}$ and $\mathcal{T}_{t-1}\pr$ have the same minimum count, 
        \[
            C_t[x_t] = C_{t-1}[x^*] + 1 = C_{t-1}\pr[z_2\pr] + 1 = C_t\pr[x_t]
        \]
        and
        \[
            C_t\pr[x^*] = C_{t-1}[x^*] = \min_{a \in \mathcal{T}_{t-1}} C_{t-1}[a] = C_{t-1}[z_2] = C_t[z_2].
        \] 
        Note that the counts of $z_1, z_2$ and $z_1\pr$ are unchanged.  
        Further, by Lemma~\ref{lem:tau_implication},
        \[
            x^* =\arg \max_{a\in S_t} \tau_{t-1}[a] \implies x^* =\arg \max_{a\in S_{t+1}\pr\setminus\{z_2\pr\}} \tau_{t}\pr[a],
        \]
        which implies either $x^*$ or $z_2\pr$ occupy the eviction register of $\mathcal{T}_t\pr$.
        Therefore, the process stays in $\sfour$. 
    \end{itemize}
    This concludes the case $x_t \notin \mathcal{T}_{t-1}\cup \mathcal{T}_{t-1}\pr$.

    If $x_t \in W_t$, $C_{t-1}[x_t]$ is incremented  to form $C_t$ and $x_t$ is added to $\mathcal{T}_t\pr$.
    If $x_t$ replaces the item $z_1\pr \in W_{t-1}\pr$ in $\mathcal{T}_t\pr$ (the case is identical for $z_2\pr$), then the sets of isolated elements reduce in size:
    \[
        W_t = W_{t-1}\setminus \{x_t\}; \quad W_t\pr = W_{t-1}\pr \setminus \{ z_1\}.
    \]
    The resulting state depends on the identity of $x_t \in W_{t-1} = \{z_1, z_2$, with $C_{t-1}[z_1] = C_{t-1}[z_2]+1$.
    If $x_t = z_1$, we observe 
    \[ 
        C_t[x_t] = C_t[z_1] = C_{t-1} [z_1] + 1 = C_{t-1}[z_2] + 2 = C_{t-1}\pr[z_1\pr] +2 = C_t\pr[x_t] + 1.
    \]
    With all other counts unchanged, this results in a transition to $\sthree$.
    Otherwise, if $x_t = z_2$, we observe
    \[
        C_t[x_t] = C_t[z_2] = C_{t-1} [z_2] + 1 = C_{t-1}\pr[z_1\pr] + 1  = C_t\pr[x_t].
    \]
    As $C_t[z_1] = C_t\pr[z_2\pr]$, the process transitions to $\stwo$.
    
    If $x_t = z_1\pr\in W_{t-1}\pr$ (the case $x_t = z_2\pr$ is identical), $C_{t-1}\pr[x_t]$ is incremented  to form $C_t\pr$ and $x_t$ is added to $\mathcal{T}_{t-1}$ to form $\mathcal{T}_t$.
    If $x_t$ replaces $z_2 \in W_{t-1}$ in $\mathcal{T}_{t-1}$, we observe $W_t = {z_1}$ and $W_{t}\pr = \{ z_2\pr \}$, with $C_t[z_1] = C_t\pr[z_2\pr] +1$ and $C_t[x_t] = C_t\pr[x_2]$, and the process transitions to $\stwo$.
    Otherwise, if $x_t$ replaces some $x^* \notin W_{t-1}$, we observe $W_t = {z_1, z_2}$ and $W_{t}\pr = \{ z_2\pr, x^* \}$ with
    \[
        C_t[z_1] = C_t[z_2] + 1 = C_{t-1}[x^*] + 1 = C_t\pr[x^*] +1 = C_t\pr[z_2\pr] + 1 . 
    \]
    As $x^*$ was in the eviction register of $\mathcal{T}_{t-1}$, after the arrival of $x_t$, either $x^*$ or $Z_2\pr$ are in the eviction register of $\mathcal{T}_t\pr$.
    Thus, we remain in $\sfour$.

    This concludes the possible state transitions from $\sfour$.
\end{proof}

\subsection{Performance and Utility}
\label{app:sub_utility}

\thmutility*
\begin{proof}
    Set $\tilde{k}=2k$ and run $\spacesaving$ with capacity $\tilde{k}$.
    This uses $\mathcal{O}(\tilde{k})=\mathcal{O}(k)$ words and requires $\mathcal{O}(1)$ update time using a standard implementation \cite{cormode2008finding}.
    Let $C$ be the terminal (non-private) counters and add independent noise $\eta_x\sim\laplacedist(1/\varepsilon)$, for each item $x\in \mathcal{T}$.
    Release labels with $C[x]+\eta_x>\tau$. 
    As
    \[
    \frac{|X|}{k} - \gamma = 2\cdot \frac{|X|}{2k} - \gamma > \frac{|X|}{2k} + \gamma +1,
    \]
    where the inequality follows by assumption, it follows that $\tau = \tfrac{|X|}{k}-\gamma$.
    By the error guarantee of $\spacesaving$ (Lemma~\ref{lem:ss_bound}),
    \[
    f_x \;\le\; C[x] \;\le\; f_x + \frac{|X|}{\tilde{k}} = f_x + \frac{|X|}{2k},
    \]
    for all $x \in \mathcal{T}$.
    For $\eta_x\sim \mathrm{Laplace}(1/\varepsilon)$ we have
    $\Pr[|\eta_x|>t]=e^{-\varepsilon t}$.
    Taking $t=\tfrac{1}{\varepsilon}\ln\left(\tfrac{1}{\delta}\right)$ gives, with probability $\ge 1-\delta$,
    \[
    f_x - \frac{1}{\varepsilon}\ln\!\Big(\frac{1}{\delta}\Big)
    \;\le\; C[x]+\eta_x
    \;\le\; f_x + \frac{|X|}{2k} + \frac{1}{\varepsilon}\ln\!\Big(\frac{1}{\delta}\Big),
    \]
    which matches the stated bound.

    If $f_y>|X|/k$, then $C[y]\ge f_y > |X|/k$, so, with probability $1-\delta$,
    \[
    C[y]-\tau \;>\; \frac{|X|}{k}-\Big(\frac{|X|}{k}-\gamma\Big)
    =\gamma \;>\;0
    \]
    by the premise $|X|/(2k)>\gamma$.
    Therefore, $y$ fails the threshold only if $\eta_y<-\gamma$, which occurs with probability $\le \tfrac{1}{2}e^{-\varepsilon\gamma}< \delta$.
\end{proof}

\section{Deferred Proofs Section~\ref{sec:lin_sketch}}
\label{app:proof_2}
\privfreqoracle*

\begin{proof}
    For any fixed set of input parameters $(k, \tilde{k})$ define
    \[
        \mathcal{A}(X) = \mathsf{EEHH}(X, k, \tilde{k}, \varepsilon, \delta).
    \]
    Let $\mathcal{T}$ and $\mathcal{T}\pr$ be the outputs of Algorithm~\ref{alg:topk} on, respectively, inputs $X$ and $X\pr$.
    The outputs from Algorithm~\ref{alg:EEHH} are defined analogously. 
    We prove that for any pair of neighboring inputs $X,X'$ (where $X'$ is obtained by removing a single element from $X$), and any measurable subset $S$ of outputs,
    \[
    \Pr[\,\mathcal{A}(X)\in S\,] \;\le\; e^{\varepsilon}\,\Pr[\,\mathcal{A}(X')\in S\,] \;+\; \delta.
    \]
    
    Define the ``stable envelope'' event $\xi$ as the event that no frequency query exceeds the error envelope during the execution of the algorithm.
    That is, for
    \begin{align*}
        E_1 \;&:=\;
    \bigl[\,
        \gamma_1(t)\ \ge\ \tilde{f}_t(x_t)-f_t(x_t)\ \ge\ \gamma_2(t) , \forall x_t \in X    
    \,\bigr] \\
        E_1\pr \;&:=\;
    \bigl[\,
        \gamma_1(t)\ \ge\ \tilde{f}_t(x_t\pr)-f_t(x_t\pr)\ \ge\ \gamma_2(t),  \forall x_t\pr \in X\pr    
    \,\bigr] \\
    E_2 \;&:=\;
    \bigl[\,
        \gamma_1(T)\ \ge\ \tilde{f}_T(x)-f_T(x)\ \ge\ \gamma_2(T),  \forall x \in \mathcal{T}    
    \,\bigr] \\
    E_2\pr \;&:=\;
    \bigl[\,
        \gamma_1(T)\ \ge\ \tilde{f}_T(x\pr)-f_T(x\pr)\ \ge\ \gamma_2(T),  \forall x\pr \in \mathcal{T}\pr    
    \,\bigr]
    \end{align*}
    $\xi = E_1\land E_1\pr \land E_2\land E_2\pr$.
    There are at most $T+\tilde{k}$ frequency queries during algorithm execution.
    Thus, using a union bound across both streams, by assumption, $\Pr[\xi]\geq 1-\delta$.

    The threshold is set at $\tau > T/\tilde{k} + 2\gamma_1(T) + \gamma_2(T)$.
    Each estimate during post-processing has error at most $\gamma_1(T)$.
    By Lemma~\ref{lem:suppress}, an item in the symmetric difference has error at most
    \[
      \hat{f}_T(x) \leq f_T(x) + \gamma_1(T) \leq \frac{t}{\tilde{k}} + \gamma_2(T) + 2\gamma_1(T) \leq \tau,  
    \]
    where $t\leq T$ is the last time it was removed from the tracked set in the opposing stream. 
    Therefore, conditioned on event $\xi$, the released index sets $\hat{\mathcal{T}}$ and $\hat{\mathcal{T}}\pr$ are contained in the intersection $\mathcal{T}\cap\mathcal{T}\pr$.
    Let $m := |\mathcal{T}\cap \mathcal{T}\pr|\ge \tilde{k}-2$, and fix an arbitrary but common ordering of the labels in $\mathcal{T}\cap \mathcal{T}\pr$.
    Define the vector-valued function
    \(
    F(X) \in \mathbb{R}^m
    \)
    as the ordered list of frequency estimates.
    \(
    F(X) := (\, \mathcal{D}.\mathsf{query}(x) \,)_{x\in\mathcal{T}\cap\mathcal{T}\pr},
    \)
    and similarly $F(X\pr) := (\, \mathcal{D}\pr.\mathsf{query}(x) \,)_{x\in\mathcal{T}\cap\mathcal{T}'}$.
    By assumption, for all measurable $A\subseteq \mathbb{R}^m$
    \begin{align}
        \Pr[F(X) \in A] &\leq e^{\varepsilon} \Pr [F(X\pr) \in A].
        \label{eqn:dp-assumption}
    \end{align}
    
    Now define the (deterministic) post-processing map $\Phi$
    that keeps precisely those coordinates of $F(X)$ whose value exceeds $\tau$, together with their labels.
    Conditioned on $\xi$, 
    by the post-processing property of differential privacy, applying $\Phi$ preserves Inequality~\eqref{eq:laplace-mech}.
    That is, for all measurable $S,$
    \begin{equation}\label{eq:cond-dp1}
        \Pr[\, \Phi(F(X) ) \in S \mid \xi \,]
        \;\le\;
        e^{\varepsilon}\,
        \Pr[\, \Phi(F(X\pr)) \in S \mid \xi \,].
    \end{equation}

    Putting everything together,
    using~\eqref{eq:cond-dp1}, $\Pr[E]\le 1$, and $\Pr[\neg E]\le \delta$, we obtain
    \begin{align*}
    \Pr[\,\mathcal{A}(X)\in S\,] 
    &= 
    \Pr[\,\mathcal{A}(X)\in S\, \mid \xi] \Pr[\xi] \\
    &+
    \Pr[\,\mathcal{A}(X)\in S\, \mid \neg \xi] \Pr[\neg \xi] \\
    &\le
    e^{\varepsilon}\,\Pr[\,\mathcal{A}(X')\in S \mid \xi\,]\,\Pr[\xi] \\
    &+
    \Pr[\,\mathcal{A}(X)\in S \mid \neg \xi\,]\,\Pr[\neg \xi]
    \\
    &\le
    e^{\varepsilon}\,\Pr[\,\mathcal{A}(X')\in S\, \mid \xi] \Pr[\xi]
    \;+\;
    \Pr[\neg E]
    \\
    &\le
    e^{\varepsilon}\,\Pr[\,\mathcal{A}(X')\in S]
    \;+\;
    \Pr[\neg E]
    \\
    &\le
    e^{\varepsilon}\,\Pr[\,\mathcal{A}(X')\in S\,]
    \;+\;
    \delta,
\end{align*}
as required.
\end{proof}

\end{document}